\newcommand{\xmark}{\ding{55}}%
\newcolumntype{M}[1]{>{\centering\arraybackslash}m{#1}}
\newtheorem{example}{Example}
\newtheorem{proposition}{Proposition}
\newtheorem{definition}{Definition}
\newtheorem{theorem}{Theorem}
\newtheorem{corollary}{Corollary}
\newenvironment{proof}{\noindent\textit{Proof~~}}
{\nolinebreak[4]\hfill$\blacksquare$\\\par}
\title{Nash equilibria in four-strategy quantum game extensions of the Prisoner's Dilemma}
\author{
  Piotr Frąckiewicz\\
  \textit{Institute of Exact and Technical Sciences, Pomeranian University in Słupsk, Poland} \\
  piotr.frackiewicz@upsl.edu.pl
  \and
  Anna Gorczyca-Goraj\\
  \textit{Department of Operations Research,
        University of Economics in Katowice, Poland} \\
        anna.gorczyca-goraj@uekat.pl
    \and
Krzysztof Grzanka\\
  \textit{Department of Operations Research,
        University of Economics in Katowice, Poland} \\
        krzysztof.grzanka@uekat.pl
    \and
Katarzyna Nowakowska\\
  \textit{Institute of Exact and Technical Sciences, Pomeranian University in Słupsk, Poland} \\
  katarzyna.nowakowska@upsl.edu.pl
    \and 
  Marek Szopa\\
  \textit{Department of Operations Research,
        University of Economics in Katowice, Poland} \\
        marek.szopa@uekat.pl
}
\date{\today}
\begin{document}
\maketitle

\begin{abstract}
 This paper investigates Nash equilibria in pure strategies for quantum approach to the Prisoner’s Dilemma. The quantization process involves extending the classical game by introducing two additional unitary strategies. We consider five classes of such quantum games, which remain invariant under isomorphic transformations of the classical game. For each class, we identify and analyze all possible Nash equilibria. Our results reveal the complexity and diversity of strategic behavior in the quantum setting, providing new insights into the dynamics of classical decision-making dilemmas. In the case of the standard Prisoner's Dilemma, the resulting Nash equilibria of quantum extensions are found to be closer to Pareto optimal solutions than those of the classical equilibrium.
 \vspace{0.5cm}

    \noindent\textbf{Keywords:} game isomorphism, Eisert-Wilkens-Lewenstein scheme, quantum extended games, Nash equilibrium, Prisoner's Dilemma
\end{abstract}
\section{Introduction}
A principal objective of quantum game theory is to establish a methodology for transforming classical game theory problems into a quantum mechanical framework \cite{eisert_quantum_1999,flitney_introduction_2002,meyer_quantum_1999,li_continuous-variable_2002}. Subsequently, the characteristics of the resulting game are analysed using techniques from classical game theory  \cite{frackiewicz_quantum_2021,fadaki_quantum_2022,iqbal_evolutionarily_2001,aoki_repeated_2020}, or the quantum game is examined in terms of concepts from quantum computing \cite{chen_quantum_2003,benjamin_thermodynamic_2020,altintas_prisoners_2022,iqbal_equivalence_2018}. Similar to classical game theory, the fundamental problem explored in quantum game theory is the problem of finding rational strategy profiles and answering the question of whether the quantum extension of the game affects the final outcome \cite{flitney_nash_2007,frackiewicz_remarks_2016,szopa_efficiency_2021,landsburg_nash_2011,bolonek-lason_mixed_2017}. In this context, the Nash equilibrium (NE) is a widely used solution concept \cite{nash_equilibrium_1950} that is considered a necessary condition for a given strategy profile to be considered rational. A NE is defined as a strategy profile where no player can improve their payoff by changing their own strategy, assuming that all other players' strategies remain unchanged. This solution concept is applicable in both classical and quantum games due to the way it is formulated.

The Prisoner's Dilemma (PD) is a classical problem in game theory, illustrating the conflict between individual rationality and collective welfare \cite{axelrod_evolution_2006}. Traditionally, the PD game is defined for two players, each having two strategies: cooperate or defect. The standard PD game has a single NE where both players choose to defect, leading to a suboptimal outcome for both. However, the introduction of quantum strategies offers new possibilities for altering this equilibrium structure, potentially allowing for outcomes that are more beneficial to all players involved \cite{eisert_quantum_1999,li_entanglement_2015,szopa_how_2014}.

In our previous works, we explored the quantum extensions of classical games using the Eisert-Wilkens-Lewenstein (EWL) scheme, which introduces additional unitary strategies alongside the classical strategies \cite{frackiewicz_permissible_2024-1,frackiewicz_permissible_2024}. The quantum extensions have been classified into a number of distinct categories, based on the characteristics of the admissible quantum strategies that preserve invariance with respect to isomorphic transformations of the classical game. Our focus was primarily on identifying the conditions under which these quantum games preserve the structural characteristics of the original game while extending the strategic landscape available to the players.

The main goal of the present paper is to examine these extensions in more detail by identifying every NE in the pure strategy profiles of the quantum-enhanced PD. By examining the PD in its most generalized form, considering any admissible set of payoffs, the study seeks to understand under what conditions NE can be achieved for each class of extension. Specifically, the work investigates the constraints that the payoff matrix must satisfy for a given pure strategy profile to be considered a NE across all identified classes of quantum extensions. 

Through this comprehensive analysis, we provide a detailed characterization of strategy profiles and their corresponding NE, thereby extending the understanding of quantum strategies' impact on traditional game-theoretical problems. The paper demonstrates that the NE obtained are more closely aligned with Pareto-optimal solutions than the classical PD. Nevertheless, the class of extensions under consideration does not encompass fully cooperative equilibria. One example of this is the so-called "magic strategy" \cite{eisert_quantum_1999}, which, however, does not satisfy the condition of independence from isomorphic transformations of the classical game \cite{frackiewicz_permissible_2024-1}, which is a necessary condition for the extension to be unambiguous. This contribution not only enhances the theoretical framework of quantum game theory but also has potential applications in fields such as quantum computing and strategic decision-making \cite{fadaki_quantum_2022,huang_fight_2024}, where understanding complex interactive dynamics is crucial.

The work is divided into 5 parts. In the second section we briefly define the key concepts of PD, NE, the EWL quantum game scheme, and prove that positive affine transformations of the payoffs of the classical game do not affect the preference relations of the quantum game. In the third section, we recall five classes of quantum extensions of the classical $2\times 2$ game. In these extensions, quantum players have two additional unitary strategies in addition to their initial classical strategies. The extensions are invariant to isomorphic transformations of the classical game \cite{frackiewicz_permissible_2024}. Furthermore, we demonstrate the symmetry of quantum extensions of the symmetric game. In the fifth section, which is divided into five sub-sections, we analyse the existence of NE of successive classes of extensions. To do so, we examine each of 16 possible profiles of pure strategies. Where equilibria exist, we give the conditions that must be satisfied by the parameters of quantum strategies and PD payoffs. 

\section{Preliminaries}
The games we focus on in our research, including both classical and quantum games, are classified as strategic form games. Formally, a game in strategic (normal) form is defined as follows \cite{maschler_game_2020}:
\begin{definition}
A game in strategic form is a triple $\mathcal{G} = (N, (S_{i})_{i\in N}, (u_{i})_{i\in N})$ in which 
\begin{enumerate}
\item[(i)] $N = \{1,2, \dots, p\}$ is a finite set of players;
\item[(ii)] $S_{i}$ is the set of strategies of player $i$, for each player $i\in N$;
\item[(iii)] $u_{i}\colon S_{1}\times S_{2} \times \cdots \times S_{p} \to \mathds{R}$ is a function that relates each vector of strategies $s = (s_{i})_{i\in N}$ to the payoff $u_{i}(s)$ of the player $i$, for each player $i\in N$.
\end{enumerate}
\end{definition}
A strategic form finite game involving two players can be represented by a bimatrix:
\begin{equation}\label{generalbimatrix}
\Delta = 
\begin{pmatrix}
(\Delta^1_{11}, \Delta^2_{11}) & (\Delta^1_{12}, \Delta^2_{12}) & \cdots &(\Delta^1_{1m}, \Delta^2_{1m}) \\ 
(\Delta^1_{21}, \Delta^2_{21}) & (\Delta^1_{22}, \Delta^2_{22}) & \cdots &(\Delta^1_{2m}, \Delta^2_{2m}) \\ 
\vdots & \vdots & \ddots & \vdots \\ 
(\Delta^1_{n1}, \Delta^2_{n1}) & (\Delta^1_{n2}, \Delta^2_{n2}) & \cdots &(\Delta^1_{nm}, \Delta^2_{nm})
\end{pmatrix} = (\Delta^1, \Delta^2).
\end{equation}
The interpretation of such a notation is that player 1 (the row player) chooses row $i\in S_{1}$ from his set of strategies $S_{1} = \{1, \dots, n\}$, and player 2 (the column player) chooses column $j\in S_{2}$ from her set $S_{2} = \{1, \dots, m\}$. The combination of player 1 using strategy $i$ and player 2 using strategy $j$ will be represented as the ordered pair $(i,j)$ and referred to as a strategy profile. As the result of the game, player 1 receives payoff $u_1(i,j)=\Delta^1_{ij}$ and player 2 receives $u_2(i,j)=\Delta^2_{ij}$. Taking into account the elements that define a game in strategic form, we can identify the payoff function of (\ref{generalbimatrix}) as matrices $\Delta^1=(\Delta^1_{ij})$ and $\Delta^2=(\Delta^2_{ij})$ and denote the game (\ref{generalbimatrix}) as $(\Delta^1,\Delta^2)$. 
Among the games (\ref{generalbimatrix}), we can distinguish those that have certain special characteristics. An example of such games are the symmetric games \cite{van_damme_stability_1987}.
\begin{definition}\label{symdefinition}
Let $\mathcal{G} = (N, (S_{1}, S_{2}), (u_{1}, u_{2}))$ be a two-player finite strategic game. $\mathcal{G}$ is said to be symmetric if $S_{1}=S_{2}$ and $u_{1}(s_{1}, s_{2}) = u_{2}(s_{2}, s_{1})$ for all $s_{1} \in S_{1}, s_{2} \in S_{2}$.
\end{definition}
In matrix notation, the fact that a game $(\Delta^1,\Delta^2)$ is symmetric means that $\Delta^2 = (\Delta^1)^{T}$. 

One of the most well-known symmetric games is the PD. It is a two-player game that can be represented by a $2\times 2$ bimatrix in the form of
\begin{equation}\label{PDgame}
\begin{pmatrix}
         (R,R) & (S,T)\\
         (T,S) & (P,P)
        \end{pmatrix}
 ~\text{where}~ T>R>P>S ~\text{and}~ 2R>T+S.
\end{equation}
In the realm of game theory, the notion of NE plays a key role as a fundamental solution concept \cite{nash_non-cooperative_1951}. This equilibrium represents a strategy profile such that no player can gain a better payoff by deviating from her equilibrium strategy, provided that the other players' strategies remain unchanged. NE provides players with a certain level of stability within a game as in a NE, each player's strategy is a best response to the strategies of the other players. 

In the literature, we find many ways to express NE depending on the type of game being considered \cite{maschler_game_2020}, \cite{osborne_course_1994}. In what follows, we formulate NE in terms of the games we examine in this work, specifically focusing on pure NE in bimatrix games.
\begin{definition}
A strategy profile $(i^*, j^*)$ is a (pure) NE if $\Delta^1_{i^*j^*} \geq \Delta^1_{ij^*}$ for every $i \in S_{1}$ and $\Delta^2_{i^*j^*} \geq \Delta^2_{i^*j}$ for every $j\in S_{2}$.
\end{definition}
For example, it can be easily verified that the unique NE in (\ref{PDgame}) is strategy vector $(2,2)$ that gives the payoff of $P$ for each player. 

Directly from Definition~\ref{symdefinition}, it also follows that there is a symmetry of the set of NE of a two-player symmetric game: if $(s_{1}, s_{2})$ is a NE then $(s_{2}, s_{1})$ is also a NE.
Now, we review the Eisert-Wilkens-Lewenstein scheme for $2\times 2$ bimatrix games. The presented approach is based on \cite{frackiewicz_quantum_2021}.

Let us consider a basis $\{|\psi_{kl}\rangle\colon k,l \in \{1,2\}\}$ of $\mathds{C}^2\otimes \mathds{C}^2$ such that
\begin{equation}
\begin{aligned}
&|\psi_{11}\rangle = \frac{1}{\sqrt{2}}\left(|00\rangle + i|11\rangle\right), \quad |\psi_{12}\rangle = \frac{1}{\sqrt{2}}\left(i|01\rangle -|01\rangle\right),\\ &|\psi_{21}\rangle = -\frac{1}{\sqrt{2}}\left(|01\rangle -i|10\rangle\right), \quad |\psi_{22}\rangle = -\frac{1}{\sqrt{2}}\left(i|00\rangle + |11\rangle\right).
\end{aligned}
\end{equation}
The Eisert-Wilkens-Lewenstein approach to game (\ref{generalbimatrix}) for $S_{1} = S_{2} = \{1,2\}$ is defined by a triple $(\{1,2\}, \{T_{1}, T_{2}\}, \{v_{1}, v_{2}\})$ where
\begin{itemize}
\item $\{1,2\}$ is the set of players,
\item $T_{i}$ is a set of unitary operators from $\mathsf{SU}(2)$ with elements 
\begin{equation}\label{qstrategy}
U_{i}(\theta_{i}, \alpha_{i}, \beta_{i}) = \begin{pmatrix}
e^{i\alpha_{i}}\cos\frac{\theta_{i}}{2} & ie^{i\beta_{i}}\sin\frac{\theta_{i}}{2} \\ 
ie^{-i\beta_{i}}\sin\frac{\theta_{i}}{2} & e^{-i\alpha_{i}}\cos\frac{\theta_{i}}{2}
\end{pmatrix}, \quad \theta_{i} \in [0,\pi] ~\text{and}~ \alpha_{i}, \beta_{i} \in [0,2\pi),
\end{equation}
\item $v_{i}\colon T_{1} \times T_{2} \to \mathds{R}$ is the $i$-th player payoff function given by
\begin{align}
&v_{i}(U_{1}, U_{2}) = \sum_{k,l \in \{1,2\}}\Delta^i_{kl}|\langle \psi_{kl}|U_{1}\otimes U_{2}|\psi_{11}\rangle|^2, \label{v1}
\end{align}
where $\Delta^i_{kl}$ are the payoffs of the classical $2\times 2$ bimatrix game (\ref{generalbimatrix}). 
\end{itemize}
Using formula (\ref{v1}) we can determine the explicit form of the pair of players' payoffs,
\begin{align}\label{generalEWLpayoff}
    &(u_{1}, u_{2})(U_{1}(\theta_{1}, \alpha_{1}, \beta_{1}), U_{2}(\theta_{2}, \alpha_{2}, \beta_{2})) \nonumber\\ &\quad = (\Delta^1_{11}, \Delta^2_{11})\left(\cos(\alpha_{1} + \alpha_{2})\cos\frac{\theta_{1}}{2}\cos\frac{\theta_{2}}{2} + \sin(\beta_{1} + \beta_{2})\sin\frac{\theta_{1}}{2}\sin\frac{\theta_{2}}{2}\right)^2 \nonumber\\
    &\quad + (\Delta^1_{12}, \Delta^2_{12})\left(\cos(\alpha_{1} - \beta_{2})\cos\frac{\theta_{1}}{2}\sin\frac{\theta_{2}}{2} + \sin(\alpha_{2} - \beta_{1})\sin\frac{\theta_{1}}{2}\cos\frac{\theta_{2}}{2}\right)^2 \nonumber \\ 
    &\quad + (\Delta^1_{21}, \Delta^2_{21})\left(\sin(\alpha_{1} - \beta_{2})\cos\frac{\theta_{1}}{2}\sin\frac{\theta_{2}}{2} + \cos(\alpha_{2} - \beta_{1})\sin\frac{\theta_{1}}{2}\cos\frac{\theta_{2}}{2}\right)^2 \nonumber \\ 
    &\quad + (\Delta^1_{22}, \Delta^2_{22})\left(\sin(\alpha_{1} + \alpha_{2})\cos\frac{\theta_{1}}{2}\cos\frac{\theta_{2}}{2} - \cos(\beta_{1} + \beta_{2})\sin\frac{\theta_{1}}{2}\sin\frac{\theta_{2}}{2}\right)^2.
\end{align}
In our work, we will apply certain important and useful aspects of John von Neumann's utility theory. This theory provides us with a tool that allows us to classify games. The payoff functions of all games within a class determine the same preference relations of the players. As a result, for a given strategy of an opponent, the best response of the player is the same in every game within the class, and consequently, the games are equivalent with respect to NE.
\begin{definition}\cite{maschler_game_2020}
    Let $u\colon X \to \mathds{R}$ be a function. A function $v\colon X \to \mathds{R}$ is a positive affine transformation of $u$ if there exists a positive real number $\lambda >0$ and a real number $\mu$ such that 
    \begin{equation}
v(x) = \lambda u(x) + \mu, \quad \forall_{x \in X}.
    \end{equation}
\end{definition}
A special case of von Neumann's linear utility function theorem is 
\begin{theorem}
If $u_{i}$ is a payoff function representing player $i$-th preference relation then every positive affine transformation of $u_{i}$ is a payoff function representing the same preference relation.
\end{theorem}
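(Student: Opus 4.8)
The plan is to unwind the definition of ``a payoff function representing a preference relation'' and then exploit the fact that a positive affine map is strictly increasing, hence order-preserving. Recall that $u_{i}$ represents the preference relation $\succeq_{i}$ of player $i$ precisely when, for any two alternatives $x,y$ in the domain $X$, one has $x \succeq_{i} y$ if and only if $u_{i}(x) \geq u_{i}(y)$. (In the von Neumann--Morgenstern setting $X$ is a set of lotteries and $u_{i}(x)$ is read as the expected utility $E_{x}[u_{i}]$; I return to this point below.) Writing $v = \lambda u_{i} + \mu$ with $\lambda > 0$ and $\mu \in \mathds{R}$, the goal is therefore to establish the chain of equivalences $x \succeq_{i} y \iff u_{i}(x) \geq u_{i}(y) \iff v(x) \geq v(y)$, since by definition this says that $v$ represents the same $\succeq_{i}$.

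First I would prove the second equivalence, which is the arithmetic heart of the argument. Because $\lambda > 0$, multiplication by $\lambda$ preserves the direction of every weak inequality, and adding the constant $\mu$ to both sides preserves it as well; conversely, dividing by $\lambda > 0$ and subtracting $\mu$ recovers the original inequality. Hence $u_{i}(x) \geq u_{i}(y)$ holds if and only if $\lambda u_{i}(x) + \mu \geq \lambda u_{i}(y) + \mu$, i.e.\ if and only if $v(x) \geq v(y)$. Combining this with the representation property of $u_{i}$ yields $x \succeq_{i} y \iff v(x) \geq v(y)$ for every $x,y \in X$, which is exactly the assertion that $v$ is a payoff function for the same preference relation.

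The only genuine subtlety --- rather than an obstacle --- concerns the expected-utility reading needed to call this a special case of von Neumann's theorem. If $X$ is a space of lotteries and representation is understood through expectations, then the step $v(x) \geq v(y) \iff \lambda u_{i}(x) + \mu \geq \lambda u_{i}(y) + \mu$ must be justified at the level of expected utilities rather than pointwise payoffs. Here one invokes linearity of expectation, $E_{x}[\lambda u_{i} + \mu] = \lambda E_{x}[u_{i}] + \mu$, so that the affine relation between $v$ and $u_{i}$ is inherited by their expectations and the monotonicity argument above applies verbatim with $E_{x}[u_{i}]$ in place of $u_{i}(x)$. For the purposes of this paper the domain $X$ is the finite set of strategy profiles and the representation is ordinal, so the claim reduces to the pointwise statement; the expected-utility remark only records why the result is the announced special case. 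I expect no step to be hard, since the entire content is that strictly increasing affine maps are order isomorphisms of $\mathds{R}$, and the single point to state carefully is that $\lambda > 0$ is used in both directions of the equivalence.
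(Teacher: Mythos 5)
Your proof is correct. Note, however, that the paper itself gives no proof of this theorem: it is stated as a known special case of von Neumann's linear utility function theorem (with a citation to the game-theory textbook) and used without further argument, so there is no in-paper proof to diverge from. Your argument --- that a positive affine map with $\lambda > 0$ is order-preserving in both directions, so $u_{i}(x) \geq u_{i}(y) \iff \lambda u_{i}(x) + \mu \geq \lambda u_{i}(y) + \mu$ --- is the standard one, and your closing remark about linearity of expectation is exactly the mechanism the paper does spell out one result later, in its proof of the proposition that EWL preference relations are invariant under positive affine transformations: there the payoffs are expectations $\sum_{k,l} p_{kl}\Delta^{i}_{kl}$ over the distribution induced by the quantum strategies, the $\mu$ terms cancel because $\sum_{k,l} p_{kl} = \sum_{k,l} p'_{kl} = 1$, and the factor $\lambda > 0$ is pulled out front, which is precisely your monotonicity step applied at the level of expected utilities. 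So your proposal both fills the gap the paper leaves by citation and is consistent in technique with the argument the paper actually writes down.
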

Let us consider a general PD game given by (\ref{PDgame}). Let us define a positive affine transformation of the form
\begin{equation}
f(x) = \frac{1}{T-S}(x-S).
\end{equation}
This transformation permits the payoffs of the general form of PD (\ref{PDgame}) to be reduced to two parameters, $r$ and $p$, with values in the interval $[0,1]$:
\begin{align}
f(S) &= \frac{1}{T-S}(S-S) = 0, \label{f(s)}\\
f(T) &= \frac{1}{T-S}(T-S) = 1, \\
f(R) &= \frac{1}{T-S}(R-S) = r, \\
f(P) &= \frac{1}{T-S}(P-S) = p, \label{f(p)}
\end{align}
and $0<p<r<1$. As a result, we obtain a game
\begin{equation}\label{prisonerdilemma}
\begin{pmatrix}
         (f(R),f(R)) & (f(S),f(T))\\
         (f(T),f(S)) & (f(P),f(P))
        \end{pmatrix},
\end{equation}
which is equivalent to game (\ref{PDgame}) with respect to preference relations. In other words, they represenent the same problem from a game theory point of view. Taking into account (\ref{f(s)})-(\ref{f(p)}) and (\ref{prisonerdilemma}) one can consider a general PD game as 
\begin{equation}\label{afinematrix}
\Gamma=
\begin{pmatrix}
         (r,r) & (0,1)\\
         (1,0) & (p,p)
\end{pmatrix}, \quad 0<p<r<1 \quad \text{and} \quad r > \frac{1}{2}.
\end{equation}
\begin{example}
A commonly used bimatrix of the PD 
\begin{equation}
\begin{pmatrix}\label{pdgame}
(3,3) & (0,5) \\ 
(5,0) & (1,1)
\end{pmatrix}
\end{equation}
is equivalent to the game (\ref{afinematrix}), where $r = 3/5$ and $p = 1/5$. In the remainder of this paper, we investigate NE by proving theorems about the conditions for their existence for the general form of the PD (\ref{afinematrix}). However, for purposes of clarity, selected examples of equilibria will be presented in the context of its common form (\ref{pdgame}).  
\end{example}
The application of a positive affine transformation in the classical game does not affect also the EWL quantum extension of the game.
\begin{proposition}
The payoffs preference relations of the EWL scheme are invariant with respect to a positive affine
transformation of payoffs in the classical game.
\end{proposition}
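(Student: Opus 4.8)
The plan is to reduce the statement to the preceding Theorem on positive affine transformations by showing that such a transformation of the classical payoffs $\Delta^i_{kl}$ propagates to \emph{the same} positive affine transformation of the EWL payoff function $v_{i}$ defined in (\ref{v1}). Once this is established, the invariance of the preference relation follows at once.

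The key observation I would exploit is that, for any fixed pair of unitary strategies $U_{1}, U_{2}$, the numbers $|\langle \psi_{kl}|U_{1}\otimes U_{2}|\psi_{11}\rangle|^2$ appearing in (\ref{v1}) form a probability distribution over the index pairs $(k,l)$. This is because $\{|\psi_{kl}\rangle\}$ is an orthonormal basis of $\mathds{C}^2\otimes \mathds{C}^2$ and $U_{1}\otimes U_{2}$ is unitary, so that $U_{1}\otimes U_{2}|\psi_{11}\rangle$ is a unit vector whose squared moduli of coefficients in this basis sum to one:
\begin{equation}
\sum_{k,l \in \{1,2\}} |\langle \psi_{kl}|U_{1}\otimes U_{2}|\psi_{11}\rangle|^2 = 1.
\end{equation}

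Next I would fix $\lambda > 0$ and $\mu \in \mathds{R}$, replace every classical payoff $\Delta^i_{kl}$ by $\lambda \Delta^i_{kl} + \mu$, and denote by $\tilde{v}_{i}$ the resulting EWL payoff. Using the linearity of the sum in (\ref{v1}) together with the normalization identity above, the constant $\mu$ factors out against the total probability $1$, giving $\tilde{v}_{i}(U_{1}, U_{2}) = \lambda\, v_{i}(U_{1}, U_{2}) + \mu$. Hence $\tilde{v}_{i}$ is itself a positive affine transformation of $v_{i}$ with the same constants $\lambda$ and $\mu$, and the Theorem immediately yields that $\tilde{v}_{i}$ and $v_{i}$ represent the same preference relation for each player, which is precisely the assertion of the Proposition.

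I do not expect a genuine obstacle in this argument; the only point requiring care is the normalization identity, which is where the unitarity of $U_{1}\otimes U_{2}$ and the orthonormality of the basis $\{|\psi_{kl}\rangle\}$ are used. Everything else is a direct consequence of the linearity of (\ref{v1}) in the classical payoffs.
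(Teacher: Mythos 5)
Your proposal is correct and takes essentially the same approach as the paper: both proofs rest on the observation that the weights $|\langle \psi_{kl}|U_{1}\otimes U_{2}|\psi_{11}\rangle|^2$ form a probability distribution summing to $1$, so that the additive constant $\mu$ cancels against the normalization while the factor $\lambda>0$ preserves the ordering. The only difference is one of packaging: you establish the functional identity $\tilde{v}_{i} = \lambda v_{i} + \mu$ and then invoke the stated Theorem, whereas the paper carries out the same algebra directly on the difference of payoffs for two given strategy profiles.
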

\begin{proof}
Let us consider a positive affine
transformation $y = \lambda x + \mu$ and a pair of bimatrix games of the form
\begingroup 
\setlength\arraycolsep{4.5pt}
\begin{equation}
\Theta_1 = \begin{pmatrix}
\Delta_{00} & \Delta_{01} \\
\Delta_{10} & \Delta_{11}
\end{pmatrix},\qquad \Theta_2 =\begin{pmatrix}
\lambda\Delta_{00}+\mu & \lambda\Delta_{01} + \mu\\
\lambda \Delta_{10} + \mu & \lambda \Delta_{11} + \mu
\end{pmatrix}.
\end{equation}
\endgroup
Let $(U_{1}, U_{2})$ be a strategy profile that is more preferred by player $i$ than a profile $(U'_{1}, U'_{2})$. Both strategy profiles determine some probability distributions $(p_{kl})$ and $(p'_{kl})$ defined by payoff function in the EWL scheme (\ref{generalEWLpayoff}) for $\Theta_1$, i.e. over $\{\Delta_{kl}, k,l=1,2\}$, and
\begin{equation}
\sum_{k,l = 1,2}p_{kl}\Delta^i_{kl} \geq \sum_{k,l = 1,2}p'_{kl}\Delta^i_{kl}.
\end{equation}
On the other hand in the EWL scheme for $\Theta_2$ 
\begin{align}
    &\sum_{k,l=1,2}p_{kl}(\lambda \Delta^i_{kl} + \mu) - \sum_{k,l=1,2}p'_{kl}(\lambda \Delta^i_{kl} + \mu) \nonumber\\ 
    &\quad = \sum_{k,l=1,2}p_{kl}\lambda \Delta^i_{kl} +\sum_{k,l=1,2}p_{kl} \mu - \sum_{k,l=1,2}p'_{kl}\lambda \Delta^i_{kl} -\sum_{k,l=1,2}p'_{kl} \mu \nonumber \\ &\quad = \lambda \left( \sum_{k,l=1,2}p_{kl} \Delta^i_{kl} - \sum_{k,l=1,2}p'_{kl} \Delta^i_{kl} \right) \geq 0.
\end{align}
\end{proof}
Therefore the strategy profile $(U_1,U_2)$ is more preferred then a profile $(U'_1,U'_2)$ by player $i$ also in the EWL scheme of the game $\Theta_2$. 
\section{Permissible four-strategy quantum extensions}
In \cite{frackiewicz_permissible_2024} it was shown that there are five classes of extensions of classical $2\times 2$ games which satisfy the invariance criteria with respect to the isomorphic transformation of the game. Each of the classical game extension classes below corresponds to the specific parameters $(\theta_i, \alpha_i, \beta_i)$ of the $i\in \{1,2\}$ players' quantum strategies. Since the focus of this paper is on NE, in the following we will only give selected strategy parameters, e.g. those on which the payoffs of a quantum game depend. Further details regarding the remaining parameters of the strategy can be found in Table 1 of the article \cite{frackiewicz_permissible_2024}. 
The objective of this paper is to examine the NE for the PD. Consequently, all subsequent expressions for the expansion matrices will employ the specific matrix $\Gamma$ (\ref{afinematrix}), which represents the general form of this game.
The first extension class $A$ is defined by matrices
\begingroup 
\setlength\arraycolsep{4.5pt}
\renewcommand\arraystretch{1.5}
\begin{equation} \label{klasaA}
\medmuskip = 0.2mu
A_1 = \begin{pmatrix}
\Gamma & a_1\Gamma+a_1'\Gamma_3 \\ 
a_1\Gamma+a_1'\Gamma_3 & b_1\Gamma+b_1'\Gamma_3
\end{pmatrix}, \quad 
A_2 = \begin{pmatrix}
\Gamma & a_2\Gamma_2+a_2'\Gamma_1 \\ 
a_2\Gamma_1+a_2'\Gamma_2 & b_2\Gamma_3+b_2'\Gamma
\end{pmatrix},
\end{equation}
\endgroup
where 
\begin{equation}\label{Gamma123}
 \Gamma_1=\begin{pmatrix}
         (1,0) & (p,p)\\
         (r,r) & (0,1)
        \end{pmatrix},
 \hspace{3mm}
\Gamma_2=\begin{pmatrix}
         (0,1) & (r,r)\\
         (p,p) & (1,0)
        \end{pmatrix},
\hspace{3mm}
 \Gamma_3
=\begin{pmatrix}
        (p,p) & (1,0)\\
         (0,1) & (r,r)
        \end{pmatrix}
\end{equation}
are derived from the PD matrix (\ref{afinematrix}), where the rows, columns, or both have been swapped, $a_i=\cos^{2}\alpha_i$, $a'_i= 1-a_i =\sin^{2}\alpha_i$ and $b_i=\cos^{2}2\alpha_i$, $b'_i= 1-b_i =\sin^{2}2\alpha_i$. Other parameters of quantum strategies are defined in \cite{frackiewicz_permissible_2024}, in particular $\theta_1 = 0$ and $\theta_2 =\pi$ for $A_1$ and vice versa for $A_2$. The second class of extensions $B$, for which $\theta_1=\theta_2=\frac{\pi}{2}$, is defined by the matrix
\begingroup 
\setlength\arraycolsep{4.5pt}
\renewcommand\arraystretch{1.5}
\begin{equation} \label{klasaB}
B = \begin{pmatrix}
\medmuskip = 0.2mu
\Gamma & \frac{\Gamma+\Gamma_1+\Gamma_2+\Gamma_3}{4} \\ 
\frac{\Gamma+\Gamma_1+\Gamma_2+\Gamma_3}{4} & \frac{\Gamma+\Gamma_1+\Gamma_2+\Gamma_3}{4}
\end{pmatrix}. 
\end{equation}
\endgroup
Extension of the class $C$ is given by the formula
\begingroup 
\setlength\arraycolsep{4.5pt}
\renewcommand\arraystretch{1.5}
\begin{equation}
\medmuskip = 0.2mu
C = \begin{pmatrix} \label{klasaC}
\Gamma & t\frac{\Gamma+\Gamma_3}{2}+t'\frac{\Gamma_1+\Gamma_2}{2} \\ 
t\frac{\Gamma+\Gamma_3}{2}+t'\frac{\Gamma_1+\Gamma_2}{2} & t'^2\Gamma + tt'(\Gamma_1+\Gamma_2) + t^2\Gamma_3
\end{pmatrix}, 
\end{equation}
\endgroup
where $t=\cos^{2}\frac{\theta_1}{2}$, $t'= 1-t =\sin^{2}\frac{\theta_1}{2}$. For class $C$, as well as for the other classes $D$ and $E$, $\theta_2=\pi-\theta_1$. The class $D$ can be determined by the matrices:  
\begingroup 
\setlength\arraycolsep{4.5pt}
\renewcommand\arraystretch{1.5}
\begin{equation}\label{klasaD}
\medmuskip = 0.2mu
D_1 = \begin{pmatrix}
\Gamma & t\Gamma+t'\Gamma_2 \\ 
t\Gamma+t'\Gamma_1 & t^2\Gamma + tt'(\Gamma_1+\Gamma_2) + t'^2\Gamma_3
\end{pmatrix}, \quad 
D_2 = \begin{pmatrix}
\Gamma & t\Gamma_3+t'\Gamma_1 \\ 
t\Gamma_3+t'\Gamma_2 & t^2\Gamma + tt'(\Gamma_1+\Gamma_2) + t'^2\Gamma_3
\end{pmatrix}.
\end{equation}
\endgroup
The last class $E$ is determined by the matrices
\begingroup 
\setlength\arraycolsep{4.5pt}
\renewcommand\arraystretch{1.5}
\begin{equation}
\medmuskip = 0.2mu
E_1 = \begin{pmatrix}\label{klasaE}
\Gamma & t\Gamma+t'\Gamma_1 \\ 
t\Gamma+t'\Gamma_2 & t^2\Gamma + tt'(\Gamma_1+\Gamma_2) + t'^2\Gamma_3
\end{pmatrix}, \quad
E_2 = \begin{pmatrix}
\Gamma & t\Gamma_3+t'\Gamma_2 \\ 
t\Gamma_3+t'\Gamma_1 & t^2\Gamma + tt'(\Gamma_1+\Gamma_2) + t'^2\Gamma_3
\end{pmatrix}.
\end{equation}
\endgroup
The analysis of NE will be simplified by the symmetry of the extension matrix. Consequently, we will prove the following theorem. 
\begin{proposition}\label{propsymewl}
If a two player game $\Gamma$ is symmetric then its quantum EWL extension is also a symmetric game.
\end{proposition}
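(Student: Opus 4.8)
The plan is to verify the two requirements of Definition~\ref{symdefinition} directly for the EWL payoff functions (\ref{v1}). The strategy sets coincide automatically, since both players draw their operators from the same set $T_1 = T_2 \subseteq \mathsf{SU}(2)$ parametrised as in (\ref{qstrategy}); so the entire content of the claim is the payoff identity $v_1(U_1, U_2) = v_2(U_2, U_1)$ for every pair $(U_1, U_2)$, where in $v_2(U_2,U_1)$ the operator $U_2$ occupies the first tensor slot and $U_1$ the second.

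First I would encode the hypothesis that $\Gamma$ is symmetric as the entrywise relation $\Delta^2_{kl} = \Delta^1_{lk}$, and substitute it into (\ref{v1}) to obtain
\begin{equation}
v_2(U_2, U_1) = \sum_{k,l \in \{1,2\}} \Delta^1_{lk}\,\bigl|\langle \psi_{kl}|U_2 \otimes U_1|\psi_{11}\rangle\bigr|^2.
\end{equation}
The goal is then to match this, after relabelling $k \leftrightarrow l$ in the sum, with $v_1(U_1, U_2)$. What stands in the way is the factor $U_2 \otimes U_1$ (the tensor factors are in the ``wrong'' order) together with the swapped basis label $\psi_{kl}$ versus $\psi_{lk}$.

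The device that resolves both mismatches at once is the swap operator $V$ on $\mathds{C}^2 \otimes \mathds{C}^2$ defined by $V|ab\rangle = |ba\rangle$. I would record three of its properties: (i) it is self-adjoint and unitary, so $\langle V\varphi|\chi\rangle = \langle \varphi|V\chi\rangle$; (ii) it intertwines the tensor factors, $V(U_1 \otimes U_2)V = U_2 \otimes U_1$; and (iii) it fixes the initial state, $V|\psi_{11}\rangle = |\psi_{11}\rangle$, which is immediate because $|\psi_{11}\rangle$ is a combination of the swap-invariant vectors $|00\rangle$ and $|11\rangle$. Combining (i)--(iii) gives
\begin{equation}
\langle \psi_{kl}|U_2 \otimes U_1|\psi_{11}\rangle = \langle \psi_{kl}|V(U_1 \otimes U_2)V|\psi_{11}\rangle = \langle V\psi_{kl}|(U_1 \otimes U_2)|\psi_{11}\rangle.
\end{equation}

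The one computation I would actually carry out is the action of $V$ on the basis, which I expect to give $V|\psi_{kl}\rangle = |\psi_{lk}\rangle$ up to an irrelevant global phase: the states $|\psi_{11}\rangle, |\psi_{22}\rangle$ lie in the swap-symmetric span of $|00\rangle, |11\rangle$ and are fixed, while $|\psi_{12}\rangle, |\psi_{21}\rangle$ are interchanged since $V$ exchanges $|01\rangle$ and $|10\rangle$. This is the main (and only genuinely computational) obstacle, and it is where the phase conventions of the basis must be checked; since the payoff depends only on squared moduli, any residual phase is harmless. Granting this, $\bigl|\langle \psi_{kl}|U_2 \otimes U_1|\psi_{11}\rangle\bigr|^2 = \bigl|\langle \psi_{lk}|U_1 \otimes U_2|\psi_{11}\rangle\bigr|^2$, and feeding this into the displayed expression for $v_2(U_2, U_1)$, followed by the relabelling $k \leftrightarrow l$, produces exactly $\sum_{k,l}\Delta^1_{kl}\bigl|\langle\psi_{kl}|U_1\otimes U_2|\psi_{11}\rangle\bigr|^2 = v_1(U_1, U_2)$, which completes the argument. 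As an independent sanity check one could instead read the symmetry off the closed form (\ref{generalEWLpayoff}), noting that interchanging the labels $1 \leftrightarrow 2$ on the strategy parameters leaves the coefficient brackets of $\Delta_{11}$ and $\Delta_{22}$ invariant and swaps those of $\Delta_{12}$ and $\Delta_{21}$, precisely matching $\Delta^1 \leftrightarrow \Delta^2$ for symmetric $\Gamma$.
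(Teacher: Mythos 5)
Your proof is correct and follows essentially the same route as the paper: both arguments reduce the claim to the key identity $|\langle \psi_{kl}|U_{2}\otimes U_{1}|\psi_{11}\rangle|^2 = |\langle \psi_{lk}|U_{1}\otimes U_{2}|\psi_{11}\rangle|^2$, combine it with the symmetry relation $\Delta^2_{kl}=\Delta^1_{lk}$, and relabel the sum. The only difference is that the paper simply asserts this identity, whereas you actually prove it via the swap operator $V$ (and your computation that $V$ fixes $|\psi_{11}\rangle$, $|\psi_{22}\rangle$ and exchanges $|\psi_{12}\rangle \leftrightarrow |\psi_{21}\rangle$ is correct for the intended basis), so your write-up is, if anything, more complete.
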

\begin{proof}
First note that 
\begin{equation}
|\langle \psi_{kl}|U_{2} \otimes U_{1}|\psi_{11}\rangle|^2 = |\langle \psi_{lk}|U_{1} \otimes U_{2}|\psi_{11}\rangle|^2
\end{equation}
in the formula (\ref{v1}) for each pair $(k,l) \in \{1,2\}^2$. Moreover, if a bimatrix game $\Gamma$ is symmetric then $\Delta^2_{ij} = \Delta^1_{ji}$. Then, it follows that
\begin{align}
u_{2}(U_{2}, U_{1}) &= \sum_{k,l \in \{1,2\}}\Delta^2_{kl}|\langle\psi_{kl}|U_{2} \otimes U_{1}|\psi_{11}\rangle|^2 \nonumber \\ 
&= \sum_{k,l \in \{1,2\}}\Delta^1_{lk}|\langle\psi_{lk}|U_{1} \otimes U_{2}|\psi_{11}\rangle|^2 = u_{1}(U_{1}, U_{2}).
\end{align}
\end{proof}
From Proposition~\ref{propsymewl}, we can deduce the following fact:
\begin{corollary}
If a two player game $\Gamma$ is symmetric then all extensions $A_1, \ldots, E_2$ are also symmetric games.
\end{corollary}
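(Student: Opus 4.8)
The plan is to verify directly that each of the eight $4\times 4$ bimatrices $A_1,\dots,E_2$ satisfies the matrix criterion $\Delta^2=(\Delta^1)^{T}$ for symmetry recorded after Definition~\ref{symdefinition}. I would package this into one algebraic tool: for a bimatrix $G=(G^1,G^2)$ define its \emph{reflection} $G^\star:=((G^2)^{T},(G^1)^{T})$, the operation that interchanges the two players and transposes the strategy labelling. A game is symmetric precisely when $G^\star=G$, the map $G\mapsto G^\star$ is an involution, and it is linear, so it commutes with the affine combinations appearing in (\ref{klasaA})--(\ref{klasaE}).

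The first step is to record the reflection behaviour of the four building blocks. By direct inspection of (\ref{afinematrix}) and (\ref{Gamma123}) one checks that $\Gamma$ and $\Gamma_3$ are themselves symmetric, i.e. $\Gamma^\star=\Gamma$ and $\Gamma_3^\star=\Gamma_3$, whereas $\Gamma_1$ and $\Gamma_2$ are interchanged, $\Gamma_1^\star=\Gamma_2$ and $\Gamma_2^\star=\Gamma_1$. This last pair of identities is the crucial ingredient and, I expect, the main point to get right: it is exactly what will make the off-diagonal blocks match up even though the two players are assigned different parameter families ($\theta_1\neq\theta_2$) and therefore do not literally share a strategy set.

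Next I would reduce the $4\times4$ test to a block criterion. Writing each extension as a $2\times2$ array of $2\times2$ bimatrix blocks $M=\left(\begin{smallmatrix}M_{11}&M_{12}\\ M_{21}&M_{22}\end{smallmatrix}\right)$, the $4\times4$ reflection acts block-wise as $(M^\star)_{IJ}=(M_{JI})^\star$. Hence $M$ is symmetric if and only if the diagonal blocks $M_{11},M_{22}$ are symmetric and the off-diagonal blocks satisfy $M_{12}=(M_{21})^\star$. In every class the top-left block is $\Gamma$, which is symmetric, and each bottom-right block is an affine combination of $\Gamma$, $\Gamma_1+\Gamma_2$ and $\Gamma_3$; since $(\Gamma_1+\Gamma_2)^\star=\Gamma_2+\Gamma_1$, linearity of $\star$ gives $M_{22}^\star=M_{22}$ at once.

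Finally I would dispatch the off-diagonal condition class by class, which is routine linear bookkeeping using only $\Gamma_1^\star=\Gamma_2$ and $\Gamma_2^\star=\Gamma_1$. For $A_1$, $B$ and $C$ the two off-diagonal blocks coincide and are manifestly $\star$-invariant, so the condition is immediate. For $A_2$, $D_1$, $D_2$, $E_1$, $E_2$ the off-diagonal blocks genuinely differ, but reflecting $M_{21}$ swaps its $\Gamma_1$ and $\Gamma_2$ terms while fixing the $\Gamma$ and $\Gamma_3$ terms, reproducing $M_{12}$; for instance $(t\Gamma+t'\Gamma_1)^\star=t\Gamma+t'\Gamma_2$ settles $D_1$. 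Collecting the eight verifications proves the corollary. Proposition~\ref{propsymewl} is the conceptual reason these identities hold — the underlying EWL game on the symmetric $\Gamma$ is symmetric — but because the discretisations place the two players on distinct parameter families, the clean finish is this block-level check rather than a direct appeal to the proposition.
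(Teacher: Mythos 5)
Your proof is correct: the reflection identities $\Gamma^\star=\Gamma$, $\Gamma_3^\star=\Gamma_3$, $\Gamma_1^\star=\Gamma_2$, $\Gamma_2^\star=\Gamma_1$ all check out against (\ref{afinematrix}) and (\ref{Gamma123}), the block criterion $(M^\star)_{IJ}=(M_{JI})^\star$ is right, and the class-by-class bookkeeping (equal $\star$-invariant off-diagonal blocks for $A_1$, $B$, $C$; swapped $\Gamma_1\leftrightarrow\Gamma_2$ terms for $A_2$, $D_1$, $D_2$, $E_1$, $E_2$) goes through in every case. However, your route differs from the paper's. The paper offers no computation for the corollary at all: it deduces it in one line from Proposition~\ref{propsymewl}, regarding each extension as an EWL game built on the symmetric $\Gamma$, and then only \emph{illustrates} the matrix-level symmetry in the subsequent example, where it verifies $(A_2^1)^T=A_2^2$ using exactly your key relations, recorded there as (\ref{symm1}) and (\ref{symm2}). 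In effect you have promoted the paper's single illustrative example into a systematic proof covering all eight matrices, packaged by the involution $G\mapsto G^\star$ and the block-wise criterion. What each approach buys: the paper's appeal to Proposition~\ref{propsymewl} is shorter and identifies the conceptual source of the symmetry, but — as you correctly point out in your closing remark — it is not a verbatim application of that proposition, since the two players in each discretized extension draw their four strategies from \emph{different} parameter families ($\theta_1\neq\theta_2$), so the swap argument $u_2(U_2,U_1)=u_1(U_1,U_2)$ does not by itself yield $\Delta^2_{ij}=\Delta^1_{ji}$ for the bimatrix without checking how the two players' strategy labels correspond. Your explicit block-level verification closes precisely that gap, at the cost of being a longer, case-by-case argument; it is the more self-contained of the two proofs.
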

\begin{example}
As an example, let us examine the symmetries of the $\Gamma$ game (\ref{afinematrix})
\begin{equation}
\Gamma = \begin{pmatrix}
(r,r) & (0, 1) \\
(1,0) & (p,p)
\end{pmatrix}= ( \Gamma^1, \Gamma^2 ).
\end{equation}
It is symmetric, as the players' payoffs submatrices $\Gamma^i$ obey the relation 
\begin{equation}\label{symm1}
\Gamma^2 =
\begin{pmatrix}
         r & 1\\
         0 & p
        \end{pmatrix} = (\Gamma^1)^T.
\end{equation}
In addition, definition (\ref{Gamma123}) allows us to infer that
\begin{equation}\label{symm2}
\Gamma_2^2 =
\begin{pmatrix}
         1 & r\\
         p & 0
        \end{pmatrix} = (\Gamma_1^1)^T,
\hspace{3mm}
\Gamma_2^1 =
\begin{pmatrix}
         0 & r\\
         p & 1
        \end{pmatrix} = (\Gamma_1^2)^T
\hspace{3mm} \text{and} \hspace{3mm}
\Gamma_3^2 = \begin{pmatrix}
         p & 0\\
         1 & r
        \end{pmatrix} =
(\Gamma_3^1)^T.
\end{equation}
The symmetry of the extension matrices (\ref{klasaA}) and (\ref{klasaB})-(\ref{klasaE}) can be attributed to the relationships (\ref{symm1}) and (\ref{symm2}). To illustrate, consider the extension $A_2$
\begingroup 
\setlength\arraycolsep{4.5pt}
\renewcommand\arraystretch{1.5}
\begin{align} \label{KlA2proof}
\medmuskip = 0.2mu
(A_2^1)^T &= \begin{pmatrix}
\Gamma^1 & a_2\Gamma_2^1+a_2'\Gamma_1^1 \\ 
a_2\Gamma_1^1+a_2'\Gamma_2^1 & b_2\Gamma_3^1+b_2'\Gamma^1
\end{pmatrix}^T
=\begin{pmatrix}
(\Gamma^1)^T & (a_2\Gamma_1^1+a_2'\Gamma_2^1)^T \\ \nonumber
(a_2\Gamma_2^1+a_2'\Gamma_1^1)^T & (b_2\Gamma_3^1+b_2'\Gamma^1)^T
\end{pmatrix} = \\
&=\begin{pmatrix}
\Gamma^2 & a_2\Gamma_2^2+a_2'\Gamma_1^2 \\ 
a_2\Gamma_1^2+a_2'\Gamma_2^2 & b_2\Gamma_3^2+b_2'\Gamma^2
\end{pmatrix} = A_2^2.
\end{align}
\endgroup
\end{example}
In the remainder of this analysis, in order to streamline the formulas, we will focus on the extension matrices of the first player, with the understanding that the matrices of the second player are their transpositions.   
\section{Nash equilibria of the quantum extensions of the Prisoner's Dilemma}
The objective of this section is to undertake a comprehensive examination of all extensions of the PD, with a view to identifying NE in pure strategies. For each equilibrium, we will demonstrate which conditions for its existence must be satisfied by the payoffs $r$ and $p$ of the general PD (\ref{afinematrix}) and the parameters $\theta_i$ or $\alpha_i$ of the quantum strategies (\ref{qstrategy}). The parameters $\beta_i$ of the quantum strategy are, for a given extension, each time defined by the parameters $\alpha_i$ \cite{frackiewicz_permissible_2024}.
\subsection{Extension of the $A$ class}
Let $A_1=\left(A_1^1,\left(A_1^1\right)^T\right)$, where
\begin{equation}
\medmuskip = 0.2mu
A_1^1 = \begin{pmatrix}
r & 0 & a_{1}r + a'_{1}p & a'_{1} \\ 
1 & p & a_{1} & a_{1}p + a'_{1}r \\
a_{1}r+a'_{1}p & a'_{1}& b_{1}r+b'_{1}p & b'_{1} \\
a_{1} & a_{1}p + a'_{1}r & b_{1} & b_{1}p + b'_{1}r
\end{pmatrix}.
\end{equation}

Previously defined parameters $a_i, a_i', b_i, b_i'$, for $i=1,2$ can all be expressed through the single parameter $a$ 
\begin{equation}\label{a_parameters}
    \begin{split}
        &a_i  = \cos^{2}(\alpha_i) = a, \\
        &a_i' = \sin^{2}(\alpha_i) = 1 - a, \\
        &b_i  = \cos^{2}(2\alpha_i) = (1-2a)^{2}, \\
        &b_i' = \sin^{2}(2\alpha_i) = 4a(1-a).       
    \end{split}
\end{equation}
Note that $\alpha_i \in [0,2\pi)$ correspond to $a \in [0,1]$. The use of this abbreviated notation will not cause confusion, provided that we remember that in the extension $A_i$ there is always the parameter $a=a_i$. As a result $A_{1}^1$ matrix takes the following form

\begin{equation}
\medmuskip = 0.2mu
A_1^1 = \begin{pmatrix}
r & 0 & ar-ap+p & 1-a\\ 
1 & p & a & ap-ar+r  \\
ar-ap+p & 1-a & r - 4 (a-1)a(p-r) & -4(a-1)a \\
a & ap-ar+r & (1-2a)^2 &  (1-2a)^{2}p - 4(a-1)ar 
\end{pmatrix}.
\end{equation}

\begin{proposition} \label{A_prop1}
  Neither $(1,j)$ nor $(i,1)$, $i,j=1,\ldots,4$, are Nash equilibria.
\end{proposition}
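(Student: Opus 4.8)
The plan is to exploit the symmetry of the extension together with a best-response analysis read directly off the single matrix $A_1^1$. Since $\Gamma$ is symmetric, $A_1$ is a symmetric game (the corollary to Proposition~\ref{propsymewl}), and hence its set of pure Nash equilibria is symmetric: $(s_1,s_2)$ is an equilibrium if and only if $(s_2,s_1)$ is. It therefore suffices to prove that no profile of the form $(i,1)$ is a Nash equilibrium; the statement for the profiles $(1,j)$ then follows immediately by transposition.

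First I would identify the only candidates among the profiles $(i,1)$. When player~2 plays the first column, player~1's payoffs are exactly the first column of $A_1^1$, namely $r,\,1,\,ar-ap+p,\,a$ for $i=1,2,3,4$. Using $0<p<r<1$ and $a\in[0,1]$ one checks that $r<1$, that $ar-ap+p=a(r-p)+p\le (r-p)+p=r<1$, and that $a\le 1$; hence the maximum of this column equals $1$ and is attained at $i=2$, and also at $i=4$ precisely when $a=1$. Consequently player~1's best response to the first column is row~$2$ (and, in the degenerate case $a=1$, possibly row~$4$), so the only profiles $(i,1)$ that can satisfy player~1's equilibrium inequality are $(2,1)$ and, when $a=1$, $(4,1)$; every other $(i,1)$ is ruled out at once because player~1 strictly improves by switching to row~$2$.

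Next I would discard these candidates by a deviation of player~2. For $(2,1)$, player~2's payoff is $(A_1^1)^{T}_{21}=(A_1^1)_{12}=0$, whereas switching to the second column yields $(A_1^1)_{22}=p>0$, a strict improvement; equivalently, by the symmetry of the equilibrium set it is enough to observe that in $(1,2)$ player~1 prefers row~$2$ (payoff $p$) to row~$1$ (payoff $0$). The same computation disposes of $(4,1)$ when $a=1$: there player~2's payoff is $(A_1^1)_{14}=1-a=0$, while the second column gives $(A_1^1)_{24}=ap-ar+r=p>0$. Thus no $(i,1)$ is a Nash equilibrium, and symmetry yields the claim for the profiles $(1,j)$ as well.

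I expect the only real care to be bookkeeping rather than depth: the argument must cover the full parameter range $a\in[0,1]$ and $0<p<r<1$ (note that the bound $r>\tfrac12$ is not even needed), and the main obstacle is the degeneracy at $a=1$, where the first column of $A_1^1$ attains its maximum twice and the extra candidate $(4,1)$ appears. Isolating that boundary case and verifying that every relevant inequality ($r<1$, $a(r-p)+p<1$, and $p>0$) is strict is what guarantees a genuinely profitable deviation, and hence the failure of the equilibrium condition in all cases.
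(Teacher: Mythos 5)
Your proof is correct and follows essentially the same route as the paper's: both arguments reduce to the first column of $A_1^1$, observe that its maximum is $1$, attained at row $2$ (with row $4$ tied only in the degenerate case $a=1$), and then eliminate the surviving candidates by a player-2 deviation to the second column (using $0<p$ for $(2,1)$, and $1-a<ap-ar+r$ at $a=1$ for $(4,1)$). If anything, your write-up is more complete than the paper's own, which records only the two inequalities $ar-ap+p<1$ and $1-a<ap-ar+r$ (at $a=1$) and leaves both the symmetry reduction and the disposal of the candidate $(2,1)$/$(1,2)$ implicit.
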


\begin{proof}
    
Let $0 \leq a \leq 1$, $0 < p < r$,  $0.5 < r < 1$.
Note that \begin{equation}\label{a1_11a} 
              ar-ap+p<1 \text{ for every } a\in[0,1].
             \end{equation}
Moreover, 
\begin{equation}\label{a1_11b}
 1-a<ap-ar+r \text{ for } a=1.
\end{equation}
Equations \eqref{a1_11a} and \eqref{a1_11b} lead to conclusion that none of the strategies in the first row and column can be NE.
\end{proof}
\begin{proposition}
The strategy profile (2,2) is a Nash equilibrium for $ 0 < p < r,~$ $\frac{1}{2} < r < 1$, provided $a = 1$.
\end{proposition}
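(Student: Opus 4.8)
The plan is to reduce the verification to a single dominance check by first specialising the matrix to $a=1$ and then invoking the symmetry of the extension. Setting $a=1$ makes every entry of $A_1^1$ that carries a factor $(a-1)$ vanish and turns $(1-2a)^2$ into $1$, so that $ar-ap+p=r$, $1-a=0$, $ap-ar+r=p$, and the bottom-right block collapses. First I would record the resulting simplified payoff matrix
\begin{equation*}
A_1^1\big|_{a=1} = \begin{pmatrix}
r & 0 & r & 0\\
1 & p & 1 & p\\
r & 0 & r & 0\\
1 & p & 1 & p
\end{pmatrix},
\end{equation*}
which shows that the columns repeat with period two and that every column has the form $(x,y,x,y)^{T}$.

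Next I would check the best-response condition for player $1$ at the profile $(2,2)$. With player $2$ fixed on column $2$, player $1$'s available payoffs are the entries of the second column, namely $(0,p,0,p)^{T}$, and the payoff actually obtained at $(2,2)$ is $p$. Since $0<p$ gives $p\ge 0$ and trivially $p\ge p$, the value $p$ is the maximum of that column, so no row deviation can strictly increase player $1$'s payoff. Rows $2$ and $4$ are both best responses here, which is compatible with the (weak) inequalities in the definition of a Nash equilibrium.

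Finally I would transfer this to player $2$ using the corollary to Proposition~\ref{propsymewl}, which guarantees that the extension $A_1$ is a symmetric game whenever $\Gamma$ is. Because $(2,2)$ is a symmetric profile, the best-response inequality for player $2$ deviating across columns is exactly the transpose of the inequality already verified for player $1$, so it holds automatically and no separate computation is required. Combining the two conditions yields that $(2,2)$ is a Nash equilibrium under $0<p<r$, $\tfrac12<r<1$, and $a=1$.

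The argument is elementary, so the only real care is bookkeeping: one must simplify the four entries built from $(a-1)a$ and $(1-2a)^2$ correctly at $a=1$ and notice that the PD constraints enter only through the harmless inequality $p>0$. The mild conceptual point — rather than a genuine obstacle — is recognising that the symmetry established earlier removes the need to examine player $2$ explicitly, and that ties in the best response (row $4$ attaining the same value $p$) do not disqualify the profile.
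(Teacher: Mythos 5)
Your proof is correct, and for the proposition as literally stated (a sufficiency claim: \emph{if} $a=1$ then $(2,2)$ is a Nash equilibrium) it is complete: the specialisation of $A_1^1$ at $a=1$ is computed correctly, the column-2 comparison ($p\geq 0$ against rows 1 and 3, $p\geq p$ against row 4) is exactly player 1's best-response condition, and invoking the symmetry of the extension legitimately disposes of player 2's condition at the diagonal profile. However, your route is organised differently from the paper's, and the difference matters for what is actually established. The paper does not substitute $a=1$ at the outset; it writes the best-response inequalities for general $a\in[0,1]$, namely $p\geq 0$, $p\geq 1-a$, and $p\geq ap-ar+r$, and observes that the last is equivalent to $(p-r)(1-a)\geq 0$, which, since $p<r$, holds \emph{only} when $a=1$. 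That parametric argument therefore yields necessity as well as sufficiency: $a=1$ is the unique value of the strategy parameter for which $(2,2)$ is an equilibrium, and it is this stronger characterisation that feeds into Table~\ref{table_A1} (the entry $\{1\}$ for $A_1(2,2)$). Your specialise-first argument is more elementary and makes the dominance check trivial, but by construction it cannot rule out other values of $a$; to recover the paper's full conclusion you would still have to analyse $p\geq ap-ar+r$ with $a$ left free, which is precisely the one-line factorisation the paper performs. In short: same underlying comparison of the entries of column 2, but the paper's organisation buys the complete set of admissible $a$, while yours buys brevity for the stated implication.
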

\begin{proof}
 Consider the following set of inequalities
\begin{align}
    & p \geq 0, \label{A22_1} \\
    & p \geq 1-a, \label{A22_2} \\
    & p \geq ap-ar+r. \label{A22_3} 
\end{align}
\end{proof}
Inequality \eqref{A22_3} can be rearranged to the following form:
\begin{equation}
    (p-r)(1-a) \geq 0,
\end{equation}
where it is evident, that the inequality is fulfilled only for $a=1$. Inequalities \eqref{A22_1} and \eqref{A22_2} are also satisfied by $a=1$, which proves that the strategy profile $(2,2)$ is a NE.

\begin{proposition}
Strategy profiles $(2,3)$, $(3,2)$ are NE provided that one of the following four conditions is satisfied
\begin{equation}
    0 < p \leq \frac{1}{6} \wedge \frac{1}{2} < r \leq 1-3p \wedge \frac{1}{4} \leq a \leq \frac{r-1}{r-1-p}
\end{equation}
or
\begin{equation}
0 < p \leq \frac{1}{6} \wedge  1-3p<r<1-p \wedge \frac{p}{1+p-r} \leq a \leq \frac{r-1}{r-1-p}
\end{equation}
or
\begin{equation}
0 < p \leq \frac{1}{6} \wedge  r=1-p \wedge  a = \frac{r-1}{r-1-p}
\end{equation}
or
\begin{equation}\label{propA1_5}
    \frac{1}{6} < p < \frac{1}{2} \wedge \frac{1}{2} < r \leq 1-p \wedge \frac{p}{1+p-r} \leq a \leq \frac{r-1}{r-1-p}.
\end{equation}
Note that if $r=1-p$ in equation (\ref{propA1_5}), then $a = \frac{r-1}{r-1-p}$.
\end{proposition}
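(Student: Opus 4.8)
The plan is to exploit the symmetry established in Proposition~\ref{propsymewl}: since $\Gamma$ is symmetric, $A_1=(A_1^1,(A_1^1)^T)$ is a symmetric game, so by the NE-symmetry of symmetric games $(2,3)$ is a NE if and only if $(3,2)$ is. Hence it suffices to treat the profile $(2,3)$, and $(3,2)$ follows for free. Writing $b_{ij}$ for the entries of $A_1^1$, player $1$'s payoff at $(i,3)$ is $b_{i3}$ while player $2$'s payoff at $(2,j)$ is $b_{j2}$ (the transpose structure). Thus $(2,3)$ is a NE exactly when $b_{23}\ge b_{i3}$ for $i\in\{1,3,4\}$ and $b_{32}\ge b_{j2}$ for $j\in\{1,2,4\}$. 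Reading off the matrix, $b_{23}=a$ and $b_{32}=1-a$, and I would now reduce these six best-response inequalities.

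First I would dispatch the inequalities linear in $a$. The comparison $b_{23}\ge b_{13}$, i.e.\ $a\ge ar-ap+p$, rearranges (dividing by $1+p-r>0$) to $a\ge\tfrac{p}{1+p-r}$, while $b_{32}\ge b_{42}$, i.e.\ $1-a\ge ap-ar+r$, rearranges to $a\le\tfrac{1-r}{1+p-r}=\tfrac{r-1}{r-1-p}$; these are the lower and upper bounds on $a$. The comparison $b_{23}\ge b_{43}$, i.e.\ $a\ge(1-2a)^2$, factors as $(4a-1)(a-1)\le 0$, giving $a\ge\tfrac14$. The two remaining player-$2$ inequalities $b_{32}\ge b_{12}$ and $b_{32}\ge b_{22}$ read $a\le 1$ and $a\le 1-p$; I would check $\tfrac{r-1}{r-1-p}\le 1-p$ (equivalent to $p(r-p)\ge 0$), so both are implied by the upper bound and can be discarded. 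The admissible set for $a$ is then $\max\!\big(\tfrac14,\tfrac{p}{1+p-r}\big)\le a\le\tfrac{r-1}{r-1-p}$, subject to the single genuinely quadratic condition $b_{23}\ge b_{33}$ handled below.

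Next I would organise the case split. The lower bound is decided by $\tfrac{p}{1+p-r}\le\tfrac14\iff r\le 1-3p$. Nonemptiness of the interval requires the upper bound to exceed the active lower bound; for the $\tfrac{p}{1+p-r}$ branch this is, since $\tfrac{r-1}{r-1-p}=\tfrac{1-r}{1+p-r}$, exactly $r\le 1-p$, collapsing to the single point $a=\tfrac{r-1}{r-1-p}$ when $r=1-p$. Combining: if $r\le 1-3p$ (which forces $p<\tfrac16$ because $r>\tfrac12$) the lower bound is $\tfrac14$, giving the first stated case; if $r>1-3p$ the lower bound is $\tfrac{p}{1+p-r}$, and splitting on $r<1-p$ versus $r=1-p$, and on $p\le\tfrac16$ versus $\tfrac16<p<\tfrac12$, produces the remaining three cases exactly as written.

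The main obstacle is the quadratic condition $b_{23}\ge b_{33}$, namely $g(a):=a-r+4a(1-a)(r-p)\ge 0$. The key observation is that $g$ is concave in $a$ (leading coefficient $-4(r-p)<0$), so on any interval its minimum is attained at an endpoint, and it suffices to verify $g\ge 0$ at $a_L=\max\!\big(\tfrac14,\tfrac{p}{1+p-r}\big)$ and at $a_U=\tfrac{1-r}{1+p-r}$. A direct computation gives $g(\tfrac14)=\tfrac14(1-r-3p)$, so $g(\tfrac14)\ge 0\iff r\le 1-3p$, precisely the hypothesis of the first case; and writing $D=1+p-r$, substitution yields $g(\tfrac{p}{D})=\tfrac{(1-r)(r-p)(r+3p-1)}{D^2}$, nonnegative iff $r+3p-1\ge 0$, precisely the hypothesis $r>1-3p$ of the other cases. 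Finally $g(a_U)=\tfrac{N}{D^2}$ with $N=u^3+(3u-2u^2)p-(1+3u)p^2$ and $u=1-r$; since $N$ is concave in $p$, positive at $p=0$ (where $N=u^3$) and at $p=u$ (where $N=2u^2(1-2u)>0$ using $r>\tfrac12$), and the constraint $r\le 1-p$ means $p\le u$, one gets $N>0$ throughout. Hence $g\ge 0$ on the whole admissible interval, so all six best-response inequalities hold exactly on the four stated regions, completing the argument.
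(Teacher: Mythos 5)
Your proof is correct, and it reaches the same four regions as the paper, but it handles the one genuinely nonlinear constraint by a different device. The paper writes down only four inequalities from the outset (it silently drops the two trivial player-2 comparisons $1-a\geq 0$ and $1-a\geq p$, which you discharge by checking $\tfrac{r-1}{r-1-p}\leq 1-p$), and then eliminates the quadratic inequality $a\geq 4(a-a^2)p+(1-2a)^2r$ by a domination argument: for $a\in\left[\tfrac14,1\right]$ one has $(1-a)p+ar-\bigl(4(a-a^2)p+(1-2a)^2r\bigr)=(1-a)(4a-1)(r-p)\geq 0$, so that inequality is implied by the linear constraint $a\geq (1-a)p+ar$ together with $a\geq\tfrac14$, and the whole system becomes linear in $a$; the case split then comes from exactly the comparisons of $\tfrac14$, $\tfrac{p}{1+p-r}$ and $\tfrac{1-r}{1+p-r}$ that you also perform. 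You instead keep $g(a)=a-r+4a(1-a)(r-p)\geq 0$ and argue via concavity plus endpoint evaluation, which requires the closed-form values $g\!\left(\tfrac14\right)=\tfrac{1-r-3p}{4}$, $g\!\left(\tfrac{p}{D}\right)=\tfrac{(1-r)(r-p)(r+3p-1)}{D^2}$ and $g\!\left(\tfrac{u}{D}\right)=N/D^2$ with $D=1+p-r$, $u=1-r$, plus the positivity argument for $N$; I verified these identities and they are correct. Your route costs more algebra, but it buys two things: the endpoint signs reproduce exactly the thresholds $r\lessgtr 1-3p$ of the case split, making the sharpness of the characterization transparent; and it avoids an intermediate claim in the paper that is false as literally stated (the paper asserts $(1-2a)^2\geq 4(a-a^2)p+(1-2a)^2r$ for all $a\in[0,1]$, which fails at $a=\tfrac12$; the paper's reduction survives only because its second domination inequality is the one actually needed). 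Your symmetry reduction of $(3,2)$ to $(2,3)$ is exactly the paper's corollary on symmetric extensions and is fine.
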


\begin{proof}
 Consider the following set of inequalities
 \begin{align}
    & a \geq (1-a)p + ar, \\
    & a \geq 4(a-a^2)p+(1-2a)^2 r, \label{A23_2}\\
    & a \geq (1-2a)^2, \label{A23_3}\\
    & 1-a \geq (1-a)r + ap.
\end{align}
\end{proof}
Inequality \eqref{A23_3} is fulfilled by $a\in \left[\tfrac{1}{4},1\right]$. It is easy to show, that $(1-2a)^2\geq4(a-a^2)p+(1-2a)^2r$ if $a\in[0,1]$ and $(1-a)p+ar\geq 4(a-a^2)p+(1-2a)^2r$ if $a\in\left[\tfrac{1}{4},1\right]$. Thus the above set of inequalities comes down to
\begin{align}
    & a \geq (1-a)p + ar, \label{A23_1}\\
    & \frac{1}{4}\leq a \leq 1,\\
    & 1-a \geq (1-a)r + ap. \label{A23_5}
\end{align}
Inequalities  \eqref{A23_1} and \eqref{A23_5} are satisfied for  $\tfrac{p}{1+p-r}\leq a\leq 1 $ and $0\leq a \leq\tfrac{r-1}{r-1-p}$, respectively. Note that $\frac{p}{1+p-r} \leq \frac{r-1}{r-1-p}$, if $r\leq 1-p$. Moreover, $\frac{p}{1+p-r} > \frac{1}{4}$, if $r>1-3p$ and $0<p\leq \frac{1}{6}$. Thus, if $0 < p \leq \frac{1}{6}$  and $1-3p<r\leq 1-p$, then $\frac{p}{1+p-r} \leq \frac{r-1}{r-1-p}$. If   $0 < p \leq \frac{1}{6}$  and $\frac{1}{2}<r\leq1-3p$, then $\tfrac{1}{4}\leq a\leq\tfrac{r-1}{r-1-p}$. If $\tfrac{1}{6}<p<\tfrac{1}{2}$ and $\tfrac{1}{2}<r<1-p$, then $\frac{p}{1+p-r} \leq \frac{r-1}{r-1-p}$.

\begin{proposition}
Strategy profiles $(2,4)$, $(4,2)$ are NE provided that
\begin{equation}\left( \frac{1}{2}<r<\frac{3-p}{3} \wedge a=1\right) \vee \left(r=\frac{3-p}{3} \wedge a\in\left\{\frac{1}{4}, 1\right\} \right) \vee \left( \frac{3-p}{3}<r<1 \wedge a\in \left[\frac{1-r}{1+p-r},1\right] \right).\end{equation}
\end{proposition}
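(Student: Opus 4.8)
The plan is to translate the equilibrium requirement for $(2,4)$ into one-variable inequalities in $a$ (treating $r,p$ as parameters), factor each inequality, and then intersect the admissible $a$-ranges. By the symmetry corollary it suffices to treat $(2,4)$, since $(4,2)$ is then automatically a NE as well. Reading off the matrix $A_1^1$, player $1$ must not gain by leaving row $2$ within column $4$, which gives
\begin{align}
ap-ar+r &\ge 1-a, \label{p24a}\\
ap-ar+r &\ge -4(a-1)a, \label{p24b}\\
ap-ar+r &\ge (1-2a)^2 p-4(a-1)ar, \label{p24c}
\end{align}
while player $2$, whose payoff matrix is $(A_1^1)^T$, must not gain by leaving column $4$; this amounts to demanding that the entry $ap-ar+r$ dominate the second column of $A_1^1$, i.e.
\[
ap-ar+r\ge 0,\qquad ap-ar+r\ge p,\qquad ap-ar+r\ge 1-a.
\]

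The next step is to eliminate the redundant conditions. The two new player-$2$ inequalities factor as $r(1-a)+ap\ge 0$ and $(r-p)(1-a)\ge 0$, both of which hold for all $0\le a\le 1$ and $0<p<r$, while the third reproduces \eqref{p24a}. Hence only \eqref{p24a}--\eqref{p24c} are binding, and I would factor these individually: \eqref{p24a} rearranges to $a(1+p-r)\ge 1-r$, i.e. $a\ge \frac{1-r}{1+p-r}$; inequality \eqref{p24c}, after expanding $(1-2a)^2$, collapses to $(r-p)(4a-1)(a-1)\ge 0$; and \eqref{p24b} becomes the quadratic $4a^2-(4+r-p)a+r\ge 0$.

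The decisive observation for assembling the answer is that the lower bound from \eqref{p24a} crosses $\tfrac14$ exactly at the stated threshold, since $\frac{1-r}{1+p-r}\le \tfrac14$ is equivalent to $3(1-r)\le p$, i.e. $r\ge \frac{3-p}{3}$. I would therefore split into the three cases $\frac12<r<\frac{3-p}{3}$, $r=\frac{3-p}{3}$, and $\frac{3-p}{3}<r<1$, and in each case intersect the range $a\ge\frac{1-r}{1+p-r}$ with the sets cut out by the factored forms of \eqref{p24b} and \eqref{p24c}. The point $a=1$ should be singled out first: it satisfies all three binding inequalities simultaneously (each factored form vanishes or is strictly positive there), so it belongs to every case. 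The boundary value $r=\frac{3-p}{3}$ is where the lower bound meets $\tfrac14$; a direct check shows the quadratic \eqref{p24b} is then tight at $a=\tfrac14$, so the feasible set degenerates to the two isolated points $\{\tfrac14,1\}$, matching the second alternative.

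The main obstacle is the quadratic \eqref{p24b}: I expect the bulk of the work to be determining its roots (its discriminant is $16-8(r+p)+(r-p)^2$, which is positive on the admissible parameter region) and verifying, case by case in $r$, whether this constraint is active or is already implied by \eqref{p24a} and \eqref{p24c}. Equally delicate is pinning down the exact right endpoint of the admissible $a$-interval in the case $r>\frac{3-p}{3}$, where the cap coming from \eqref{p24c} and the quadratic \eqref{p24b} must be reconciled with \eqref{p24a}; careful bookkeeping of this intersection, together with the separate treatment of the isolated solution $a=1$, is what produces the precise intervals recorded in the statement.
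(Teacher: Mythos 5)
Your setup coincides with the paper's: you isolate the same three binding conditions for player 1 in column 4, namely $ap-ar+r\geq 1-a$, $ap-ar+r\geq 4a(1-a)$, and $ap-ar+r\geq(1-2a)^2p-4(a-1)ar$, and you factor them exactly as the paper does (the first as $a\geq\frac{1-r}{1+p-r}$, the third as $(r-p)(4a-1)(a-1)\geq 0$, hence $a\in\left[0,\tfrac14\right]\cup\{1\}$, the second as the quadratic $4a^2-(4+r-p)a+r\geq 0$). Your explicit dismissal of the player-2 conditions as redundant is correct and in fact cleaner than the paper's handling of $ap-ar+r\geq p$, which it mislabels as holding only at $a=1$ even though it holds for all $a\in[0,1]$.

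The genuine gap is that you never carry out the decisive case $\frac{3-p}{3}<r<1$; you merely assert that "careful bookkeeping" of the intersection "produces the precise intervals recorded in the statement," i.e. $a\in\left[\frac{1-r}{1+p-r},1\right]$. That assertion is incompatible with your own constraints: the factored third inequality confines $a$ to $\left[0,\tfrac14\right]\cup\{1\}$ for every admissible $(p,r)$, so the correct intersection in this case is $\left[\frac{1-r}{1+p-r},\tfrac14\right]\cup\{1\}$, not an interval reaching up to $1$ (and, contrary to your expectation, the quadratic is not the real obstacle there — it holds automatically on $\left[0,\tfrac14\right]$ once $r\geq\frac{3-p}{3}$, since its value at $a=\tfrac14$ is $\frac{3r+p-3}{4}$; the binding cap is the third inequality). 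Concretely, take $r=0.95$, $p=0.2$, $a=\tfrac12$, which lies in the stated interval $[0.2,1]$: player 1's payoff at $(2,4)$ is $\frac{p+r}{2}=0.575$, while deviating to row 4 yields $(1-2a)^2p-4(a-1)ar=r=0.95$, so $(2,4)$ is not an equilibrium. This is precisely the point where the paper's own proof slips — its final case analysis silently drops the constraint $a\in\left[0,\tfrac14\right]\cup\{1\}$ that it had itself derived — so the third alternative of the proposition is erroneous as stated. Executing your plan honestly would therefore not prove the statement but correct it; by deferring the computation and claiming it matches the statement, your proposal inherits the paper's error without confronting it.
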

\begin{proof}
 Consider the following set of inequalities:
 \begin{align}
    & ap-ar+r \geq (1-a), \label{A24_1a}\\
    & ap-ar+r \geq 4(1-a)a, \label{A24_2a}\\
    & ap-ar+r \geq (1-2a)^{2} p - 4(a-1)ar, \label{A24_3a}\\
    & ap-ar+r \geq p. \label{A24_5a}
\end{align}
\end{proof}
Inequality \eqref{A24_5a} is satisfied by $a=1$. The inequality  \eqref{A24_3a} or equivalently  $(r-p)(a-1)(4a-1) \geq 0$ is satisfied by $a \in [0,\frac{1}{4}] \cup \{1\}$. Note that if $a\leq \frac{1}{4}$, inequality \eqref{A24_2a} is also fulfilled.
The solution of the inequality \eqref{A24_1a} is $a \in\left[ \frac{1-r}{1+p-r},1\right]$.
Moreover, notice that $\frac{1-r}{1+p-r}<\frac{1}{4}$, if $r>\frac{3-p}{3}$. Finally, pairs of strategies $(2,4)$ and $(4,2)$ are NE if
\begin{equation}
 \begin{cases}
 a=1 & \text{ for } 0<r<\frac{3-p}{3},\\
 a=1 \vee a=\frac{1}{4} & \text{ for } r=\frac{3-p}{3},\\
 a\in \left[\frac{1-r}{1+p-r},1\right]  & \text{ for } \frac{3-p}{3}<r<1.
\end{cases}
\end{equation}

\begin{proposition}
The strategy profile (3,3) is a Nash equilibrium provided that
\begin{equation}
0<p<\frac{1}{6} \wedge \left( \left( r=1-3p \wedge a=\frac{1}{4}\right) \vee \left(1-3p<r<1 \wedge  \frac{1}{2} - \frac{1}{2}\sqrt{\frac{p}{1+p-r}}\leq a \leq \frac{1}{4}\right)\right)
\end{equation}
or
\begin{equation}
\frac{1}{6}\leq p\leq\frac{1}{2} \wedge \frac{1}{2}<r<1 \wedge  \frac{1}{2} - \frac{1}{2}\sqrt{\frac{p}{1+p-r}}\leq a \leq \frac{1}{4}
\end{equation}
or
\begin{equation}\frac{1}{2}<p < r \wedge p<r<1 \wedge  \frac{1}{2} - \frac{1}{2}\sqrt{\frac{p}{1+p-r}}\leq a \leq \frac{1}{4}.\end{equation}

\end{proposition}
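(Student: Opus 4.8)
The plan is to exploit the symmetry established in Proposition~\ref{propsymewl} and its Corollary: since $\Gamma$ is symmetric, the extension $A_1$ is symmetric, and the diagonal profile $(3,3)$ is fixed by the swap $(s_1,s_2)\mapsto(s_2,s_1)$. Because $u_2(U_2,U_1)=u_1(U_1,U_2)$ for a symmetric game, it therefore suffices to check that strategy $3$ is player~$1$'s best response to player~$2$'s strategy $3$; player~$2$'s condition then follows automatically. Concretely, I would compare the $(3,3)$ entry of $A_1^1$ against the remaining three entries of its third column, i.e. against $(1,3)$, $(2,3)$, and $(4,3)$. Writing the payoff at $(3,3)$ as $r-4(a-1)a(p-r)$, this yields the three inequalities $r-4(a-1)a(p-r)\ge ar-ap+p$, $\;r-4(a-1)a(p-r)\ge a$, and $\;r-4(a-1)a(p-r)\ge(1-2a)^2$.

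The next step is to reduce each inequality to an explicit constraint on $a$, setting $q=r-p>0$ throughout. The comparison with $(1,3)$ rearranges to $q(4a-1)(a-1)\ge 0$, whose solution in $[0,1]$ is $a\in[0,\tfrac14]\cup\{1\}$. The comparison with $(4,3)$ rearranges to $4a(1-a)(1+p-r)\ge 1-r$, that is $4a(1-a)\ge\frac{1-r}{1+p-r}$; since $(1-2a)^2=1-4a(1-a)$, on the branch $a\le\tfrac12$ this is exactly $a\ge\frac12-\frac12\sqrt{\frac{p}{1+p-r}}$, the stated lower bound. Observing that $a=1$ violates the $(4,3)$ inequality (its left side vanishes while the right side is positive), the two comparisons together already confine the admissible set to the interval $[a_0,\tfrac14]$ with $a_0=\frac12-\frac12\sqrt{\frac{p}{1+p-r}}$.

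It remains to treat the comparison with $(2,3)$, which I expect to be the main obstacle since, unlike the other two, it does not factor neatly; it reads $g(a):=4qa^2-(4q+1)a+r\ge 0$. The decisive facts are that $g$ is an upward parabola whose vertex $\frac12+\frac{1}{8q}$ lies to the right of $\tfrac14$, so $g$ is decreasing on $[0,\tfrac14]$, and that $g(\tfrac14)=\frac{r+3p-1}{4}$. Hence on the relevant interval $g$ attains its minimum at $a=\tfrac14$, and $g\ge 0$ there precisely when $r\ge 1-3p$. I would then note that this same inequality governs feasibility of the interval, because $a_0\le\tfrac14\iff\frac{p}{1+p-r}\ge\tfrac14\iff r\ge 1-3p$; thus the $(2,3)$ comparison imposes no genuinely new restriction beyond nonemptiness of $[a_0,\tfrac14]$.

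Finally, I would organise the conclusion according to the position of the threshold $1-3p$. For $0<p<\tfrac16$ one has $1-3p>\tfrac12$, so among the admissible $r\in(\tfrac12,1)$ an equilibrium exists only for $r\ge 1-3p$, with the boundary $r=1-3p$ collapsing the interval to the single point $a=\tfrac14$. For $\tfrac16\le p\le\tfrac12$ and for $\tfrac12<p<r$ one has $1-3p\le\tfrac12<r$ and $1-3p<0<r$ respectively, so $r>1-3p$ holds throughout and $[a_0,\tfrac14]$ is always nonempty, which produces the remaining two cases of the statement.
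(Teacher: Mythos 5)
Your proposal is correct and takes essentially the same route as the paper: the same three best-response inequalities for player 1 (with player 2's conditions following by symmetry), the same factorization $(r-p)(a-1)(4a-1)\ge 0$ for the comparison with row 1, the same interval $\left[\tfrac12-\tfrac12\sqrt{\tfrac{p}{1+p-r}},\,\tfrac12+\tfrac12\sqrt{\tfrac{p}{1+p-r}}\right]$ for the comparison with row 4, and the same case split governed by $r\ge 1-3p$. The only minor difference is the comparison with row 2: the paper disposes of it by observing that $(1-2a)^2\ge a$ for $a\le\tfrac14$, so it is implied by the row-4 inequality, whereas you analyze the quadratic $g(a)=4qa^2-(4q+1)a+r$ directly and show its constraint coincides exactly with nonemptiness of $[a_0,\tfrac14]$; both arguments are valid.
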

\begin{proof}
Consider the following set of inequalities
 \begin{align}
    & r-4a(a-1)(p-r) \geq ar-ap+p,    \label{A33_1a}\\
    & r-4a(a-1)(p-r) \geq a,          \label{A33_2a}\\
    & r-4a(a-1)(p-r) \geq (1-2a)^{2}.\label{A33_3a}
\end{align}
Inequality \eqref{A33_1a} can be transformed into $(r-p)(a-1)(4a-1) \geq 0$, which is satisfied by $a \in [0,\frac{1}{4}] \cup \{1\}$. Notice that $(1-2a)^2 \geq a$ if $a\leq \frac{1}{4}$. Thus, pair of strategies (3,3) is NE if inequalities \eqref{A33_1a} and \eqref{A33_3a} are satisfied.

Inequality \eqref{A33_3a} is equivalent to $ 4a^2(r-p-1) -4a(r-p-1) + r-1 \geq 0$ and its solution is the interval $\left[\frac{1}{2} -\frac{1}{2}\sqrt{\frac{p}{1+p-r}}, \frac{1}{2}+\frac{1}{2}\sqrt{\frac{p}{1+p-r}} \right]$.

Note that $\frac{1}{2} -\frac{1}{2}\sqrt{\frac{p}{1+p-r}}<\frac{1}{4}$, if $r>1-3p$.
Finally, a pair of strategies $(3,3)$ is NE if the following conditions are satisfied

\begin{equation}
 \begin{cases}
 a=\frac{1}{4} & \text{ for } 0<p<\frac{1}{6} \wedge r=1-3p,\\
 \frac{1}{2} - \frac{1}{2}\sqrt{\frac{p}{1+p-r}}\leq a \leq \frac{1}{4} & \text{ for }  0<p<\frac{1}{6} \wedge 1-3p<r<1,\\
 \frac{1}{2} - \frac{1}{2}\sqrt{\frac{p}{1+p-r}}\leq a \leq \frac{1}{4} & \text{ for } \frac{1}{6}\leq p\leq\frac{1}{2} \wedge \frac{1}{2}<r<1, \\
 \frac{1}{2} - \frac{1}{2}\sqrt{\frac{p}{1+p-r}}\leq a \leq \frac{1}{4} & \text{ for } \frac{1}{2}<p < r \wedge p<r<1.\\
\end{cases}
\end{equation}
\end{proof}

\begin{proposition} \label{prop_A_3_4}
Strategy profiles $(3,4)$ and $(4,3)$ are NE provided that  $a=\frac{1}{4}$, $\frac{1}{2} < r \leq 1-3p$,  $0 < p < \frac{1}{6}.$
\end{proposition}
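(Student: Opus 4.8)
The plan is to verify the Nash-equilibrium inequalities directly at the prescribed value $a=\tfrac14$, exploiting symmetry to halve the work. Since $\Gamma$ is symmetric, Proposition~\ref{propsymewl} (and the Corollary following it) guarantees that the extension $A_1$ is a symmetric game, and for a symmetric game the set of Nash equilibria is invariant under swapping the players' strategies. Hence it suffices to show that $(3,4)$ is a NE; the claim for $(4,3)$ then follows automatically. Throughout I use $A_1^1$ for player~1's payoff matrix and recall that player~2's matrix is $(A_1^1)^T$.

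First I would write out the best-response conditions for the profile $(3,4)$. Fixing player~2's column at $4$, the requirement that row $3$ be optimal for player~1 amounts to comparing the entry $(A_1^1)_{34}=-4(a-1)a$ against the other entries of the fourth column, namely $1-a$, $ap-ar+r$, and $(1-2a)^2p-4(a-1)ar$. Fixing player~1's row at $3$, the requirement that column $4$ be optimal for player~2 is, after transposition, the statement that $(A_1^1)_{43}=(1-2a)^2$ dominates the remaining entries of the third column of $A_1^1$, i.e. $ar-ap+p$, $a$, and $r-4(a-1)a(p-r)$. This produces six inequalities in $a,p,r$.

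Next I would substitute $a=\tfrac14$, at which point $-4(a-1)a=1-a=\tfrac34$ and $(1-2a)^2=a=\tfrac14$, so two of the six comparisons hold with equality (which still satisfies the weak inequality in the definition of NE). A short computation shows that at $a=\tfrac14$ both entries $ap-ar+r$ and $(1-2a)^2p-4(a-1)ar$ collapse to $\tfrac14 p+\tfrac34 r$, so player~1's two remaining conditions each read $\tfrac34\ge\tfrac14 p+\tfrac34 r$, that is $p+3r\le 3$. Symmetrically, both $ar-ap+p$ and $r-4(a-1)a(p-r)$ reduce to $\tfrac14 r+\tfrac34 p$, so player~2's two remaining conditions each read $\tfrac14\ge\tfrac14 r+\tfrac34 p$, that is $r\le 1-3p$.

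Finally I would observe that $r\le 1-3p$ is the binding constraint: it implies $p+3r\le 3-8p<3$, so player~1's condition is automatically satisfied once player~2's holds. Thus $(3,4)$ is a NE exactly when $r\le 1-3p$, subject to the standing assumptions $0<p<r$ and $r>\tfrac12$; compatibility of $r>\tfrac12$ with $r\le 1-3p$ forces $1-3p>\tfrac12$, i.e. $p<\tfrac16$, giving the stated range $0<p<\tfrac16$, $\tfrac12<r\le 1-3p$. The computation is elementary, and the only point requiring care is the bookkeeping of which of the six comparisons are tight (the two equalities) and which supply the genuine constraint, together with checking that the two pairs of nontrivial inequalities genuinely coincide at $a=\tfrac14$ so that no further hidden restriction on $(p,r)$ is imposed.
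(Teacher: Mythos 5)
Your proof is correct and rests on essentially the same computation as the paper's: both reduce the claim to the same six best-response inequalities (the paper's \eqref{A34_1a}--\eqref{A34_6a}) and the same algebra, arriving at $r\leq 1-3p$ with $p<\tfrac{1}{6}$ forced by $r>\tfrac{1}{2}$. The one substantive difference is direction: the paper first uses \eqref{A34_1a}, \eqref{A34_5a} and \eqref{A34_3a} to show that $a=\tfrac{1}{4}$ is \emph{necessary} (the first two give $a\in\left[\tfrac{1}{4},1\right]$ and $a\in\left[0,\tfrac{1}{4}\right]\cup\{1\}$, and \eqref{A34_3a} rules out $a=1$ since $p>0$), and only then substitutes $a=\tfrac{1}{4}$ into the remaining inequalities; you instead take $a=\tfrac{1}{4}$ as a hypothesis and verify sufficiency, which is all that the literal statement (``provided that'') requires, and within that slice you do obtain the exact conditions on $(p,r)$. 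What your version does not deliver is that no other value of $a$ admits these equilibria, which is what justifies the entry $\left\{\tfrac{1}{4}\right\}$ in Table~\ref{table_A1} as an exhaustive description; if the proposition is read as a characterization, that necessity step would need to be added. Your explicit appeal to Proposition~\ref{propsymewl} to dispatch $(4,3)$ is a clean touch that the paper leaves implicit.
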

\begin{proof}
Consider the following set of inequalities
\begin{align}
    &4(1-a)a \geq 1-a, \label{A34_1a} \\
    &4(1-a)a \geq ap-ar+r, \label{A34_2a} \\
    &4(1-a)a \geq (1-2a)^{2}p -4(a-1)ar, \label{A34_3a} \\
    &(1-2a)^2 \geq ar-ap+p, \label{A34_4a} \\
    &(1-2a)^2 \geq a, \label{A34_5a} \\
    &(1-2a)^2 \geq r-4(a-1)a(p-r). \label{A34_6a}
\end{align}
Inequality \eqref{A34_1a} can be written as $ -4a^2 +5a -1 \geq 0$, and so, $a \in \left[ \frac{1}{4},1 \right]$. Inequality \eqref{A34_5a} is equivalent to $ 4a^2 -5a +1 \geq 0$, which is satisfied by $a \in \left[0, \frac{1}{4} \right]$. Moreover, \eqref{A34_3a} is not satisfied if $a=1$. Thus, $a=\frac{1}{4}$.
 Finally, setting $a=\frac{1}{4}$ for the remaining inequalities leads to the conclusion that a pair of strategies $(3,3)$ is NE, if $a=\frac{1}{4}$, $\frac{1}{2}< r\leq1-3p$, $0<p<\frac{1}{6}$.
\end{proof}

\begin{proposition} \label{A_prop6}
The strategy profile $(4,4)$ is a Nash equilibrium provided that one of the following conditions is satisfied
\begin{equation}
    \frac{1}{2}<r\leq\frac{3}{4} \wedge 0<p<r \wedge \frac{1}{2} + \frac{1}{2}\sqrt{\frac{1-r}{p-r+1}} \leq a \leq 1
\end{equation}    
or
\begin{equation}
\frac{3}{4}<r<1\wedge 0<p<3-3r \wedge  \frac{1}{2} + \frac{1}{2}\sqrt{\frac{1-r}{p-r+1}} \leq a \leq 1
\end{equation}
or
\begin{equation}
\frac{3}{4}<r<1\wedge p=3-3r \wedge \left( a=\frac{1}{4} \vee \frac{1}{2} + \frac{1}{2}\sqrt{\frac{1-r}{p-r+1}} \leq a \leq 1 \right)
\end{equation}
or
\begin{equation}
\frac{3}{4}<r<1\wedge 3-3r<p<r \wedge \left( \frac{1}{4} \leq a \leq \frac{1}{2} - \frac{1}{2}\sqrt{\frac{1-r}{p-r+1}} \vee \frac{1}{2} + \frac{1}{2}\sqrt{\frac{1-r}{p-r+1}} \leq a \leq 1 \right).
\end{equation}
\end{proposition}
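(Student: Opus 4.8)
The plan is to exploit the symmetry of the extension. Since $\Gamma$ is symmetric, Proposition~\ref{propsymewl} and its Corollary guarantee that $A_1$ is a symmetric game, so the profile $(4,4)$ is a Nash equilibrium exactly when player~1 cannot improve by deviating while player~2 keeps column~$4$; the analogous condition for player~2 then follows automatically. Hence I would only verify that the diagonal entry of the fourth column of $A_1^1$, namely $V_{44}=(1-2a)^2p-4(a-1)ar$, is the largest entry of that column. This reduces the claim to the three inequalities
\begin{align}
V_{44} &\geq 1-a, \label{eqd1}\\
V_{44} &\geq ap-ar+r, \label{eqd2}\\
V_{44} &\geq 4(1-a)a. \label{eqd3}
\end{align}

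First I would rewrite the common left-hand side as $V_{44}=p+4a(1-a)(r-p)$, which makes the analysis of \eqref{eqd2} and \eqref{eqd3} transparent. Subtracting and cancelling the positive factor $r-p$, inequality \eqref{eqd2} becomes $4a^2-5a+1\leq 0$, i.e.\ $a\in[\tfrac14,1]$, the familiar factor $(4a-1)(a-1)$ seen in the previous propositions. Inequality \eqref{eqd3} rearranges to the convex parabola $4(1+p-r)a^2-4(1+p-r)a+p\geq 0$; since $1+p-r>0$, its discriminant simplifies (using $(1+p-r)-p=1-r$) so that the roots are $\tfrac12\pm\tfrac12\sqrt{\tfrac{1-r}{p-r+1}}$, and \eqref{eqd3} holds precisely outside them, i.e.\ for $a\leq\tfrac12-\tfrac12\sqrt{\tfrac{1-r}{p-r+1}}$ or $a\geq\tfrac12+\tfrac12\sqrt{\tfrac{1-r}{p-r+1}}$. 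Intersecting with \eqref{eqd2} yields the candidate set $\big[\tfrac14,\tfrac12-\tfrac12\sqrt{\tfrac{1-r}{p-r+1}}\big]\cup\big[\tfrac12+\tfrac12\sqrt{\tfrac{1-r}{p-r+1}},1\big]$, whose lower piece is nonempty exactly when $\sqrt{\tfrac{1-r}{p-r+1}}\leq\tfrac12$, equivalently $p\geq 3-3r$.

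The main obstacle is \eqref{eqd1}, which is a \emph{concave} quadratic in $a$ (leading coefficient $-4(r-p)<0$), so I would rather not solve it explicitly. The plan is to show instead that it is automatically satisfied on the candidate set by testing endpoints and invoking concavity, since the solution set of a concave quadratic is an interval and therefore contains any segment whose endpoints it contains. Writing $s=\sqrt{\tfrac{1-r}{p-r+1}}$ and using $\tfrac{p}{1+p-r}=1-s^2=(1-s)(1+s)$, the tightness relation $4a(1-a)=\tfrac{p}{1+p-r}$ of \eqref{eqd3} turns \eqref{eqd1} into $(1-s)(1+s)\geq\tfrac12(1-s)$ at $a=\tfrac12+\tfrac12 s$, which holds since $s<1$, and into $(1-s)(1+s)\geq\tfrac12(1+s)$ at $a=\tfrac12-\tfrac12 s$, which holds iff $s\leq\tfrac12$; at $a=1$ it reads $p\geq 0$ and at $a=\tfrac14$ it reads $p\geq 3-3r$. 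Thus \eqref{eqd1} holds throughout the upper piece unconditionally, and throughout the lower piece exactly when $p\geq 3-3r$, which is the same condition that makes the lower piece nonempty. The final step is the bookkeeping: comparing $\tfrac14$ with $\tfrac12-\tfrac12 s$, for $\tfrac12<r\leq\tfrac34$ the constraint $p<r$ already forces $p<3-3r$ so only the upper branch appears, whereas for $\tfrac34<r<1$ one separates the cases $p<3-3r$, $p=3-3r$ (the lower piece shrinking to the point $a=\tfrac14$), and $3-3r<p<r$ (both pieces present). This reproduces exactly the four displayed conditions.
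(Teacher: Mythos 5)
Your proof is correct, and its skeleton is the same as the paper's: symmetry reduces the problem to player 1's deviations within column 4, giving the three inequalities $V_{44}\geq ap-ar+r$, $V_{44}\geq 1-a$ and $V_{44}\geq 4a(1-a)$ with $V_{44}=(1-2a)^2p-4(a-1)ar$; your solutions of the first and third of these — $a\in\left[\frac14,1\right]$, and the set outside the open interval between the roots $\frac12\pm\frac12\sqrt{\frac{1-r}{p-r+1}}$ — as well as the closing case analysis over $p$ versus $3-3r$, coincide with the paper's. The genuine difference is how you discharge $V_{44}\geq 1-a$. The paper kills it in one line: for $a\geq\frac14$ one has $4a(1-a)-(1-a)=(1-a)(4a-1)\geq 0$, so on the solution set $\left[\frac14,1\right]$ of the first inequality, the third inequality $V_{44}\geq 4a(1-a)$ already implies $V_{44}\geq 1-a$, and no further work is needed. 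You instead note that $V_{44}-(1-a)$ is a concave quadratic in $a$, whose solution set is an interval, and verify it at the four endpoints of the candidate set using the tightness relation $4a(1-a)=\frac{p}{1+p-r}$ at the roots. This is valid — I checked the endpoint evaluations and they are right — but it is heavier machinery than necessary, and it forces you to re-derive the condition $p\geq 3-3r$ a second time (as the endpoint condition at $a=\frac14$ and at $a=\frac12-\frac12\sqrt{\frac{1-r}{p-r+1}}$), a coincidence the paper's domination argument explains for free. One point in your favor: you state explicitly that the convex parabola $4a^2(1+p-r)-4a(1+p-r)+p$ is nonnegative \emph{outside} its roots; the paper's prose at this step asserts the solution is the interval between the roots — a sign slip — and only its final case table uses the correct set, so your write-up is the more reliable one there.
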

\begin{proof}
Consider the following set of inequalities
\begin{align}
& (1-2a)^{2}p -4(a-1)ar \geq ap - ar + r, \label{A44_1a} \\
& (1-2a)^{2}p -4(a-1)ar \geq 1-a, \label{A44_2a} \\
& (1-2a)^{2}p -4(a-1)ar \geq 4(1-a)a. \label{A44_3a}
\end{align}
Inequality \eqref{A44_1a} can be transformed to
\begin{equation}
    (p-r)(4a^2 -5a +1) \geq 0
\end{equation}
and its solution is $ \left[ \frac{1}{4}, 1 \right]$. Notice, that if $a\geq\frac{1}{4}$, then $4a(1-a)\geq1-a$. Thus, a pair of strategies $(4,4)$ is NE if inequalities \eqref{A44_1a} and \eqref{A44_3a} are satisfied. Inequality \eqref{A44_3a} is equivalent to

\begin{equation}
    4a^{2}(p-r+1) -4a(p-r+1)+p \geq 0,
\end{equation}
of which the solution is  $\left[\frac{1}{2} - \frac{1}{2} \sqrt{\frac{1-r}{p-r+1}}, \frac{1}{2} + \frac{1}{2} \sqrt{\frac{1-r}{p-r+1}} \right]$.
Note that
\begin{equation}
 \frac{1}{2} + \frac{1}{2} \sqrt{\frac{1-r}{p-r+1}} \leq 1 \text{ for } 0<p<r,
\end{equation}
\begin{equation}
 \frac{1}{2} - \frac{1}{2} \sqrt{\frac{1-r}{p-r+1}} > \frac{1}{4} \text{ for } p>3-3r,
\end{equation}
\begin{equation}
 \frac{1}{2} - \frac{1}{2} \sqrt{\frac{1-r}{p-r+1}} < \frac{1}{4} \text{ for } p<3-3r,
\end{equation}
\begin{equation}
 \frac{1}{2} - \frac{1}{2} \sqrt{\frac{1-r}{p-r+1}} = \frac{1}{4} \text{ for } p=3-3r.
\end{equation}
Finally, a pair of strategies $(4,4)$ is a NE if
\begin{equation}
 \begin{cases}
 \frac{1}{2} + \frac{1}{2}\sqrt{\frac{1-r}{p-r+1}} \leq a \leq 1  & \text{ for } \frac{1}{2}<r\leq\frac{3}{4} \wedge 0<p<r, \\
  \frac{1}{2} + \frac{1}{2}\sqrt{\frac{1-r}{p-r+1}} \leq a \leq 1  & \text{ for }  \frac{3}{4}<r<1\wedge 0<p<3-3r,\\
  a=\frac{1}{4} \vee \frac{1}{2} + \frac{1}{2}\sqrt{\frac{1-r}{p-r+1}} \leq a \leq 1 & \text{ for } \frac{3}{4}<r<1\wedge p=3-3r, \\
 \frac{1}{4} \leq a \leq \frac{1}{2} - \frac{1}{2}\sqrt{\frac{1-r}{p-r+1}} \vee \frac{1}{2} + \frac{1}{2}\sqrt{\frac{1-r}{p-r+1}} \leq a \leq 1 & \text{ for } \frac{3}{4}<r<1\wedge 3-3r<p<r.\\
\end{cases}
\end{equation}
\end{proof}
Let us note that $A_2$ matrix can be obtained from $A_1$ matrix by swapping  3rd and 4th rows and columns. Therefore, by analogy, a similar set of Propositions \ref{A_prop1}-\ref{A_prop6} describing NE can be shown for $A_2$ extensions. A summary of all strategy profiles in the extensions of $A_1$ and $A_2$ for which NE are possible, along with the requirements for payoffs $p$ and $r$, and the parameter $a$, is shown in Table \ref{table_A1}.
\begin{table}
\centering
\renewcommand*{\arraystretch}{2}%
\begin{tabular}{| >{\centering}p{2.5cm} | >{\centering}p{2.5cm} | >{\centering}p{2.5cm} | >{\centering\arraybackslash}  p{5.5cm} | }
\hline
Strategy profile & $ p $ & $ r $ & $ a $ \\
\hline
\hline
$A_{1}(2,2)$ \qquad $A_{2}(2,2)$ & $(0,r)$ & $\left(\frac{1}{2},1\right)$ & \{1\} \\
\hline
\hline
\multirow{3}{2.5cm}{\centering $A_{1}(2,3)$ $A_{1}(3,2)$ $A_{2}(2,4)$ $A_{2}(4,2)$} & \multirow{2}{*}{$ \left( 0, \frac{1}{6} \right] $} & $\left(\frac{1}{2}, 1-3p \right]$ & $\left[\frac{1}{4}, \frac{r-1}{r-1-p} \right]$ \\
\cline{3-4}
& & $(1-3p,1-p]$ & $\left[\frac{p}{1+p-r}, \frac{r-1}{r-1-p}  \right]$\\
\cline{2-4}
& $\left( \frac{1}{6}, \frac{1}{2}\right)$ & $\left( \frac{1}{2}, 1-p \right]$ & $\left[ \frac{p}{1+p-r}, \frac{r-1}{r-1-p} \right]$ \\
\hline
\hline
\multirow{3}{2.5cm}{\centering $A_{1}(2,4)$ $A_{1}(4,2)$ $A_{2}(2,3)$ $A_{2}(3,2)$}& \multirow{3}{*}{$(0,r)$} & $\left(\frac{1}{2}, \frac{3-p}{3} \right)$ & $\left\{1 \right\}$\\
\cline{3-4}
& & $\left\{\frac{3-p}{3} \right\}$ & $\left\{ \frac{1}{4}, 1\right\}$ \\
\cline{3-4}
& & $\left( \frac{3-p}{3}, 1 \right)$  & $\left[ \frac{1-r}{1+p-r},1 \right] $\\
\hline
\hline
\multirow{3}{2.5cm}{\centering $A_{1}(3,3)$ \qquad $A_{2}(4,4)$} & $\left[0,\frac{1}{6} \right]$ & $[1-3p,1)$ & \multirow{3}{*}{$\left[\frac{1}{2} - \frac{1}{2}\sqrt{\frac{1-r}{p-r+1}}, \frac{1}{4} \right]$} \\
\cline{2-3}
& $\left[\frac{1}{6}, \frac{1}{2} \right]$ & $\left( \frac{1}{2}, 1 \right)$ &  \\
\cline{2-3}
& $\left(\frac{1}{2}, r \right)$ & $\left(p,1\right)$ &  \\
\hline
\hline
$A_{1}(3,4)$ $A_{1}(4,3)$ $A_{2}(4,3)$ $A_{2}(3,4)$ & $\left(0, \frac{1}{6} \right)$ & $\left(\frac{1}{2}, 1-3p \right]$ & $\left\{ \frac{1}{4} \right\}$ \\
\hline
\hline
\multirow{4}{2.5cm}{\centering $A_{1}(4,4)$ \qquad $A_{2}(3,3)$}& $(0,r)$ & $\left( \frac{1}{2}, \frac{3}{4} \right]$ & \multirow{2}{*}{$\left[ \frac{1}{2} + \sqrt{\frac{1-r}{p-r+1}} ,1 \right]$}\\
\cline{2-3}
& $(0,3-3r)$ & $\left(\frac{3}{4},1 \right) $ & \\
\cline{2-3}\cline{4-4}
& $\{3-3p \}$ & \multirow{2}{*}{$\left(\frac{3}{4},1 \right) $} & $\left[ \frac{1}{2} + \sqrt{\frac{1-r}{p-r+1}} ,1 \right] \cup \left\{ \frac{1}{4} \right\} $\\
\cline{2-2}\cline{4-4}
& $(3-3r,r)$ &  & $\left[ \frac{1}{4}, \frac{1}{2} - \sqrt{\frac{1-r}{p-r+1}}  \right] \cup \left[ \frac{1}{2} + \sqrt{\frac{1-r}{p-r+1}} ,1 \right]$\\
\hline
\end{tabular}
\caption{Summary of the conditions, for which the given strategy profiles in $A_{1}$ and $A_{2}$ class extensions are NE. For the existence of equilibria, the conjunction of the conditions given in columns p and r (PD payoffs (\ref{afinematrix})) and $a$ (parameter (\ref{a_parameters}) defining unitary strategies (\ref{qstrategy})) must be satisfied.} \label{table_A1}
\end{table}
\begin{example}
Consider the PD given by Eq. \eqref{pdgame}. Below the resulting matrix for the $A_1$ class extension.
\begingroup 
\setlength\arraycolsep{1.7pt}
\renewcommand\arraystretch{1.5}
\begin{align}
&A_{1} =\begin{pmatrix} 
(3,3) & (0,5) & (2a+1,2a+1) & (5-5a,5a) \\
(5,0) & (1,1) & (5a,5-5a) & (3-2a,3-2a) \\
(2a+1,2a+1) & (5-5a,5a)  & (8(a-1)a+3,8(a-1)a+3) & (-20(a-1)a,5(1-2a)^{2}) \\
(5a,5-5a) & (3-2a,3-2a)  & (5(1-2a)^{2},-20(a-1)a) & (1-8(a-1)a,1-8(a-1)a)
\end{pmatrix}.
\label{A1_PD_example}
\end{align}
\endgroup
Note that for pairs of strategies (3,4) and (4,3) the necessary condition $0<p<\frac{1}{6}$ from Proposition \ref{prop_A_3_4} for the existence of a NE is not satisfied since $p=\frac{1}{5}$. 
Table \ref{A1_PD_NE_payoffs} illustrates seven strategy profiles for which NE are feasible, along with the corresponding values of the parameter $a$ that result in maximum equal payoffs for the players.
\begin{table}[ht]
\renewcommand{\arraystretch}{1.7}
\begin{center}
\begin{tabular}{ | >{\centering}p{1.5cm} | >{\centering}p{3.5cm} | >{\centering}p{3.5cm} | >{\centering\arraybackslash}  p{3.5cm} | }
\hline
\xmark & \xmark & \xmark & \xmark\\
\hline
\xmark & (1,1) for $a = 1$ & {$\left(\frac{5}{2}, \frac{5}{2}\right)$ for $a = \frac{1}{2}$}  & (1,1) for $a = 1$  \\
\hline
\xmark & {$\left(\frac{5}{2}, \frac{5}{2}\right)$ for $a = \frac{1}{2}$ } & $\left(\frac{5}{3}, \frac{5}{3}\right)$ for $a = \frac{3-\sqrt{3}}{6}$ & \xmark \\
\hline
\xmark & (1,1)  for $a = 1 $ & \xmark & {$\left(\frac{5}{3},\frac{5}{3}\right)$ for  $a = \frac{3+\sqrt{6}}{6}$}\\
\hline
\end{tabular}
\caption{NE with maximal and equal payoffs and the corresponding $a$ parameters for the $A_1$ class extension of the PD (\ref{pdgame}), the symbol \xmark \, denotes lack of a NE for the corresponding pair of strategies.}\label{A1_PD_NE_payoffs}
\end{center}
\end{table}
\end{example}
Figure \ref{fig1} shows the first player's payoffs for all NE (not necessarily with equal payoffs) of the $A_1$ extension of the PD (\ref{pdgame}) as a function of the parameter $a$.
\begin{figure}
    \centering
    \includegraphics[width=0.9\linewidth]{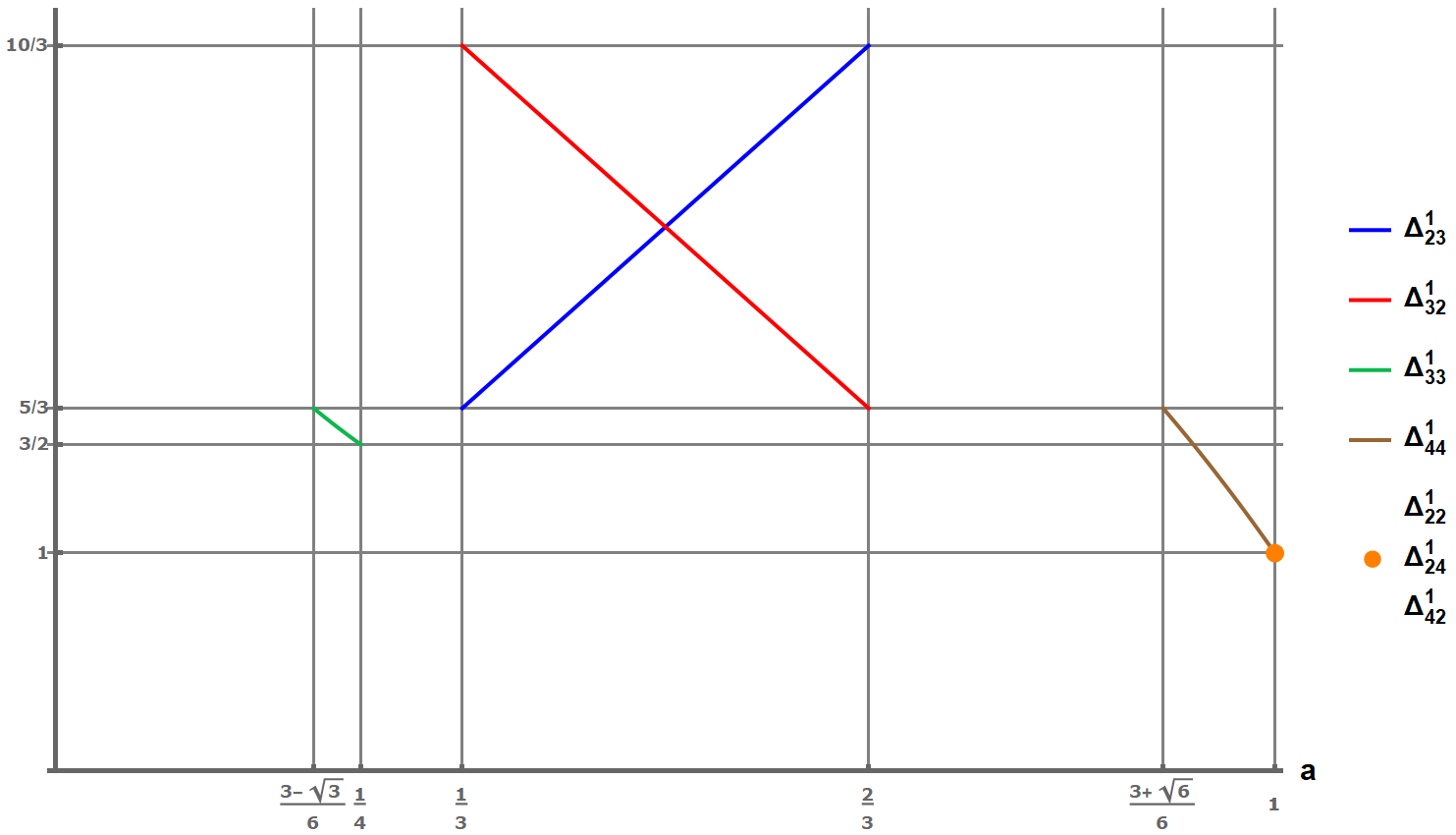}
    \caption{The dependence of NE first player payoffs on the value of the parameter a (in the permissible range) for different strategy profiles of the exemplary PD (\ref{pdgame}) in the extension $A_1$ given by the matrix \eqref{A1_PD_example}. Payoffs $\Delta^1_{22}= \Delta^1_{24}=\Delta^1_{42}=1$, which correspond to NE for $a=1$ are identical and depicted by a single dot.}\label{fig1}
\end{figure}
Figure \ref{fig:A1_PR_payoff_plots} shows the payoffs $\Delta^{i}_{jk}$ for profiles $j \leq k$ of NE in the $A_1$ extension of the PD (\ref{PDgame}) as a function of the payoffs $P$ and $R$ for $S=0$ and $T=5$ and the value of $a$ corresponding to the maximum and equal NE according to Table \ref{A1_PD_NE_payoffs}, where $P=1$ and $R=3$.
\begin{figure}
    \centering
    \includegraphics[width=0.45\linewidth]{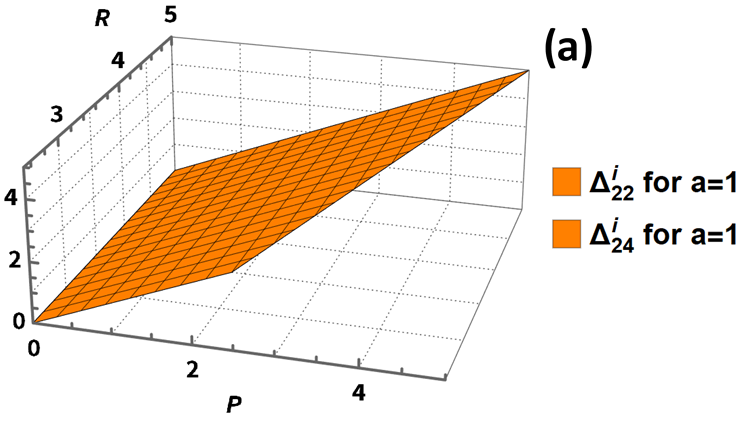} \hspace{0.3cm}
    \includegraphics[width=0.45\linewidth]{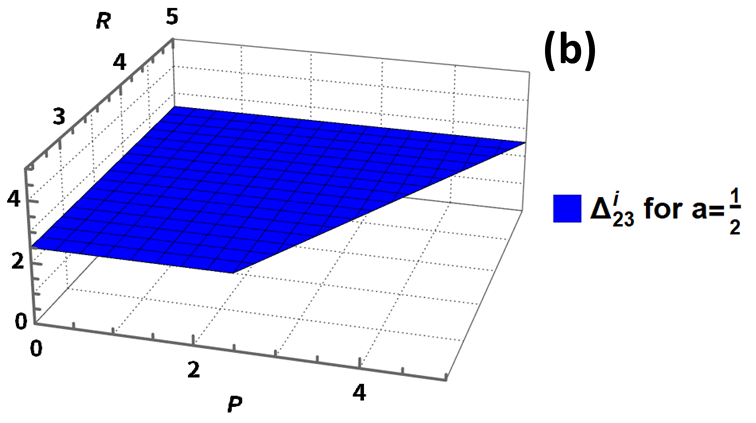}
    \includegraphics[width=0.47\linewidth]{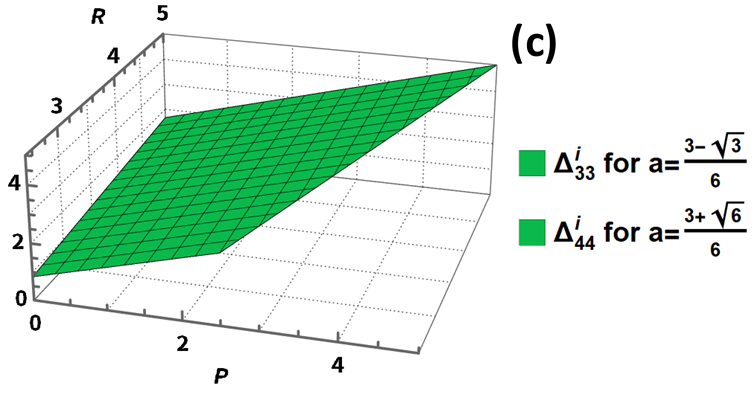}

    \includegraphics[width=0.45\linewidth]{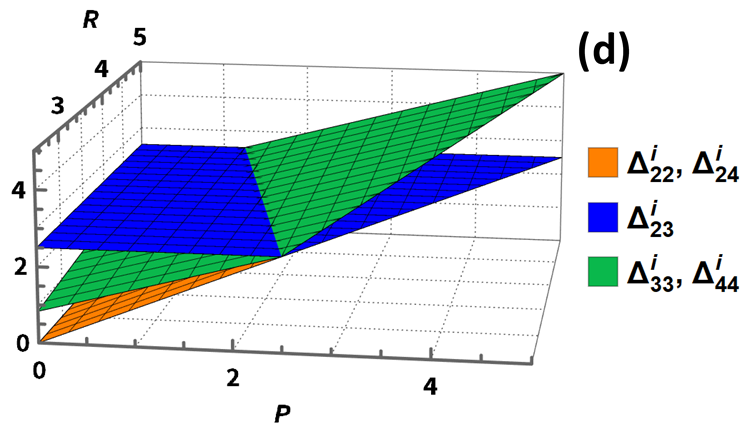} \hspace{0.3cm}
    \includegraphics[width=0.45\linewidth]{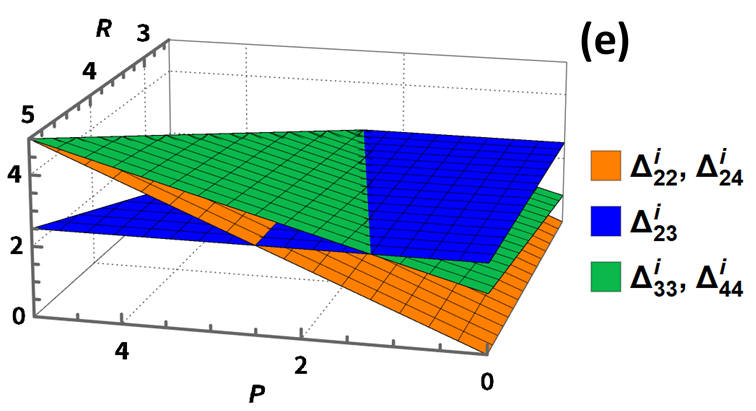}
    \caption{Dependence of the payoffs of the $A_1$ extension of the PD (\ref{PDgame}) on the payoffs $P$ and $R$ for $S=0$ and $T=5$ and the value of $a$ corresponding to the maximum and equal NE according to Table \ref{A1_PD_NE_payoffs}, where $P=1$ and $R=3$. For a better comparison, figures (d) and (e) show the relationships shown in (a), (b) and (c) from two different points of view. In all presented cases the payoffs $\Delta^{i}_{jk}$ are the same for both players $i\in\{1,2\}$}
    \label{fig:A1_PR_payoff_plots}
\end{figure}
\subsection{Extension of the $B$ class}
The $B$ class extension of PD (\ref{afinematrix}) is defined by the first player's payoff matrix:
\begingroup 
\setlength\arraycolsep{1.7pt}
\renewcommand\arraystretch{1.5}
\begin{equation}\label{B_ext}
\medmuskip = 0.2mu
B^1 = \begin{pmatrix}
r & 0 & \frac{1}{4}(1+r+p) & \frac{1}{4}(1+r+p) \\ 
1 & p & \frac{1}{4}(1+r+p) & \frac{1}{4}(1+r+p) \\
\frac{1}{4}(1+r+p) & \frac{1}{4}(1+r+p) & \frac{1}{4}(1+r+p) & \frac{1}{4}(1+r+p) \\
\frac{1}{4}(1+r+p) & \frac{1}{4}(1+r+p) & \frac{1}{4}(1+r+p) & \frac{1}{4}(1+r+p)
\end{pmatrix},
\end{equation}
\endgroup
where  $0<p<r<1$ and $2r > 1$.
\begin{proposition} \label{B_prop}
Depending on the parameters $p$ and $r$, the game defined by the matrix (\ref{B_ext}) exhibits the following Nash equilibria in pure strategies: \begin{itemize}
    \item[i.] The strategy profiles $(1,j)$ and $(i,1)$, for $i,j=1,\ldots,4$ are not NE for any values of $p$ and $r$, 
    \item[ii.] The strategy profile $(2,2)$ is a NE provided that $p \geq \frac{1+r}{3}$,
    \item[iii.] The strategy profile $(2,j)$ and $(i,2)$ for $i,j=3,4$ are NE provided that $p \leq \frac{1+r}{3}$,
    \item[iv.] The strategy profiles $(i,j)$, for $i,j=3,4$ are NE for arbitrary values of $p$ and $r$.
\end{itemize}
\end{proposition}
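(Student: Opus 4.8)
The plan is to exploit the highly degenerate structure of $B^1$ together with the symmetry guaranteed by Proposition~\ref{propsymewl}. First I would abbreviate the repeated entry by writing $q := \frac{1}{4}(1+r+p)$, so that the third and fourth columns of $B^1$ are constant (every entry equal to $q$) and only the first two columns carry any variation. Before anything else I would record the elementary comparisons forced by $0<p<r<1$ and $r>\frac12$: namely $0<q<1$, and $q<r$ (which follows from $1+p<1+r<3r$), together with the only genuinely decisive relation, the equivalence
\begin{equation}
p \geq q \iff 3p \geq 1+r \iff p \geq \tfrac{1+r}{3}.
\end{equation}

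Since $\Gamma$ is symmetric, so is its extension $B=\bigl(B^1,(B^1)^T\bigr)$ by Proposition~\ref{propsymewl}, and by the remark following Definition~\ref{symdefinition} the set of NE is invariant under transposition: $(i,j)$ is a NE iff $(j,i)$ is. This lets me phrase everything through a single best-response correspondence. For a fixed column $j$ let $\mathrm{BR}(j)\subseteq\{1,2,3,4\}$ be the set of rows attaining $\max_i B^1_{ij}$. Because the game is symmetric, player~2's optimality condition against $i^*$ is exactly the statement $j^*\in\mathrm{BR}(i^*)$, so a profile $(i^*,j^*)$ is a pure NE precisely when $i^*\in\mathrm{BR}(j^*)$ and $j^*\in\mathrm{BR}(i^*)$. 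I would then compute $\mathrm{BR}$ column by column: $\mathrm{BR}(1)=\{2\}$ (the entry $1$ strictly dominates both $r$ and $q$), $\mathrm{BR}(3)=\mathrm{BR}(4)=\{1,2,3,4\}$ (constant columns), while the second column $(0,p,q,q)^T$ gives $\mathrm{BR}(2)=\{3,4\}$ if $p<q$, $\mathrm{BR}(2)=\{2\}$ if $p>q$, and $\mathrm{BR}(2)=\{2,3,4\}$ if $p=q$.

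With the correspondence in hand, each claim reduces to checking the double-membership condition. For (i), any profile $(i,1)$ forces $i\in\mathrm{BR}(1)=\{2\}$, yet $1\notin\mathrm{BR}(2)$ in all three regimes, so no such profile qualifies; the case $(1,j)$ then follows by transposition. For (ii), $(2,2)$ is a NE iff $2\in\mathrm{BR}(2)$, which holds exactly when $p\geq q$, i.e. $p\geq\frac{1+r}{3}$. For (iii), each of $(2,3),(2,4),(3,2),(4,2)$ requires $3$ or $4$ to lie in $\mathrm{BR}(2)$ together with $2\in\mathrm{BR}(3)=\mathrm{BR}(4)$, which holds exactly when $p\leq q$, i.e. $p\leq\frac{1+r}{3}$. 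For (iv), every profile with $i,j\in\{3,4\}$ is automatic, since $\mathrm{BR}(3)=\mathrm{BR}(4)$ contains all rows. The one point demanding care, and the only real obstacle, is the consistent treatment of the boundary $p=\frac{1+r}{3}$: there $\mathrm{BR}(2)=\{2,3,4\}$, so both (ii) and (iii) must hold simultaneously, which is precisely why both are stated with non-strict inequalities. I would close by verifying that the two parameter regions meet exactly on this hyperplane, so that the classification is exhaustive and internally consistent.
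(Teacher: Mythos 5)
Your proposal is correct and follows essentially the same route as the paper's proof: both exploit the constant third and fourth columns of $B^1$, the symmetry of the extension to reduce Nash equilibria to mutual best responses, and the single decisive comparison $p \gtrless \frac{1}{4}(1+r+p) \iff p \gtrless \frac{1+r}{3}$. Your explicit best-response correspondence $\mathrm{BR}(j)$ is merely a more systematic packaging of the same inequality checks the paper performs directly, with the added (harmless) bonus of making the boundary case $p=\frac{1+r}{3}$ and the impossibility of first-row/first-column equilibria completely transparent.
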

\begin{proof} The proof is based on the definition of NE and the inequalities that occur for PD defining parameters $r$ and $p$.   
  \begin{itemize} 
      \item[i.] The $(2,1)$ profile is the unique NE candidate in the first column, as player 1 has the highest payoff of $1>r$ and $1>\frac{1+r+p}{4}$, in that column. Nevertheless, player 2's payoff for this profile is $0$, i.e. it is smaller than the payoffs of the other players in this row. Consequently, neither the $(2,1)$ profile nor any other profile in the first column can be classified as NE. The symmetry of the game implies that no profile in the first row can also be classified as NE.
      \item[ii.] The $(2,2)$ is a NE if $p \geq \frac{1+r+p}{4}$ which leads directly to $p \geq \frac{1+r}{3}$
      \item[iii.] The $(3,2)$ is a NE if $\frac{1+r+p}{4} \geq p$ which leads directly to $p \leq \frac{1+r}{3}$. The remaining profiles are also NE, based on the same inequality.
      \item[iv.] For these profiles the payoffs of both players in each row and column are identical, thereby demonstrating that these are NE.
  \end{itemize}  
This completes the proof.
  \end{proof}
  This above proposition is summarised in Table \ref{table_B1}, which shows the conditions that must be met for NE to exist in the respective pure strategy profiles. The payoff values for these equilibria are the same for both players and equal to the corresponding positions of matrix (\ref{B_ext}).
\begin{table}[h] 
\renewcommand{\arraystretch}{1.5}
    \centering
\begin{tabular}{ | >{\centering}p{1.5cm} | >{\centering}p{1.5cm} | >{\centering}p{1.5cm} | >{\centering\arraybackslash}  p{1.5cm} | }
    \hline
 \xmark & \xmark & \xmark & \xmark\\  
\hline
\xmark  & {$p \geq \frac{1+r}{3}$}  & {$p \leq \frac{1+r}{3}$ } & {$p \leq \frac{1+r}{3}$ }   \\
 \hline
\xmark & {$p \leq \frac{1+r}{3}$ } & { $\checkmark$  }  & {$\checkmark$} \\
  \hline
 \xmark & {$p \leq \frac{1+r}{3}$ } & { $\checkmark$  }  &  {$\checkmark$} \\
\hline
\end{tabular}
\caption{The $B$ class strategies parameters resulting in NE. Parameters $p,\,r$ are PD  payoffs, the mark \xmark \, denotes lack of a NE for the corresponding pair of strategies and the mark \checkmark \, denotes that NE exists for all parameter values.}\label{table_B1}
\end{table}
\begin{example}
For the standard PD payoff matrix (\ref{pdgame}), the equivalent game (\ref{afinematrix}) parameters are $r = 3/5$ and $p = 1/5$, therefore $p < \frac{1+r}{3}$. It leads to set of NE strategy profiles $\{(i,j):i\geq 3 \lor j\geq 3\}$ with payoffs all equal to $2\frac{1}{4}$, see Table \ref{Table_B2}.
\begin{table}[H]
\renewcommand{\arraystretch}{1.7}
\begin{center}
\begin{tabular}{ | >{\centering}p{1.5cm} | >{\centering}p{1.5cm} | >{\centering}p{1.5cm} | >{\centering\arraybackslash}  p{1.5cm} | }
\hline
\xmark & \xmark & \xmark & \xmark\\
\hline
\xmark & \xmark & $\left(2\frac{1}{4}, 2\frac{1}{4}\right)$  & $\left(2\frac{1}{4}, 2\frac{1}{4}\right)$ \\
\hline
\xmark & $\left(2\frac{1}{4}, 2\frac{1}{4}\right)$  & $\left(2\frac{1}{4}, 2\frac{1}{4}\right)$ & $\left(2\frac{1}{4}, 2\frac{1}{4}\right)$\\
\hline
\xmark & $\left(2\frac{1}{4}, 2\frac{1}{4}\right)$ & $\left(2\frac{1}{4}, 2\frac{1}{4}\right)$ & $\left(2\frac{1}{4}, 2\frac{1}{4}\right)$\\
\hline
\end{tabular}
\caption{NE payoffs in the class $B$ extension for the standard PD (\ref{pdgame}). The mark \xmark \, denotes lack of a NE for the corresponding pair of strategies.}\label{Table_B2}
\end{center}
\end{table}
\end{example}
\subsection{Extension of the $C$ class}
In the following subsection, the $C$ class extension is discussed to analyse the possible NE. The payoff matrix of the first player is
\begingroup 
\setlength\arraycolsep{3pt}
\renewcommand\arraystretch{1.5}
\medmuskip = 0.6mu
\begin{align}
&C^1 =\begin{pmatrix} \label{C_matrix_1}
r & 0 &\frac{t}{2}  (p+r)+\frac{1-t}{2} & \frac{1}{2} (1-t) (p+r)+\frac{t}{2} \\
1 & p &\frac{1}{2} (1-t) (p+r)+\frac{t}{2}& \frac{t}{2}  (p+r)+\frac{1-t}{2}, \\
\frac{t}{2}  (p+r)+\frac{1-t}{2} & \frac{1}{2} (1-t) (p+r)+\frac{t}{2} & p t^2+r (1-t)^2+t (1-t) & (1-t) t (p+r)+t^2 \\
\frac{1}{2} (1-t) (p+r)+\frac{t}{2}& \frac{t}{2}  (p+r)+\frac{1-t}{2} & t (1-t) (p+r)+(1-t)^2& p (1-t)^2+r t^2+t (1-t)
\end{pmatrix}.
\end{align}
\endgroup
The existence of NE in four-strategy quantum game extensions of PD is dependent on $p$ and $r$, i.e. PD 
payoffs and on the EWL scheme parameter $t$ according to \eqref{klasaC}. We found that neither of the pair of strategies in the first row or the first column of the $C$ 
class could result in a NE. This can be proven by the following proposition. 

\begin{proposition} \label{C_prop_1}
  Neither $(1,j)$ nor $(i,1)$, $i,j \in \{1,2,3,4\}$, is a Nash equilibrium.
\end{proposition}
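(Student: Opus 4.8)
\textbf{Proof plan for Proposition~\ref{C_prop_1}.}
The plan is to show that no profile in the first row or the first column can be a Nash equilibrium, exploiting the symmetry of the extension (Corollary following Proposition~\ref{propsymewl}) so that it suffices to rule out the first column; the first row then follows automatically. First I would inspect column~$1$ of the matrix $C^1$ in \eqref{C_matrix_1}, whose entries are $r,\,1,\,\tfrac{t}{2}(p+r)+\tfrac{1-t}{2},\,\tfrac12(1-t)(p+r)+\tfrac{t}{2}$. Since player~$1$ chooses the row, a profile $(i,1)$ can only be a candidate equilibrium if row~$i$ gives player~$1$ the largest entry in this column. I would argue that the entry $1$ in position $(2,1)$ strictly dominates the other three for all admissible parameters: $1>r$ because $r<1$; $1>\tfrac{t}{2}(p+r)+\tfrac{1-t}{2}$ and $1>\tfrac12(1-t)(p+r)+\tfrac{t}{2}$ because $p+r<2$ forces each convex-type combination below $1$. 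Hence the only surviving candidate in the first column is $(2,1)$.

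Next I would kill the candidate $(2,1)$ by a deviation of player~$2$. Player~$2$'s payoff matrix is $(C^1)^T$, so in profile $(2,1)$ player~$2$ receives the $(1,2)$ entry of $C^1$ read as a column index — concretely, player~$2$'s payoff equals the second-player value at $(2,1)$, which by the symmetry relation equals the first-player payoff at $(1,2)$, namely $0$. I would then exhibit a better response for player~$2$: holding player~$1$ at row~$2$, switching to column~$3$ or~$4$ yields player~$2$ a payoff of $\tfrac12(1-t)(p+r)+\tfrac{t}{2}$ or $\tfrac{t}{2}(p+r)+\tfrac{1-t}{2}$, both of which are strictly positive since $p,r>0$ and $t\in[0,1]$. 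Because $0$ is strictly less than either positive alternative, player~$2$ strictly prefers to deviate, so $(2,1)$ is not a NE.

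Combining the two steps, no profile $(i,1)$ is a NE: for $i\neq 2$ player~$1$ deviates to row~$2$, and for $i=2$ player~$2$ deviates to a column in $\{3,4\}$. Finally, invoking the symmetry of the extension, the transpose argument shows that no profile $(1,j)$ is a NE either: the roles of the two players interchange, the entry $1$ now sits in column~$2$ of player~$2$'s column~player analysis, and the same positivity estimates apply. The main obstacle I anticipate is purely bookkeeping rather than conceptual — one must read player~$2$'s payoffs correctly as entries of the transposed matrix and keep the convex-combination inequalities $p+r<2$, $r<1$, $p>0$, $t\in[0,1]$ straight so that every comparison is strict. No single inequality is hard; the care lies in confirming that in \emph{every} position the dominating or deviating payoff beats the candidate strictly, leaving no boundary case where equality could sustain an equilibrium.
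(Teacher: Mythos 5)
Your proposal is correct and takes essentially the same route as the paper: both arguments show that every entry of column~$1$ of $C^1$ other than the $1$ at position $(2,1)$ is strictly below $1$ (the paper proves the same two inequalities algebraically, you observe they are convex combinations of $\tfrac{p+r}{2}$ and $\tfrac12$), leaving $(2,1)$ as the sole candidate, which is then destroyed by a deviation of player~2, with symmetry disposing of the first row. If anything, your write-up is more explicit than the paper's, which states only the two inequalities and leaves the elimination of $(2,1)$ and the symmetry step implicit.
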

\begin{proof}
Note that none of the pair of strategies represented in the  first row or the first column of the $C$ class can be a NE because 
 \begin{align}
  & \frac{1}{2}(1-t)(p+r)+\frac{t}{2} < 1, \label{C1_first_r_c}\\
  & \frac{t}{2}(p+r)+\frac{1-t}{2}t <1\label{C2_first_r_c}
 \end{align}
for any $t\in(0,1)$. Indeed, after some transformations, inequality 
\eqref{C1_first_r_c}  takes on the form of $(p+r-1)t-(p+r-2)>0$. Since 
\begin{align}
& (p+r-1)t-(p+r-2)>(p+r-2)t-(p+r-2)=(p+r-2)(t-1)>0,
\end{align}
inequality 
\eqref{C1_first_r_c} remains true for $t\in(0,1)$. Simultaneously, 
inequality \eqref{C2_first_r_c} is equivalent to  $(p+r-1)t-1<0$. 
Since $(p+r-1)t-1<t-1$, then the solution of \eqref{C2_first_r_c} is given by $t \in (0,1)$.
\end{proof} 
The following propositions show the conditions that must be satisfied by the parameters $t$, $p$, and $r$ for the existence of NE in the remaining strategy profiles. 
\begin{proposition} \label{C_prop_2}
The strategy profile (2,2) is a Nash equilibrium if either $\frac{r-p}{p+r-1}\leq t\leq \frac{2p-1}{p+r-1}$ and $ p > \frac{1}{2}$ or $t=\frac{1}{2}$ and $p=\frac{1+r}{3}$ are the case.
\end{proposition}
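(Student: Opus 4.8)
The plan is to exploit the symmetry of the extension to collapse the two-sided equilibrium requirement into a one-sided best-response check, then convert that check into the three inequalities implicit in the statement and solve them for $t$.

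First I would invoke the Corollary following Proposition~\ref{propsymewl}: since $\Gamma$ is symmetric, the extension $C=(C^1,(C^1)^T)$ is a symmetric game, so $(2,2)$ is a NE if and only if it is a best response for player~1 against player~2's column~$2$, i.e. if and only if $C^1_{22}\ge C^1_{i2}$ for $i=1,\ldots,4$. Reading the second column of $C^1$ in \eqref{C_matrix_1}, the entry $C^1_{12}=0$ yields $p\ge 0$ (automatic under \eqref{afinematrix}), while the two substantive conditions are $p\ge \tfrac12(1-t)(p+r)+\tfrac{t}{2}$ (from row~3) and $p\ge \tfrac{t}{2}(p+r)+\tfrac{1-t}{2}$ (from row~4).

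Next I would clear the factors of $2$ and gather the $t$-terms, rewriting the row-3 condition as $p-r\ge t(1-p-r)$ and the row-4 condition as $2p-1\ge t(p+r-1)$. The decisive move is a case split on the sign of $p+r-1$. If $p+r\le 1$ then $1-p-r\ge 0$, so the right side of the first inequality is nonnegative while its left side $p-r$ is negative (because $p<r$); the inequality fails for every admissible $t$, so no equilibrium exists in this regime. Hence $(2,2)$ can be a NE only when $p+r>1$; then dividing the first inequality by $1-p-r<0$ (which reverses it) and the second by $p+r-1>0$ produces the two-sided bound $\frac{r-p}{p+r-1}\le t\le \frac{2p-1}{p+r-1}$.

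Finally I would verify admissibility and read off the two cases. Because $p<r$, the upper endpoint satisfies $\frac{2p-1}{p+r-1}\le 1$, and it is positive exactly when $p>\tfrac12$; the lower endpoint $\frac{r-p}{p+r-1}$ is automatically positive, so under $p>\tfrac12$ the whole interval sits inside $(0,1)$. The interval is nonempty iff $\frac{r-p}{p+r-1}\le\frac{2p-1}{p+r-1}$, i.e. $r-p\le 2p-1$, i.e. $p\ge\frac{1+r}{3}$; at the boundary $p=\frac{1+r}{3}$ both endpoints reduce to $\tfrac12$, forcing $t=\tfrac12$, which is precisely the second asserted case, while $p>\frac{1+r}{3}$ (hence $p>\tfrac12$) gives the genuine interval of the first case. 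I expect the only real obstacle to be the careful sign-tracking when dividing through by $1-p-r$ and $p+r-1$, together with confirming that the resulting endpoints remain in the admissible range $t\in(0,1)$; once these are settled, the two displayed conditions follow at once.
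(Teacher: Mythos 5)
Your proposal is correct and follows essentially the same approach as the paper: both reduce, via the symmetry of the extension, to the two column-2 best-response inequalities $p \geq \tfrac{1}{2}(1-t)(p+r)+\tfrac{t}{2}$ and $p \geq \tfrac{t}{2}(p+r)+\tfrac{1-t}{2}$, and solve them for $t$ to obtain the interval $\left[\tfrac{r-p}{p+r-1},\tfrac{2p-1}{p+r-1}\right]$, with the degenerate case $t=\tfrac{1}{2}$ exactly when $p=\tfrac{1+r}{3}$. Your write-up is slightly more complete in making the symmetry reduction explicit, in proving that $p+r>1$ is necessary, and in checking that the endpoints lie in $(0,1)$, but the underlying computation is identical to the paper's.
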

\begin{proof}
Consider the following inequality
\begin{equation}
    p \geq \frac{1}{2}(1-t)(p+r)+\frac{t}{2}, 
\end{equation}
or, equivalently, $(p+r-1)t+p-r\geq 0$. Let $p>\frac{1}{2}$, then  $\frac{r-p}{p+r-1}\leq t<1$. Inequality  $p \geq \frac{t}{2}(p+r)+\frac{1-t}{2}$ or its equivalent form $(p+r-1)t+1-2p\leq0$ is fulfilled if $0<t\leq \frac{2p-1}{p+r-1}$ and $p>\frac{1}{2}$. 
Consequently, a pair of strategies given by the element $(2,2)$ of $C$ class is a NE when $\frac{r-p}{p+r-1}\leq t\leq \frac{2p-1}{p+r-1}$ with the assumption of $p>\frac{1+r}{3}$. Furthermore, if $p=\frac{1+r}{3}$, then  the pair of strategies given by the element $(2,2)$ results in a NE if $t=\frac{1}{2}$.
\end{proof}
\begin{proposition} \label{C_prop_3}
The strategy profile $(3,3)$ is a  Nash equilibrium of the $C$ game provided that $t=\tfrac{1}{2}$.
\end{proposition}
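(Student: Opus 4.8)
The plan is to exploit the symmetry of the extension, established in Proposition~\ref{propsymewl} and the ensuing corollary, so that checking the profile $(3,3)$ reduces to verifying that it is a best response for player~1 alone: it suffices to show $C^1_{33}\geq C^1_{i3}$ for $i\in\{1,2,4\}$, since player~2's conditions are obtained by transposition. Writing $s=1-t$, the entry under scrutiny is $C^1_{33}=pt^2+rs^2+ts$, and it must be compared with $C^1_{13}=\tfrac{t}{2}(p+r)+\tfrac{s}{2}$, $C^1_{23}=\tfrac{s}{2}(p+r)+\tfrac{t}{2}$ and $C^1_{43}=ts(p+r)+s^2$.

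I would first dispose of sufficiency. Substituting $t=\tfrac12$ collapses all four entries of the third column to the common value $\tfrac14(1+p+r)$, so each deviation inequality holds with equality and $(3,3)$ is immediately a NE. For necessity the key manoeuvre is to factor each difference $C^1_{33}-C^1_{i3}$ and pull out a common $(2t-1)$. A direct computation will give
\[
C^1_{33}-C^1_{43}=(2t-1)\bigl[\,pt+(1-t)(1-r)\,\bigr],
\]
\[
C^1_{33}-C^1_{13}=\tfrac12(2t-1)\bigl[\,t(p+r-1)+1-2r\,\bigr].
\]
The bracket in the first identity is a convex combination of $p>0$ and $1-r>0$, hence strictly positive on $[0,1]$; therefore $C^1_{33}\geq C^1_{43}$ forces $2t-1\geq0$, i.e. $t\geq\tfrac12$.

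It then remains to rule out $t>\tfrac12$ using the second identity, and here the argument I would run is to show that the affine function $B_1(t)=t(p+r-1)+1-2r$ is negative throughout $[\tfrac12,1]$. Since $B_1$ is affine it is enough to inspect the endpoints: $B_1(\tfrac12)=\tfrac12(p-3r+1)<0$ because $p<r$ and $r>\tfrac12$ give $p-3r+1<1-2r<0$, while $B_1(1)=p-r<0$. Consequently, for $t>\tfrac12$ one has $(2t-1)B_1(t)<0$, so $C^1_{33}<C^1_{13}$ and the profile is not an equilibrium. Combining the two bounds leaves only $t=\tfrac12$. Note that the comparison with $C^1_{23}$ is not needed for necessity; it holds automatically (with equality) at $t=\tfrac12$.

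The main obstacle is purely computational, namely carrying out the two factorizations cleanly so that the sign-determining brackets $pt+(1-t)(1-r)$ and $t(p+r-1)+1-2r$ emerge in the indicated form. Once these are isolated, the remaining sign analysis is immediate from the standing Prisoner's Dilemma inequalities $0<p<r$ and $r>\tfrac12$, and the two-sided pinching $t\geq\tfrac12$ and $t\leq\tfrac12$ concludes the proof.
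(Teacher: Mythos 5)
Your proposal is correct and follows essentially the same route as the paper: both reduce the problem (via symmetry) to comparing $C^1_{33}$ against the other entries of the third column, factor each difference into $(2t-1)$ times an affine bracket whose sign is fixed by $0<p<r$ and $r>\tfrac12$, and pinch $t$ from both sides to force $t=\tfrac12$, with the remaining comparison holding (with equality) there. The only cosmetic differences are that you establish sufficiency by noting all third-column entries collapse to $\tfrac14(1+p+r)$ at $t=\tfrac12$, and you sign the bracket $t(p+r-1)+1-2r$ by endpoint evaluation on $[\tfrac12,1]$ rather than by the paper's chain of inequalities on $(0,1)$.
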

\begin{proof}
In what follows, we will prove that the intersection of the following set of inequalities 
\begin{align}
&p t^2+r (1-t)^2+t (1-t) \geq  \frac{t}{2}  (p+r)+\frac{1-t}{2}, \label{C_3.3_1}\\
&p t^2+r (1-t)^2+t (1-t) \geq \frac{1}{2} (1-t) (p+r)+\frac{t}{2}, \label{C_3.3_2}\\
&p t^2+r (1-t)^2+t (1-t) \geq  t (1-t) (p+r)+(1-t)^2, \label{C_3.3_3} 
\end{align}
is given by $t=\tfrac{1}{2}$. Inequality \eqref{C_3.3_1} is equivalent to $(2t-1)(t(p+r-1)-2r+1) \geq 0$. 
Since 
\begin{align}
& (p+r-1)t-(2r-1)<(2r-1)t-(2r-1)<(2r-1)(t-1)<0,   
\end{align}
and if  $t\in(0,1)$, then $2t-1\leq 0$, 
and hence, $t\in \left(0,\frac{1}{2}\right]$. Inequality \eqref{C_3.3_3} is equivalent to 
$(2t-1)(t(p+r-1)-(r-1))\geq 0 $. 
Since
\begin{align}\label{r-1a}
& (p+r-1)t-(r-1)>(r-1)t-(r-1)=(r-1)(t-1)>0,
\end{align}
this inequality  holds for every  $t \in \left[\frac{1}{2},1\right)$.

It is easy to check that the remaining inequality \eqref{C_3.3_2} is 
fulfilled for $t=\frac{1}{2}$.
Hence, we can conclude that a pair of strategies $(3,3)$ 
results in a NE, if $t=\frac{1}{2}$.
\end{proof}
\begin{proposition}
The strategy profile $(4,4)$ is a  Nash equilibrium of the $C$ game provided that $t=\tfrac{1}{2}$.
\end{proposition}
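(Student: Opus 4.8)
The plan is to use the symmetry of the extension, established in Proposition~\ref{propsymewl} and its corollary: since $C$ is a symmetric game, the diagonal profile $(4,4)$ is a NE precisely when the $(4,4)$ entry of $C^1$ is the largest entry in the fourth column, the column condition for player~2 being automatic by transposition. Thus it suffices to establish the three best-response inequalities
\begin{align}
& p(1-t)^2+rt^2+t(1-t) \geq \tfrac{1}{2}(1-t)(p+r)+\tfrac{t}{2}, \label{plan1}\\
& p(1-t)^2+rt^2+t(1-t) \geq \tfrac{t}{2}(p+r)+\tfrac{1-t}{2}, \label{plan2}\\
& p(1-t)^2+rt^2+t(1-t) \geq t(1-t)(p+r)+t^2, \label{plan3}
\end{align}
which compare $C_{44}$ against $C_{14}$, $C_{24}$, $C_{34}$ respectively, and to show that their common solution is exactly $t=\tfrac12$.

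First I would rearrange the two ``flanking'' inequalities into a product form, exactly as in the proof of Proposition~\ref{C_prop_3}. Inequality~\eqref{plan3} becomes $(2t-1)\bigl(t(p+r-1)-p\bigr)\geq 0$, and inequality~\eqref{plan1} becomes $(2t-1)\bigl(t(p+r-1)+(r-p)\bigr)\geq 0$. The decisive observation is that in each case the second factor is an affine function of $t$ whose sign is constant on $(0,1)$: the factor $t(p+r-1)-p$ equals $-p<0$ at $t=0$ and $r-1<0$ at $t=1$, hence is negative throughout, forcing $2t-1\leq 0$; while the factor $t(p+r-1)+(r-p)$ equals $r-p>0$ at $t=0$ and $2r-1>0$ at $t=1$ (using $p<r$ and $r>\tfrac12$), hence is positive throughout, forcing $2t-1\geq 0$. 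Consequently \eqref{plan3} gives $t\in(0,\tfrac12]$ and \eqref{plan1} gives $t\in[\tfrac12,1)$, so their intersection is the single point $t=\tfrac12$.

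It then remains to confirm that the middle inequality~\eqref{plan2} does not exclude this point. I would rewrite it as $(2t-1)\bigl(t(p+r-1)-(2p-1)\bigr)\geq 0$ and simply evaluate at $t=\tfrac12$, where the factor $2t-1$ vanishes; in fact at $t=\tfrac12$ all four entries of the fourth column collapse to $\tfrac{1+p+r}{4}$, so \eqref{plan2} (like \eqref{plan1} and \eqref{plan3}) holds with equality. This shows that $t=\tfrac12$ is both necessary and sufficient, completing the proof.

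I expect the only real work to be the two factorizations and the endpoint sign checks; these are routine polynomial manipulations, and the sole subtlety is remembering to invoke the standing assumptions $0<p<r<1$ and $r>\tfrac12$ when fixing the sign of the affine factors. As a shortcut worth noting, the whole statement also follows from Proposition~\ref{C_prop_3} by the relabeling $3\leftrightarrow4$ of both players' strategies, under which $C^1(t)$ is carried to $C^1(1-t)$ and the profile $(4,4)$ to $(3,3)$; hence $(4,4)$ is a NE at parameter $t$ if and only if $(3,3)$ is a NE at parameter $1-t$, i.e. if and only if $1-t=\tfrac12$.
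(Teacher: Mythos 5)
Your proof is correct and takes essentially the same route as the paper's: the same three best-response inequalities (player 2's condition being automatic by symmetry), the same factorizations $(2t-1)\bigl(t(p+r-1)+r-p\bigr)\geq 0$ and $(2t-1)\bigl(t(p+r-1)-p\bigr)\geq 0$ forcing $t\geq\tfrac12$ and $t\leq\tfrac12$ respectively, and a check of the middle inequality at $t=\tfrac12$. Your endpoint sign-checks for the affine factors and the closing shortcut identifying $(4,4)$ in $C(t)$ with $(3,3)$ in $C(1-t)$ via the $3\leftrightarrow 4$ relabeling are clean additions, but the core argument coincides with the paper's.
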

\begin{proof}
In what follows, we will prove that the intersection of the following set of inequalities:
 \begin{align}
&p (1-t)^2+r t^2+t (1-t) \geq \frac{1}{2} (1-t) (p+r)+\frac{t}{2} \label{C_4.4_1}\\
&p (1-t)^2+r t^2+t (1-t) \geq  \frac{t}{2}  (p+r)+\frac{1-t}{2} \label{C_4.4_2}\\
&p (1-t)^2+r t^2+t (1-t) \geq  (1-t) t (p+r)+t^2 \label{C_4.4_3} 
\end{align}
is given by $t=\frac{1}{2}$.

Inequality \eqref{C_4.4_1} can be rewritten in the form of $(2t-1)(t(p+r-1)+r-p)\geq 0$. One can note that
\begin{align}
&t(p+r-1) +r -p = tp + tr -t +r -p = p(t-1) + 2tr-tr-t+r \\
&=p(t-1) - r(t-1) + t(2r-1) = (p-r)(t-1) + t(2r-1) >0. \nonumber
\end{align}
Hence, one can conclude that
inequality \eqref{C_4.4_1} is fulfilled for  $t\in \left[\frac{1}{2},1\right)$.

Inequality \eqref{C_4.4_3} is equivalent to $(2t-1)(t(p+r-1)-p)\geq 0$. Note that
\begin{equation}\label{p+r-1}
 p+r-1<p, \text{ for every } t\in(0,1).
\end{equation}
Consequently, we can conclude that the solution of 
\eqref{C_4.4_3} is given by the following intersection 
$\left(0,\frac{1}{2}\right]$.

One can easily check that the remaining 
inequality \eqref{C_4.4_2} is
true
for $t=\frac{1}{2}$. Hence, we conclude that the pair of 
strategies given by the element  $(4,4)$ is a NE if $t=\frac{1}{2}$. 
\end{proof}
To prove that pairs of strategies $(3,2)$ and $(2,3)$ are NE of the $C$ class game one can provide the following proposition. 

\begin{proposition}
 Let  $0<p<1-r$. Strategy profiles $(3,2)$ and 
 $(2,3)$ are  Nash equilibria of the $C$ game if 
 $t\geq \tfrac{1}{2}$. Moreover, $(3,2)$ and $(2,3)$ are  Nash equilibria of the $C$ game if $t=\frac{1}{2}$ and $1-r<p<\frac{1+r}{3}$.
\end{proposition}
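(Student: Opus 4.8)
The plan is to reduce the equilibrium question for $(3,2)$ to four scalar inequalities and then exploit the fact that each of the relevant differences carries $(2t-1)$ as a factor. Write $X=\tfrac12(1-t)(p+r)+\tfrac{t}{2}$ and $Y=\tfrac{t}{2}(p+r)+\tfrac{1-t}{2}$ for the two repeated off-diagonal entries of $C^1$, so that in \eqref{C_matrix_1} one has $C^1_{32}=C^1_{23}=X$ and $C^1_{42}=C^1_{13}=Y$. Since the extension is symmetric (Proposition~\ref{propsymewl} and its Corollary), player $2$'s payoffs in row $3$ are read off from column $3$ of $C^1$, and it suffices to treat $(3,2)$; the profile $(2,3)$ then follows from the symmetry of the set of Nash equilibria noted after Definition~\ref{symdefinition}. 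Fixing player $2$ at column $2$ and player $1$ at row $3$, the profile $(3,2)$ is a Nash equilibrium exactly when
\begin{equation}
X\ge p,\qquad X\ge Y,\qquad X\ge C^1_{33},\qquad X\ge C^1_{43},
\end{equation}
the trivial comparison $X\ge C^1_{12}=0$ being automatic.

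Next I would compute the four differences and factor out $(2t-1)$. A direct calculation gives
\begin{align}
X-Y &= \tfrac12(1-2t)(p+r-1),\\
X-C^1_{33} &= \tfrac12(2t-1)\bigl[(1-p-r)t+(r-p)\bigr],\\
X-C^1_{43} &= \tfrac12(1-2t)\bigl[(1-t)(p+r)+t-2\bigr],
\end{align}
together with $X-p=\tfrac12\bigl[(r-p)-t(p+r-1)\bigr]$. The sign of each cofactor is then read off from the hypotheses: the bracket $(1-t)(p+r)+t-2$ is linear in $t$ with value $p+r-2<0$ at $t=0$ and $-1<0$ at $t=1$, hence negative on $[0,1]$; under $p+r<1$ the bracket $(1-p-r)t+(r-p)$ has two positive coefficients and is therefore positive, while $(r-p)-t(p+r-1)\ge0$ holds for every $t$.

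For the first assertion I assume $0<p<1-r$, i.e. $p+r<1$. Then $X\ge p$ holds for all $t$, and the three remaining inequalities take the shape $(2t-1)\cdot(\text{positive})\ge0$, $(1-2t)\cdot(\text{positive})\ge0$ and $(1-2t)\cdot(\text{negative})\ge0$, each of which is equivalent to $t\ge\tfrac12$. Hence $(3,2)$ is a Nash equilibrium precisely for $t\ge\tfrac12$, and by symmetry so is $(2,3)$. For the second assertion I set $t=\tfrac12$, where $(2t-1)=0$ forces $X=Y=C^1_{33}=C^1_{43}$, so that the last three inequalities hold with equality irrespective of the sign of $p+r-1$. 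The only surviving condition is $X\ge p$, which at $t=\tfrac12$ reduces to $\tfrac14\bigl[(1+r)-3p\bigr]\ge0$, i.e. $p\le\tfrac{1+r}{3}$; combined with the standing assumption $p>1-r$ this yields the range $1-r<p<\tfrac{1+r}{3}$, and $(2,3)$ is handled identically.

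The main obstacle is purely computational: verifying the three factorizations and correctly pinning down the sign of each cofactor. The conceptual payoff of the case split is that the quantity $p+r-1$ governs everything. When $p+r<1$ all four conditions align and the whole ray $t\ge\tfrac12$ is admissible, whereas when $p+r>1$ the requirement $X\ge Y$ (forcing $t\le\tfrac12$) clashes with $X\ge C^1_{43}$ (forcing $t\ge\tfrac12$), collapsing the admissible set to the single value $t=\tfrac12$ and transferring the binding constraint onto $p$ through $X\ge p$.
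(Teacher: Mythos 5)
Your proposal is correct and follows essentially the same route as the paper: it reduces $(3,2)$ to the identical four inequalities ($X\ge p$, $X\ge Y$, $X\ge C^1_{33}$, $X\ge C^1_{43}$), factors $(2t-1)$ out of each difference, and reads off signs of the cofactors — your factorizations all check out, and your explicit handling of the $t=\tfrac12$ case (all factored differences vanish, leaving only $X\ge p$, i.e. $p\le\tfrac{1+r}{3}$) is actually more complete than the paper, which dismisses that part as easy. One slip to fix: in the sentence listing the three "shapes," the middle one should be $(2t-1)\cdot(\text{positive})\ge 0$, consistent with your displayed factorization of $X-C^1_{33}$; as written, $(1-2t)\cdot(\text{positive})\ge 0$ would be equivalent to $t\le\tfrac12$, contradicting the stated conclusion — a typo rather than a gap, since the displayed identities carry the argument.
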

\begin{proof}
Consider the following set of inequalities:
\begin{align}
&  \frac{1}{2} (1-t) (p+r)+\frac{t}{2} \geq    \frac{t}{2} (p+r)+\frac{1-t}{2} \label{C_2.3_1}\\
&  \frac{1}{2} (1-t) (p+r)+\frac{t}{2} \geq p t^2+r (1-t)^2+t (1-t) \label{C_2.3_2}\\
&  \frac{1}{2} (1-t) (p+r)+\frac{t}{2} \geq t (1-t) (p+r)+(1-t)^2 \label{C_2.3_3}\\
&  \frac{1}{2} (1-t) (p+r)+\frac{t}{2} \geq p \label{C_2.3_5}
\end{align}

Inequality \eqref{C_2.3_3} or equivalently $(2t-1)(t(p+r-1)-(p+r-2)) \geq 0$, is satisfied for $t\in \left[\frac{1}{2},1\right)$.

Let now $\tfrac{1}{2}\leq t<1$. If $r<1-p$, then
inequality \eqref{C_2.3_1}, equivalent to $(2t-1)
((p+r-1)t+p-r)\leq 0$,  is satisfied.

Inequality \eqref{C_2.3_2} is equivalent to $(2t-1)
((p+r-1)t+p-r)\leq 0$,which is satisfied for 
$\tfrac{1}{2}\leq t<1$ and $r<1-p$. Moreover, inequality 
\eqref{C_2.3_5}, of the equivalent form of $(p+r-1)t+p-r\leq 0$, is also satisfied for  $\tfrac{1}{2}\leq  t<1$ and $r<1-p$.

Consequently, we infer that  if $0<p<1-r$ and $\frac{1}{2}\leq 0<1$, then $(3,2)$ and $(2,3)$ are  NE of the $C$ game.

It is easy to prove that if $t=\frac{1}{2}$ and $1-
r<p<\frac{1+r}{3}$, then $(3,2)$ and $(2,3)$ are  
NE of the $C$ game.
\end{proof}

\begin{proposition}
  Let  $0<p<1-r$. Pairs of strategies $(4,2)$ and $(2,4)$ are  Nash equilibria of the $C$ game if $t\leq\tfrac{1}{2}$. In particular, $(4,2)$ and $(2,4)$ are Nash equilibria of the $C$ game if $t=\frac{1}{2}$ and $1-r<p<\frac{1+r}{3}$.
\end{proposition}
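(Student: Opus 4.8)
The plan is to reduce the problem to a single profile using the symmetry of the extension (Proposition~\ref{propsymewl} and its corollary), which ensures that $(4,2)$ is a NE if and only if $(2,4)$ is; so I analyse $(2,4)$. For $(2,4)$ the non-deviation condition of player~1 requires $C^1_{24}\ge C^1_{i4}$ for $i=1,3,4$, and that of player~2 requires $C^1_{42}\ge C^1_{j2}$ for $j=1,2,3$. Reading off \eqref{C_matrix_1}, I note the coincidences $C^1_{24}=C^1_{42}=\tfrac{t}{2}(p+r)+\tfrac{1-t}{2}$ and $C^1_{14}=C^1_{32}=\tfrac{1}{2}(1-t)(p+r)+\tfrac{t}{2}$, together with $C^1_{12}=0$. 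Exactly as in the treatment of $(2,3)$, the whole system then collapses to the four genuine inequalities
\begin{align}
&\tfrac{t}{2}(p+r)+\tfrac{1-t}{2}\ge \tfrac{1}{2}(1-t)(p+r)+\tfrac{t}{2},\nonumber\\
&\tfrac{t}{2}(p+r)+\tfrac{1-t}{2}\ge (1-t)t(p+r)+t^2,\nonumber\\
&\tfrac{t}{2}(p+r)+\tfrac{1-t}{2}\ge p(1-t)^2+rt^2+t(1-t),\nonumber\\
&\tfrac{t}{2}(p+r)+\tfrac{1-t}{2}\ge p,\nonumber
\end{align}
where the first simultaneously encodes $C^1_{24}\ge C^1_{14}$ and $C^1_{42}\ge C^1_{32}$, the last is $C^1_{42}\ge C^1_{22}$, and the remaining requirement $C^1_{42}\ge C^1_{12}=0$ is trivial.

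Next I would dispatch the three ``linear-type'' inequalities. The first rearranges to $(2t-1)(p+r-1)\ge 0$; since $p<1-r$ forces $p+r-1<0$, it holds precisely when $2t-1\le0$, i.e. $t\le\tfrac{1}{2}$. The second, after clearing denominators, factors as $(2t-1)\bigl(t(p+r)-t-1\bigr)\ge 0$, and because $p+r<1$ gives $t(p+r)-t-1=t(p+r-1)-1<0$, this again reduces to $t\le\tfrac{1}{2}$. The fourth is equivalent to $t(p+r-1)+1-2p\ge 0$, an affine function of $t$ that is decreasing (as $p+r-1<0$), so its minimum over $t\le\tfrac{1}{2}$ is attained at $t=\tfrac{1}{2}$ and equals $\tfrac{1}{2}(1+r-3p)$; this is positive because $p<1-r$ together with $r>\tfrac{1}{2}$ yields $p<1-r<\tfrac{1+r}{3}$. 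Hence the fourth inequality also holds throughout $t\le\tfrac{1}{2}$.

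The step I expect to be the real obstacle is the third inequality, $C^1_{24}\ge C^1_{44}$, which is genuinely quadratic in $t$. Collecting terms I would write it as $f(t):=2(1-p-r)t^2+(5p+r-3)t+(1-2p)\ge 0$. The leading coefficient is positive since $p+r<1$, and a direct substitution gives $f(\tfrac{1}{2})=0$, so $t=\tfrac{1}{2}$ is a root; the product of the roots being $\tfrac{1-2p}{2(1-p-r)}$, the second root equals $\tfrac{1-2p}{1-p-r}$. I would then check that this second root exceeds $\tfrac{1}{2}$, which is equivalent to $r>3p-1$ and follows from the chain $3p-1<2-3r<r$ (using $p<1-r$ and $r>\tfrac{1}{2}$). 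Since the upward parabola $f$ has both roots in $[\tfrac{1}{2},\infty)$, it is non-negative on $(0,\tfrac{1}{2}]$, which establishes the third inequality for $t\le\tfrac{1}{2}$. Combining the four cases proves the first assertion.

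Finally, for the ``in particular'' statement at $t=\tfrac{1}{2}$, I would observe that the first three inequalities degenerate to the equalities $0\ge 0$, $0\ge 0$ and $f(\tfrac{1}{2})=0$, \emph{independently} of $p$ and $r$, so the only binding requirement becomes the fourth one, namely $p\le\tfrac{1+r}{3}$. Thus at $t=\tfrac{1}{2}$ the profile $(2,4)$ (and by symmetry $(4,2)$) is a NE exactly when $p<\tfrac{1+r}{3}$; in particular this covers the complementary regime $1-r<p<\tfrac{1+r}{3}$ not addressed by the first part, yielding the claimed equilibrium.
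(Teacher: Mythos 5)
Your proposal is correct and follows essentially the same route as the paper: both reduce the equilibrium conditions for $(4,2)$ and $(2,4)$ to the same four inequalities (using $C^1_{24}=C^1_{42}$ and $C^1_{14}=C^1_{32}$), show the two linear ones force $t\le\tfrac{1}{2}$ under $p<1-r$, show the comparison with $p$ holds throughout, and verify the comparison with $C^1_{44}$ on $\left(0,\tfrac{1}{2}\right]$. The only cosmetic difference is that you handle that last inequality by locating the roots of the quadratic $f(t)=2(1-p-r)t^2+(5p+r-3)t+(1-2p)$, whereas the paper factors it as $(1-2t)\bigl(t(p+r-1)-(2p-1)\bigr)\ge 0$ and reuses the positivity of $t(p+r-1)+1-2p$ already established for the fourth inequality; your explicit check of the $t=\tfrac{1}{2}$, $1-r<p<\tfrac{1+r}{3}$ case is also a welcome elaboration of what the paper dismisses as easy to see.
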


\begin{proof}
Consider the following set of inequalities:
 \begin{align}
&\frac{t}{2}(p+r)+\frac{1-t}{2}\geq \frac{1}{2}(1-t)(p+r)+\frac{t}{2},  \label{C_4_2.1}\\
& \frac{t}{2}(p+r)+\frac{1-t}{2}\geq  (1-t)t(p+r)+t^2,  \label{C_4_2.2}\\
& \frac{t}{2}(p+r)+\frac{1-t}{2}\geq p(1-t)^2+rt^2+t(1-t)  \label{C_4_2.3}\\
& \frac{t}{2}(p+r)+\frac{1-t}{2}\geq p, \ \label{C_4_2.5}
\end{align}
One notes that inequality \eqref{C_4_2.2} is 
equivalent to $(2t-1)((p+r-1)t-1 \geq 0$ and 
is satisfied for $t\in \left(0,\frac{1}{2}\right]$.

Let $0<t\leq\tfrac{1}{2}$ and $r<1-p$. Then, 
inequality \eqref{C_4_2.1}, equivalent to $(2t-1)
(p+r-1)\geq 0$, is satisfied. Since
$t(p+r-1)+1-2p > t(2p-1)-(2p-1)=(2p-1)(t-1)$
and $2p-1\leq1-2r<0$, therefore inequality  
\eqref{C_4_2.5} is true for $t\in \left(0,\frac{1}
{2}\right]$. From this, we immediately get the 
solution of equation $(1-2t)(t(p+r-1)-(2p-1))\geq 
0$, equivalent to \eqref{C_4_2.3}, namely 
$\left(0,\frac{1}{2}\right]$.

It is easy to see that if  $t=\frac{1}{2}$ and $1-
r<p<\frac{1+r}{3}$, then $(4,2)$ and $(2,4)$ are  
NE of the $C$ game.
\end{proof}

The following reasoning can be provided to prove 
that pairs of strategies $(3,4)$ and $(4,3)$ are  
NE of the $C$ game under certain 
conditions.
\begin{proposition}
Strategy profiles $(3,4)$, $(4,3)$ are  Nash equilibria of the $C$ game provided that $t=\tfrac{1}{2}$.
\end{proposition}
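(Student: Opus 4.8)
The plan is to exploit the symmetry of the game established earlier. By Proposition~\ref{propsymewl}, together with the corollary that every extension (in particular $C$) is symmetric, the profile $(4,3)$ is a Nash equilibrium whenever $(3,4)$ is; so it suffices to treat $(3,4)$. For this profile I would write down the two families of best-response inequalities directly from \eqref{C_matrix_1}: player~$1$ must not profit from deviating in column~$4$, i.e. $C^1_{34}\geq C^1_{i4}$ for $i\in\{1,2,4\}$, and player~$2$ must not profit from deviating in row~$3$, which by transposition of the payoff matrix amounts to $C^1_{43}\geq C^1_{j3}$ for $j\in\{1,2,3\}$. This yields six inequalities.

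The key observation is that two of these six are precisely the reverses of inequalities already factored in the proofs for $(3,3)$ and $(4,4)$, and that they pin $t$ from opposite sides. The comparison $C^1_{34}\geq C^1_{44}$ (row~$3$ versus row~$4$ in column~$4$) is the reverse of \eqref{C_4.4_3}, hence equivalent to $(2t-1)\bigl(t(p+r-1)-p\bigr)\leq 0$; since $t(p+r-1)-p<0$ for $t\in(0,1)$ by \eqref{p+r-1}, this forces $t\geq\tfrac12$. Dually, the comparison $C^1_{43}\geq C^1_{33}$ (column~$4$ versus column~$3$ in row~$3$) is the reverse of \eqref{C_3.3_3}, hence equivalent to $(2t-1)\bigl(t(p+r-1)-(r-1)\bigr)\leq 0$; since $t(p+r-1)-(r-1)>0$ by \eqref{r-1a}, this forces $t\leq\tfrac12$. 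Taken together, the only admissible value is $t=\tfrac12$, which establishes necessity.

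For sufficiency I would substitute $t=\tfrac12$ directly into \eqref{C_matrix_1}. At this value every entry of the lower-right $2\times2$ block, as well as the four entries $C^1_{14},C^1_{24},C^1_{13},C^1_{23}$, collapses to the common value $\tfrac14(1+p+r)$, so all six best-response inequalities hold with equality and player~$1$ (resp. player~$2$) is in fact indifferent among all rows in column~$4$ (resp. all columns in row~$3$). This confirms that $(3,4)$, and therefore $(4,3)$, is a Nash equilibrium exactly when $t=\tfrac12$. I expect the only real obstacle to be the bookkeeping that correctly identifies which two of the six inequalities are decisive and checks that the signs of their non-$(2t-1)$ factors are opposite; once that is seen, the opposite-direction pinning to $t=\tfrac12$ and the concluding substitution are routine.
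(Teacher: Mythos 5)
Your proof is correct, and its skeleton is the same as the paper's: list the six best-response inequalities for $(3,4)$, find two that pin $t$ from opposite sides, and verify the rest at $t=\tfrac{1}{2}$. The genuine difference is which two inequalities do the pinning. The paper factors the comparisons against the classical strategies: its \eqref{C_3.4_1a} (i.e. $C^1_{34}\geq C^1_{14}$) becomes $(2t-1)\left((p+r)-t(p+r-1)\right)\geq 0$, forcing $t\geq\tfrac{1}{2}$, while \eqref{C_4.3_2a} (i.e. $C^1_{43}\geq C^1_{23}$) becomes $(2t-1)\left(t(p+r-1)-(p+r-2)\right)\leq 0$, forcing $t\leq\tfrac{1}{2}$. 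You instead pin $t$ with the comparisons against the quantum strategies, $C^1_{34}\geq C^1_{44}$ and $C^1_{43}\geq C^1_{33}$, correctly identifying these as the reverses of \eqref{C_4.4_3} and \eqref{C_3.3_3}; this recycles factorizations and sign facts (\eqref{p+r-1}, \eqref{r-1a}) already established in the $(4,4)$ and $(3,3)$ propositions, so it is slightly more economical, and both of your sign claims check out. Your sufficiency step is also sharper than the paper's: where the paper merely asserts that the remaining inequalities are easily verified at $t=\tfrac{1}{2}$, you observe that at $t=\tfrac{1}{2}$ the eight entries involved ($C^1_{13},C^1_{14},C^1_{23},C^1_{24}$ and the lower-right $2\times 2$ block) all collapse to $\tfrac{1}{4}(1+p+r)$, so every condition holds with equality; this also makes transparent the remark at the end of the paper's class-$C$ example that the $C$ extension at $t=\tfrac{1}{2}$ coincides with the $B$ extension. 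Finally, your explicit appeal to the symmetry corollary to transfer the result from $(3,4)$ to $(4,3)$ is valid (the paper leaves that step implicit).
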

\begin{proof}
Below it will be proved that the following set of 
inequalities is satisfied for $t=\frac{1}{2}$.

\begin{align}
&(1-t) t (p+r)+t^2 \geq \frac{1}{2} (1-t) (p+r)+\frac{t}{2}\label{C_3.4_1a} \\
&(1-t) t (p+r)+t^2 \geq  \frac{t}{2}  (p+r)+\frac{1-t}{2}\\
&(1-t) t (p+r)+t^2 \geq  p (1-t)^2+r t^2+t (1-t) \\
&t (1-t) (p+r)+(1-t)^2 \geq  \frac{t}{2}  (p+r)+\frac{1-t}{2} \\
&t (1-t) (p+r)+(1-t)^2 \geq \frac{1}{2} (1-t) (p+r)+\frac{t}{2} \label{C_4.3_2a}\\
&t (1-t) (p+r)+(1-t)^2 \geq  p t^2+r (1-t)^2+t (1-t)
\end{align}

Inequality \eqref{C_3.4_1a} is equivalent to $(2t-1)
((p+r)-t(p+r-1))\geq 0$. One notes that $t(p+r-1)-
(p+r) < 0$, therefore $2t-1\geq 0$, and hence $t\in 
\left[\frac{1}{2},1\right)$.

Inequality \eqref{C_4.3_2a} can be rewritten in the 
following form $(2t-1)((p+r-1)t-(p+r-2))\leq 0$. 
Since $(p+r-1)t-(p+r-2)>0$, then inequality 
\eqref{C_4.3_2a} is satisfied if $t\in(0,\frac{1}
{2}]$.

It can be easily proved that the remaining 
inequalities are satisfied for $t=\frac{1}{2}$. Hence, one can infer that pairs of strategies
$(3,4)$ and $(4,3)$ are NE for $t=\frac{1}{2}$. 
\end{proof}

The existence of NE in the $C$ class 
extension can be summarized in the following Table \ref{C_NE_param} where 
particular cells refer to the corresponding pairs of
strategies given in the $C$ class.

\begin{table}[H] 
\renewcommand{\arraystretch}{1.5}
    \centering
\begin{tabular}{ | >{\centering}p{0.9cm} | >{\centering}p{4.5cm} | >{\centering}p{4.5cm} | >{\centering\arraybackslash}  p{4.5cm} | }

    \hline
\multirow{2}{*} \xmark & \multirow{2}{*} \xmark & \multirow{2}{*} \xmark & \multirow{2}{*} \xmark\\  
 & & & \\
\hline
\multirow{2}{*}  \xmark    & $\left(p>\frac{1}{2}\right.$ $\wedge$ $\left.\frac{r-p}{p+r-1} \leq t \leq \frac{2p-1}{p+r-1} \right)$~$\vee$     &   {{$\left(0<p\leq 1-r\right.$ $\wedge$ $\left.t\geq 1/2\right)$~$\vee$ } }     &  {{  $\left(0<p\leq 1-r\right.$ $\wedge$ $\left.t\leq 1/2\right)$~$\vee$  }}        \\

                    &   {$\left(p = \frac{r+1}{3}\right.$ $\wedge$ $\left.t=\frac{1}{2}\right)$}   &  $\left(1-r<p\leq \frac{1+r}{3}\right.$ $\wedge$ $\left.t=\frac{1}{2}\right)$     &  $\left(1-r<p\leq \frac{1+r}{3}\right.$ $\wedge$ $\left.t=\frac{1}{2}\right)$   \\
    \hline
    \multirow{2}{*} \xmark    &   $\left(0<p \leq 1-r\right.$ $\wedge$ $\left.t \geq 1/2\right)$~$\vee$       &  \multirow{2}{*}{ $0<p<r<1 \, \wedge \, 2r > 1\,\wedge \,  t=\frac{1}{2}$  }  &   \multirow{2}{*}{ $0<p<r<1 \, \wedge \, 2r > 1\,\wedge \,  t=\frac{1}{2}$  }         \\

            &    $\left(1-r< p \leq \frac{r+1}{3}\right.$ $\wedge$ $\left.t=\frac{1}{2}\right)$    &    &   \\
    \hline
     \multirow{2}{*} \xmark     &   $\left(0<p\leq 1-r\right.$ $\wedge$ $\left.t\leq 1/2\right)$~$\vee$     &   \multirow{2}{*}{ $0<p<r<1 \, \wedge \, 2r > 1\,\wedge \,  t=\frac{1}{2}$  }    &   \multirow{2}{*}{ $0<p<r<1 \, \wedge \, 2r > 1\,\wedge \,  t=\frac{1}{2}$  }       \\

              &  $\left(1-r< p \leq \frac{r+1}{3}\right.$  $\wedge$ $\left.t=\frac{1}{2}\right)$    &        &    \\ \hline
\end{tabular}
\caption{The $C$ class strategies parameters resulting in NE. Parameters $p,\,r$ are directly related to PD payoffs (\ref{afinematrix}), while $t$ refers to EWL scheme parameter $\theta_1$ (\ref{klasaC}). The mark \xmark \, denotes lack of a NE for the corresponding pair of strategies.}\label{C_NE_param}
\end{table}
\begin{example}
 Consider the PD given by Eq. \eqref{pdgame}.
The $C$ class extension takes on the following 
form:

\begingroup 
\setlength\arraycolsep{1.7pt}
\renewcommand\arraystretch{1.5}
\begin{align}
&C =\begin{pmatrix} 
(3,3) & (0,5) & \left(\frac{5-t}{2},\frac{5-t}{2}\right) & \left(\frac{4+t}{2},\frac{4+t}{2}\right) \\
(5,0) & (1,1) &\left(\frac{4+t}{2},\frac{4+t}{2}\right)& \left(\frac{5-t}{2},\frac{5-t}{2}\right) \\
\left(\frac{5-t}{2},\frac{5-t}{2}\right) & \left(\frac{4+t}{2},\frac{4+t}{2}\right)  & \left(3-t-t^2,3-t-t^2\right) & \left(t(t+4),5-6t+t^2\right) \\
\left(\frac{4+t}{2},\frac{4+t}{2}\right) & \left(\frac{5-t}{2},\frac{5-t}{2}\right)  & \left(5-6t+t^2,t(t+4)\right) & \left(-t^2+3 t+1,-t^2+3 t+1\right)
\end{pmatrix}
\end{align}
\endgroup

We note that in the following form of PD the 
strategy profile (2,2) does 
not satisfy the condition $p > \frac{1}{2}$ since 
$p=\frac{1}{5}$. Hence, (2,2) is not a NE. 

Therefore, there are eight pure NE denoted graphically in 
 Table \ref{C_PD_NE_payoffs}. The maximum payoff for both
players is equal to $2\frac{1}{2}$ and is achieved for 
two pairs of strategies, namely (2,3) and 
(3,2). 

Please note that the maximum 
payoff of these two pairs of strategies 
was obtained for the upper limit of $t$ 
which in this case is equal to 1 (see 
Table \ref{C_NE_param}).

\begin{table}[ht]
\renewcommand{\arraystretch}{1.7}
\begin{center}
\begin{tabular}{ | >{\centering}p{1.5cm} | >{\centering}p{1.5cm} | >{\centering}p{1.5cm} | >{\centering\arraybackslash}  p{1.5cm} | }
\hline
\xmark & \xmark & \xmark & \xmark\\
\hline
\xmark & \xmark & $\left(2\frac{1}{2}, 2\frac{1}{2}\right)$  & $\left(2\frac{1}{4}, 2\frac{1}{4}\right)$ \\
\hline
\xmark & $\left(2\frac{1}{2}, 2\frac{1}{2}\right)$  & $\left(2\frac{1}{4}, 2\frac{1}{4}\right)$ & $\left(2\frac{1}{4}, 2\frac{1}{4}\right)$\\
\hline
\xmark & $\left(2\frac{1}{4}, 2\frac{1}{4}\right)$ & $\left(2\frac{1}{4}, 2\frac{1}{4}\right)$ & $\left(2\frac{1}{4}, 2\frac{1}{4}\right)$\\
\hline
\end{tabular}
\caption{NE payoffs in the class $C$ extension for the standard PD (\ref{pdgame}). The mark \xmark \, denotes lack of a NE for the corresponding pair of strategies.}\label{C_PD_NE_payoffs}
\end{center}
\end{table}

Additionally, one can provide a following 
example of PD payoffs that would result in 
the strategy profile (2,2) being a NE, 
namely $r=4/5$ and $p=3/5$. Such pair of 
parameters satisfies the condition $p=(r+1)/3$ discussed in Proposition  \ref{C_prop_2}
and with $t=1/2$ 
results in a NE.

One notes that for $t=1/2$ the $C$ class extension 
given by Eq. \eqref{C_matrix_1} is equivalent to 
the $B$ class one  written in the form of Eq. 
\eqref{B_ext}.
\end{example}

\subsection{Extension of the $D$ class}

Let $D_1=\left(D_1^1,\left(D_1^1\right)^T\right)$, where
\begingroup 
\setlength\arraycolsep{3pt}
\renewcommand\arraystretch{1}
\medmuskip = 0.6mu
\begin{equation}
 D_1^1=\begin{pmatrix}
      r & 0 &rt & r-rt\\
      1 & p& (1-p)t+p & (p-1)t+1\\
     (r-1)t+1 & p-pt& (r-1)t^2+t+p(1-t)^2&(p+r)(t-t^2)+(1-t)^2\\
     (1-r)t+r & pt & (1-p-r)t^2+(p+r)t&(p-1)t^2+t+r(1-t)^2
     \end{pmatrix}.
\end{equation}
\endgroup
\begin{proposition}
    Let $t\in(0,1)$. Strategy profile $(2,2)$ is a unique Nash equilibrium of $D_1$ game for all $p$, $r$.
\end{proposition}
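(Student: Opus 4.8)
The plan is to prove something stronger than uniqueness: that the second row is a \emph{strictly dominant} strategy for player~1 in the matrix $D_1^1$. Once this is established, the Corollary (symmetry of the extension) guarantees that the second column is strictly dominant for player~2, and a game in which both players possess a strictly dominant strategy has that profile as its unique Nash equilibrium. Concretely, I would fix an arbitrary column $j\in\{1,2,3,4\}$ and show that the entry in row~2 strictly exceeds the entries in rows~1,~3 and~4 of that column, for every admissible $p,r$ (with $0<p<r<1$, $r>\tfrac12$) and every $t\in(0,1)$.

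The comparisons in columns~1 and~2 are immediate: in column~1 the candidate entries are $r,\,1,\,1-(1-r)t,\,r+(1-r)t$, all of which are strictly below $1$ because $r<1$ and $(1-r)(1-t)>0$; in column~2 they are $0,\,p,\,p(1-t),\,pt$, all strictly below $p$ since $0<t<1$. Thus row~2 already beats the others in the two ``classical'' columns, and it remains to treat columns~3 and~4, which carry the genuinely quantum $t$-dependence.

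For these I would reduce each difference (row~2 minus row~$i$) to a product of a manifestly positive prefactor and an affine function of $t$. Writing $x_i$ and $y_i$ for the column~3 and column~4 entries, the identities I expect are $x_2-x_3=t\bigl(p-t(p+r-1)\bigr)$ and $x_2-x_4=(1-t)\bigl((1-p-r)t+p\bigr)$ in column~3, and $y_2-y_3=t\bigl((1-r)+t(p+r-1)\bigr)$ and $y_2-y_4=(1-t)\bigl((1-r)-(1-p-r)t\bigr)$ in column~4, while $x_2-x_1=p+t(1-p-r)$ and $y_2-y_1=(1-r)-t(1-p-r)$ are themselves already affine in $t$. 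The unifying observation is that each of these affine functions (the brackets, or the two linear differences directly) takes the two values $p$ and $1-r$, in one order or the other, at the endpoints $t=0$ and $t=1$; since $p>0$ and $1-r>0$, an affine function positive at both endpoints is strictly positive on $(0,1)$. Hence every difference is strictly positive, which establishes strict dominance of row~2.

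The main obstacle is purely algebraic: the sign of $p+r-1$ is not fixed over the admissible region, so a naive term-by-term estimate of the column~3 and~4 differences is inconclusive. The factorizations above are exactly what neutralises this, because they route the entire $(p,r)$-dependence through the endpoint values $p$ and $1-r$, which are positive regardless of whether $p+r$ exceeds~$1$. Deriving the two quadratic identities amounts to a short expansion together with the remark that $t=1$ is a root of the relevant quadratic, which is what produces the $(1-t)$ factor. After these verifications the conclusion is immediate: row~2 is strictly dominant for player~1, by symmetry column~2 is strictly dominant for player~2, and therefore $(2,2)$ is the unique Nash equilibrium for all $p,r$.
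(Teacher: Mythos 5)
Your proof is correct: all six factorizations you state for columns 3 and 4 check out on expansion, each affine bracket (or affine difference) does take the values $p$ and $1-r$ at $t=0$ and $t=1$, and positivity of an affine function at both endpoints does give positivity on $(0,1)$, so row 2 is indeed strictly dominant and the dominance argument closes the proof. The route differs from the paper's in packaging rather than in substance. The paper never establishes full strict dominance; it proves only the comparisons needed for a case-by-case exclusion: row 2 is shown to be the strict best reply in columns 1, 2 and 3 (its inequalities there coincide with your $x_2-x_1$, $x_2-x_3$, $x_2-x_4$), but in column 4 it compares row 2 only against row 4, because the profiles $(1,4)$, $(3,4)$ and $(2,4)$ are already eliminated by the earlier columns combined with the symmetry $D_1=\left(D_1^1,\left(D_1^1\right)^T\right)$. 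You instead prove the two additional inequalities $y_2-y_1>0$ and $y_2-y_3>0$ to get dominance, and then invoke the standard fact that mutually strictly dominant strategies yield the unique equilibrium. What your version buys is uniformity: every quantum-column comparison is routed through an affine function whose endpoint values are $p>0$ and $1-r>0$, which neutralizes the unknown sign of $p+r-1$ in one stroke, whereas the paper handles each inequality by a separate ad hoc chain of estimates (e.g. $(p+r-1)t-p<pt-p=p(t-1)<0$). What the paper's version buys is economy: two fewer inequalities to verify, at the cost of a more intricate bookkeeping of which profiles are excluded by which comparison.
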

 \begin{proof}
It is easily seen that
\begin{equation}\label{pp}
 \begin{gathered}
 p> p-pt,\\
 p> pt
\end{gathered}
\end{equation}
for every
 $t\in(0,1)$. Therefore,  a profile strategy   $(2,2)$  is a NE for every $t\in(0,1)$.

Since the inequalities $(r-1)t+1<1$, $(1-r)t<1$ are satisfied for $t\in(0,1)$, we conclude that neither $(1,j)$ nor $(i,1)$, $i,j\in\{1,2,3,4\}$, is a NE.

We aim to prove the following inequalities:
\begin{gather}
rt<(1-p)t+p,\label{k3_1}\\
 (r - 1) t^2 + t + p (1 - t)^2 < (1 - p) t + p, \label{k3_2}\\
 (1 - p - r) t^2 + (p + r) t < (1 - p) t + p \label{k3_3}
\end{gather}
for every $t\in(0,1)$.
The inequality \eqref{k3_1}  is equivalent to $(r+p-1)t<p$.  From  \eqref{p+r-1} we have
\begin{equation}\label{k3_4}
 (r+p-1)t-p<pt-p=p(t-1)<0.
\end{equation}
Hence, \eqref{k3_1} holds for every $t\in(0,1)$. The inequality \eqref{k3_2} is equivalent to
$t\left((r+p-1)t-p\right)<0$.  From \eqref{k3_4} we conclude that \eqref{k3_2} holds for every  $t\in(0,1)$. The inequality \eqref{k3_3} is equivalent to
\begin{equation}
 -(t-1)\left((p+r-1)t-p\right)<0.
\end{equation}
From \eqref{k3_4} it follows that $0<t<1$. From \eqref{pp},  \eqref{k3_1}-\eqref{k3_3} we conclude that  neither $(3,j)$ nor $(i,3)$, $i,j\in\{1,2,3,4\}$, is a NE.

Now we show that
\begin{equation}\label{d2_44a}
 (p-1)t^2+t+r(1-t)^2<(p-1)t+1
\end{equation}
or, equivalently,
\begin{equation}
 (t-1)\left((p+r-1)t-(r-1)\right)<0
\end{equation}
holds for every $t\in(0,1)$. From \eqref{r-1a}
the inequality \eqref{d2_44a} holds for every $t\in(0,1)$, and  a strategy profile $(4,4)$ is not NE.
\end{proof}
\begin{example}
 Consider the PD \eqref{pdgame}.
Then
\begin{equation}
 D_1=\begin{pmatrix}
     (3,3) & (0,5) & (3t,5-2t) & (3-3t,2t+3)\\
         (5,0) & (1,1)& (4t+1,1-t) & (5-4t, t)\\
         (5-2t,3t)& (1-t,4t+1)& (1-t^2+3t, 1-t^2+3t)&(t^2-6t+5,t^2+4t)\\
         (2t+3,3-3t) & (t, 5-4t) &(t^2+4t, t^2-6t+5) & (3-t^2-t,3-t^2-t)
    \end{pmatrix}.
\end{equation}
We see that there is exactly one pure NE at a pair of strategies $(2,2)$. 
\end{example}
Let $D_2=\left(D_2^1,\left(D_2^1\right)^T\right)$, where
\begingroup 
\setlength\arraycolsep{3pt}
\renewcommand\arraystretch{1}
\medmuskip = 0.6mu
\begin{equation}
 D_2^1=\begin{pmatrix}
      r & 0 &(p-1)t+1 & (1-p)t+p\\
      1 & p& r-rt & rt\\
     pt & (1-r)t+r& (r-1)t^2+t+p(1-t)^2&(p+r)(t-t^2)+(1-t)^2\\
     p-pt& (r-1)t+1 & (1-p-r)t^2+(p+r)t&(p-1)t^2+t+r(1-t)^2
     \end{pmatrix}.
\end{equation}
\endgroup
\begin{proposition}
    Game $D_2$ has no Nash equilibria in pure strategies.
\end{proposition}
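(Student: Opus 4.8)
The plan is to exploit the symmetry of $D_2$ (guaranteed by the Corollary, since $\Gamma$ is symmetric) to recast the search for equilibria entirely in terms of the single matrix $D_2^1$. Because $D_2^2=(D_2^1)^T$, player $2$'s best response to a row $i$ is read off column $i$ of $D_2^1$; hence $(i,j)$ is a pure Nash equilibrium if and only if the $(i,j)$ entry of $D_2^1$ is maximal in its column $j$ and, simultaneously, the $(j,i)$ entry is maximal in its column $i$. Writing $\mathrm{BR}(j)$ for the set of rows maximizing column $j$, a pure equilibrium is exactly a reciprocal pair $i\in\mathrm{BR}(j)$, $j\in\mathrm{BR}(i)$. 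I would prove the proposition by showing that the correspondence $\mathrm{BR}$ admits no such reciprocal pair for any $t\in(0,1)$ and any admissible $p,r$; in fact $\mathrm{BR}$ cycles through the four strategies with no fixed point and no $2$-cycle.

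First I would eliminate every profile that uses strategy $1$ or $2$. In column $1$ the entries are $r,1,pt,p-pt$, and $1>r$ gives $\mathrm{BR}(1)=\{2\}$. In column $2$ the rows $3,4$ carry the entries $(1-r)t+r$ and $(r-1)t+1$, both lying in $(r,1)$ and hence strictly above the row-$2$ entry $p$ and the row-$1$ entry $0$, so $\mathrm{BR}(2)\subseteq\{3,4\}$. A short computation shows row $1$ beats row $2$ in columns $3$ and $4$ as well, the differences being $(1-r)-t(1-p-r)$ and $p+t(1-p-r)$, both positive on $(0,1)$; thus row $2$ is never a column maximizer outside column $1$. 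Now any equilibrium using strategy $1$ would require $1\in\mathrm{BR}(2)$ (the partner strategy must lie in $\mathrm{BR}(1)=\{2\}$), which is false, while any equilibrium using strategy $2$ would require $2\in\mathrm{BR}(j)$ for the partner column $j$, which holds only for $j=1$, and then the reciprocal condition $1\in\mathrm{BR}(2)$ again fails. Hence no equilibrium involves strategy $1$ or $2$.

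It then remains to treat the block on $\{3,4\}$. The decisive observation is that in both columns $3$ and $4$ the two entries in rows $3$ and $4$ coincide exactly at $t=\tfrac12$, and their difference changes sign there: in column $3$ row $4$ overtakes row $3$ for $t>\tfrac12$, while in column $4$ row $3$ overtakes row $4$ for $t<\tfrac12$. Combining this with the fact that row $1$ dominates row $3$ throughout $[0,\tfrac12]$ (in column $3$) and dominates row $4$ throughout $[\tfrac12,1]$ (in column $4$) shows that row $3$ is never the maximum of column $3$ and row $4$ is never the maximum of column $4$; this removes the candidates $(3,3)$ and $(4,4)$. Finally, $(3,4)$ and $(4,3)$ would require $4\in\mathrm{BR}(3)$, which forces $t>\tfrac12$, together with $3\in\mathrm{BR}(4)$, which forces $t<\tfrac12$ --- an outright contradiction. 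No reciprocal pair survives, so no pure equilibrium exists.

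The main obstacle is the sign analysis of the quadratics in $t$ arising from columns $3$ and $4$, whose leading coefficient $1-p-r$ has indeterminate sign under the sole constraints $0<p<r<1$, $r>\tfrac12$. I would handle this by evaluating each difference at the endpoints of the relevant half-interval --- one endpoint always being $t=\tfrac12$, where the expression collapses to a manifestly positive quantity such as $\tfrac14(1+p-r)$ --- and then locating the parabola's vertex: using $2p-r<r<1$ one checks that the vertex lies outside $[0,\tfrac12]$ (respectively outside $[\tfrac12,1]$), so monotonicity on that half-interval together with endpoint positivity yields the required domination. Care must also be taken at the boundary $t=\tfrac12$ and at the threshold values of $t$ where $\mathrm{BR}$ becomes set-valued, but in each such case one of the two reciprocity conditions still fails, which completes the argument.
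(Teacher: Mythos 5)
Your proposal is correct, and its skeleton is the same as the paper's: by symmetry, $(i,j)$ is an equilibrium iff row $i$ maximizes column $j$ of $D_2^1$ and row $j$ maximizes column $i$; strategies $1$ and $2$ are eliminated because row $2$ uniquely maximizes column $1$, row $2$ maximizes no other column, and row $1$ fails to maximize column $2$; and the $\{3,4\}$ block dies because the row-$3$/row-$4$ differences in columns $3$ and $4$ vanish at $t=\tfrac12$ and change sign in opposite directions, while row $1$ strictly dominates both of them at $t=\tfrac12$. The differences are in execution. The paper proves its quadratic inequalities by factoring them as $(2t-1)\bigl((p+r-1)t+c\bigr)$ and recycling the linear estimates \eqref{k3_4} and \eqref{r-1a} from the $D_1$ analysis, and it disposes of $(3,3)$ by a purely linear chain ($D_2^1_{33}<(1-p)t+p\leq(p-1)t+1=D_2^1_{13}$ for $t\leq\tfrac12$); your endpoint-plus-vertex analysis of the raw quadratics is heavier, and your blanket claim that the vertex lies outside the relevant half-interval is only true (and only needed) in the convex case $1-p-r>0$ --- in the concave case the vertex may lie inside, but then endpoint positivity suffices by concavity, so this is a harmless imprecision in the sketch.

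One point genuinely in your favour concerns $(4,4)$. The paper concludes from \eqref{d2_44a} and $(p-1)t+1<(r-1)t+1$ that $D_2^1_{44}<(r-1)t+1$; but $(r-1)t+1$ is the entry $D_2^1_{42}$, which does not lie in column $4$ of $D_2^1$, so this chain does not show that row $4$ fails to maximize column $4$. (It is the $E_2$ argument, where $(r-1)t+1$ \emph{does} sit in column $4$, transplanted to $D_2$ where it no longer applies.) Your two-regime argument --- row $3$ beats row $4$ in column $4$ for $t<\tfrac12$, row $1$ beats row $4$ in column $4$ for $t\geq\tfrac12$ --- is valid (the second comparison can also be obtained by the cleaner chain $D_2^1_{44}<(p-1)t+1\leq(1-p)t+p=D_2^1_{14}$, valid exactly for $t\geq\tfrac12$) and is precisely what is needed to close that step, so your proof actually repairs a slip in the paper's.
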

\begin{proof}
Since $pt\leq1$ and $p-pt\leq1$ hold for every $t\in(0,1)$, it follows that neither $(1,j)$ nor $(i,1)$, $i,j\in\{1,2,3,4\}$, is a NE.

The inequality
\begin{equation}
 rt<(1-p)t+p
\end{equation}
is equivalent to $-\left((p+r-1)t-p\right)>0$. We conclude from  \eqref{k3_4} that  the interval $(0,1)$ is solution of this inequality, and finally neither   $(4,2)$ nor $(2,4)$ is a NE.

It is easily to seen that
\begin{equation}
 (r-1)t-(p-1)>(p-1)t-(p-1)=(p-1)(t-1)>0
\end{equation}
for every $t\in(0,1)$. Hence  a strategy profile $(2,2)$
is not a NE.

Consider
\begin{equation}
 r-rt<(p-1)t+1
\end{equation}
or equivalently, $(p+r-1)t+(1-r)>0$. From \eqref{r-1a}   it follows that  $t\in(0,1)$, hence neither $(3,2)$ nor $(2,3)$ is a NE. We conclude from \eqref{d2_44a} and  $(p-1)t+1<(r-1)t+1$ for every $t\in(0,1)$ that a strategy profile $(4,4)$ is not a NE. A strategy profile $(3,3)$ is not a NE, which follows from inequalities
\begin{equation}\label{d2_33a}
 (r-1)t^2+t+p(1-t)^2\geq(1-p-r)t^2+(p+r)t \text{ for every } t\in\left(0,\tfrac{1}{2}\right],
\end{equation}
\begin{equation}\label{d2_33b}
 (r-1)t^2+t+p(1-t)^2\leq(p-1)t+1 \text{ for every } t\in\left(0,\tfrac{1}{2}\right].
\end{equation}
It easily seen that the solution of \eqref{d2_33a}, or equivalently   $\left(t-\frac{1}{2}\right)\left((r+p-1)t-p\right)<0$ is  $\left(0,\tfrac{1}{2}\right]$. Note that the solution of $(1-p)t+p\leq(p-1)t+1$  is $\left(0,\tfrac{1}{2}\right]$. We conclude from this and  from \eqref{k3_3} that \eqref{d2_33b} holds for every $t\in\left(0,\tfrac{1}{2}\right]$.

Since the following inequalities
\begin{equation}\label{d2_43a}
 (1-p-r)t^2+(p+r)t\geq(r-1)t^2+t+p(1-t)^2,
\end{equation}
\begin{equation}\label{d2_43b}
 (p+r)(t-t^2)+(1-t)^2\leq(p-1)t^2+t+r(1-t)^2
\end{equation}
hold for every $t\in\left[\tfrac{1}{2},1\right)$, it follows that neither  $(4,3)$ nor $(3,4)$ is a NE.  Consider \eqref{d2_43a}, or equivalently, $(-1+2t)\left((p+r-1)t-p\right)\leq 0$. From \eqref{k3_4} it follows that $t\in\left[\tfrac{1}{2},1\right)$.
Moreover, from \eqref{r-1a}  we conclude that  \eqref{d2_43b}, or equivalently  $(2t-1)\left((p+r-1)t+1-r\right)\geq 0$ holds for every $t\in\left[\tfrac{1}{2},1\right)$.
\end{proof}
\begin{example}
 Consider the PD  \eqref{pdgame}.
Then
\begin{equation}
 D_2=\begin{pmatrix}
     (3,3) & (0,5) & (5-4t,t) & (1+4t,1-t)\\
         (5,0) & (1,1)& ({3-3t},3+2t) & (3t, 5-2t)\\
         (t,5-4t)& (3+2t,3-3t)& (1-t^2+3t, 1-t^2+3t)&(t^2-6t+5,t^2+4t)\\
         (1-t,1+4t) & (5-2t,3t) &(t^2+4t, t^2-6t+5) & (3-t^2-t,3-t^2-t)
    \end{pmatrix}.
\end{equation}
It is easy to check that this game has no NE in pure strategies.
\end{example}

 \subsection{Extension of the $E$ class}
 
Let $E_1=\left(E_1^1,\left(E_1^1\right)^T\right)$, where
\begingroup 
\setlength\arraycolsep{3pt}
\renewcommand\arraystretch{1}
\medmuskip = 0.6mu
\begin{equation}
 E_1^1=\begin{pmatrix}
      r & 0 &(r-1)t+1 & p-pt\\
      1 & p& (1-r)t+r & pt\\
     rt & r-rt& (r-1)t^2+t+p(1-t)^2&(p+r)(t-t^2)+(1-t)^2\\
     (1-p)t+p & (p-1)t+1 & (1-p-r)t^2+(p+r)t&(p-1)t^2+t+r(1-t)^2
     \end{pmatrix}.
\end{equation}
\endgroup
\begin{proposition}
 Let $\frac{1}{2}\leq t <1$. Then a strategy profile $(4,4)$ is a unique Nash equilibrium of $E_1$ game for all  $p$, $r$.
\end{proposition}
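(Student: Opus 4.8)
The plan is to exploit symmetry. By the Corollary the extension $E_1$ is a symmetric game, so the second player's payoff matrix is $(E_1^1)^T$ and it suffices to analyse the single matrix $M := E_1^1$. A profile $(i,j)$ is then a Nash equilibrium exactly when $M_{ij}$ is maximal in its column $j$ (player~1 cannot deviate) and simultaneously $M_{ji}$ is maximal in column $i$ (player~2 cannot deviate). This characterization halves the casework, since $(i,j)$ is a NE iff $(j,i)$ is, and reduces every check to comparing entries down a single column of $M$. Throughout I would use only the standing constraints $0<p<r<1$, $r>\tfrac12$ and $\tfrac12\le t<1$.

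First I would verify that $(4,4)$ is an equilibrium, i.e.\ that column $4$ of $M$ is maximized in row $4$. The three comparisons factor neatly: $M_{44}-M_{24}=(1-t)((1-p)t+r(1-t))>0$; since $M_{14}=p(1-t)\le pt=M_{24}$ whenever $t\ge\tfrac12$, the first-row entry is dominated as well; and $M_{44}-M_{34}=(2t-1)((p+r-1)t+(1-r))$, whose second factor is strictly positive on $[\tfrac12,1)$ (evaluate the linear function of $t$ at the appropriate endpoint and use $r>\tfrac12$), so that $M_{44}\ge M_{34}$ precisely because $t\ge\tfrac12$. Symmetry then supplies player~2's condition, so $(4,4)$ is a NE, with the equality $M_{44}=M_{34}$ occurring only at $t=\tfrac12$.

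For uniqueness I would first clear the border. Column $1$ reads $r,1,rt,(1-p)t+p$, all dominated by $1=M_{21}$, so $(2,1)$ is the only candidate there; but player~2's payoff at $(2,1)$ equals $M_{12}=0$, which is not maximal in column $2$, so no profile in the first column — and by symmetry none in the first row — is an equilibrium. The surviving profiles lie in $\{2,3,4\}^2$, and for each I would produce one strictly profitable deviation: $M_{42}-M_{22}=(1-p)(1-t)>0$ eliminates $(2,2)$; $M_{13}-M_{33}=(1-t)((p+r-1)t+(1-p))>0$ eliminates $(3,3)$; $M_{42}-M_{32}=(1-r)+(p+r-1)t>0$ eliminates $(2,3)$ and $(3,2)$; and the already computed $M_{44}-M_{24}>0$ shows that at $(4,2)$, hence by symmetry at $(2,4)$, player~2 gains by moving to column $4$.

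The \emph{main obstacle} is the pair $(3,4),(4,3)$. The natural argument uses $M_{44}-M_{34}=(2t-1)((p+r-1)t+(1-r))$, which is strictly positive on $(\tfrac12,1)$ — letting player~2 improve from $(4,3)$ to $(4,4)$ — but vanishes at the endpoint $t=\tfrac12$, so that value needs separate treatment. There I would switch to a player~1 deviation: a direct evaluation gives $M_{13}-M_{43}=\tfrac14(1+r-p)>0$ at $t=\tfrac12$, so row $1$ beats row $4$ in column $3$ and $(4,3)$ still fails. The only genuinely delicate recurring point is the sign of the linear brackets $(p+r-1)t+c$ with $c\in\{1-p,1-r\}$, since $p+r-1$ may be of either sign; each is settled by evaluating at the extreme endpoint of the admissible $t$-range and invoking $r>\tfrac12$. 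Assembling these facts shows $(4,4)$ is the unique pure Nash equilibrium.
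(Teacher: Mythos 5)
Your proof is correct and follows essentially the same route as the paper's: both work with the single matrix $E_1^1$ (using symmetry of the game), verify $(4,4)$ via the column-4 comparisons, and eliminate every other profile by exhibiting a profitable deviation within a single column; your factorizations, e.g. $M_{44}-M_{34}=(2t-1)\bigl((p+r-1)t+(1-r)\bigr)$, coincide with the paper's equivalent forms. In fact your write-up is tighter than the paper's in two places: the paper never addresses the profile $(2,2)$ at all, whereas you eliminate it with $M_{42}-M_{22}=(1-p)(1-t)>0$; and for $(3,4)$, $(4,3)$ the paper only cites the inequalities $M_{43}\ge M_{33}$ and $M_{34}\le M_{44}$ on $\left[\tfrac12,1\right)$, which become equalities at $t=\tfrac12$ and hence exhibit no strictly improving deviation there, while your separate treatment of $t=\tfrac12$ via $M_{13}-M_{43}=\tfrac14(1+r-p)>0$ closes exactly this gap.
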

\begin{proof}
Let us first prove that the following inequalities hold:
\begin{gather}
 (p-1)t^2+t+r(1-t)^2\geq p-pt \text{ for every } \tfrac{1}{2}\leq t<1,\label{e1_44a}\\ 
 (p-1)t^2+t+r(1-t)^2\geq pt\text{ for every } 0\leq t<1,\label{e1_44b}\\
 (p-1)t^2+t+r(1-t)^2 \geq (p+r)(t-t^2)+(1-t)^2 \text{ for every } \tfrac{1}{2}\leq t<1.\label{e1_44c}
\end{gather}
The inequality \eqref{e1_44b} is equivalent to $(t-1)\left((p+r-1)t-r\right)\geq 0$. Clearly
\begin{equation}
 (p+r-1)t-r<rt-r=r(t-1)<0,
\end{equation}
hence that \eqref{e1_44b}  holds for every  $t\in(0,1)$. 
From \eqref{r-1a}  it follows that \eqref{e1_44c}, or equivalently \begin{equation}
(1-2t)\left((p+r-1)t+r-1\right)\geq 0\end{equation} holds for every $t\in \left[\tfrac{1}{2},1\right)$. 
Since $p(1-t)\leq pt$ for every $t\in\left[\tfrac{1}{2},1\right)$,   the solution of \eqref{e1_44a} is $\left[\tfrac{1}{2},1\right)$. 
From \eqref{e1_44a}-\eqref{e1_44c} we conclude that a strategy profile  $(4,4)$ is a NE for every $t\in\left[\tfrac{1}{2},1\right)$.

It is clear that the following inequalities hold for every $t\in(0,1)$:
\begin{equation}
 \begin{gathered}
  (1-p)t+p< 1,\\
  rt< 1.
 \end{gathered}
\end{equation}
Hence neither $(1,j)$ nor $(i,1)$, $i,j\in\{1,2,3,4\}$, is a NE. 
From \eqref{e1_44b} we see that strategy profiles $(4,2)$ and $(2,4)$ are not NE. 
From \eqref{r-1a}  it follows that
\begin{equation}
 (p+r-1)t-r+1\geq 0
\end{equation}
holds for every $t\in(0,1)$. Hence strategy profiles $(3,2)$, $(2,3)$ are not a NE. 
From \eqref{d2_43a} and \eqref{d2_43b} we conclude that neither  $(3,4)$ nor $(4,3)$  is a NE.

Since
\begin{equation}\label{e1_33}
 (r-1)t^2+t+p(1-t)^2\leq (r-1)t+1
\end{equation}
for every $t\in(0,1)$, it follows that a strategy profile $(3,3)$ is not a NE. Indeed, \eqref{e1_33} is equivalent to
$(t-1)\left((p+r-1)t+1-p\right)\leq 0.$
 It follows easily that
 \begin{equation}
  (p+r-1)t+1-p>(p-1)t-(p-1)=(p-1)(t-1)>0,
 \end{equation}
hence $t\in(0,1)$.
\end{proof}
\begin{example}
 Consider the PD  \eqref{pdgame}.
Then
\begin{equation}
 E_1=\begin{pmatrix}
     (3,3) & (0,{5}) & (5-2t,3t)& (1-t,{4t+1})\\
         ({5},0) & ({1},{1})& (2t+3,3-3t) & (t,{5-4t})\\
         (3t,5-2t) & (3-3t,2t+3)& (1-t^2+3t, 1-t^2+3t)&(t^2-6t+5,t^2+4t)\\
         ({4t+1},1-t) & ({5-4t}, t) &(t^2+4t, t^2-6t+5) & (3-t^2-t,3-t^2-t)
    \end{pmatrix}.
\end{equation}
There is exactly one pure NE at a pair of strategies $(4,4)$. The maximum payoff for both players is reached at $t = 1/2$, is the same and is equal to $2\frac{1}{4}$. 
\end{example}
Let $E_2=\left(E_2^1,\left(E_2^1\right)^T\right)$, where
\begingroup 
\setlength\arraycolsep{3pt}
\renewcommand\arraystretch{1}
\medmuskip = 0.6mu
\begin{equation}
 E_2^1=\begin{pmatrix}
      r & 0 &pt & (1-r)t+r\\
      1 & p& p-pt & (r-1)t+1\\
     (p-1)t+1 & (1-p)t+p& (r-1)t^2+t+p(1-t)^2&(p+r)(t-t^2)+(1-t)^2\\
     r-rt & rt & (1-p-r)t^2+(p+r)t&(p-1)t^2+t+r(1-t)^2
     \end{pmatrix}.
\end{equation}
\endgroup
\begin{proposition}
Let $0<t \leq \frac{1}{2}$. Then a strategy profile $(3,3)$ is a unique Nash equilibrium of $E_2$ game.
\end{proposition}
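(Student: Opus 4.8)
The plan is to exploit the symmetry of $E_2=\bigl(E_2^1,(E_2^1)^T\bigr)$ (the Corollary to Proposition~\ref{propsymewl}): since $(i,j)$ is a NE iff $(j,i)$ is, I would only verify the row player's best-response conditions and mirror them for the column player. To show $(3,3)$ is a NE I would compare the $(3,3)$ entry of $E_2^1$ with the other entries of its third column. The comparison against $(4,3)$ is exactly inequality \eqref{d2_33a}, already established on $\left(0,\tfrac12\right]$. The comparison against $(2,3)=p-pt$ reduces, after factoring, to $t\bigl[(p+r-1)t+(1-p)\bigr]\ge 0$; since $1-p>1-r$ the bracket exceeds $(p+r-1)t-(r-1)$, which is positive by \eqref{r-1a}. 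The comparison against $(1,3)=pt$ then comes for free, because $t\le\tfrac12$ gives $pt\le p(1-t)$, so the $(1,3)$ entry is dominated by the $(2,3)$ entry I just handled. Symmetry closes the column player's side, so $(3,3)$ is a NE on the whole range.

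For uniqueness I would rule out the other profiles by producing a profitable unilateral deviation in each. The first row and column go by the usual argument: in column $1$ the row player's unique best reply is row $2$ (value $1$, which beats $r$, $1-(1-p)t$ and $r(1-t)$), but at $(2,1)$ the column player earns $0$ and strictly improves by moving to column $2$; symmetry disposes of the first row. For $(2,2)$ the row player deviates to row $3$ because $(1-p)t+p>p$. For $(3,2)$ the column player deviates to column $3$ using the strict form of the $(3,3)$-versus-$(2,3)$ inequality above. For $(4,2)$ the row player again deviates to row $3$, which amounts to $p>(p+r-1)t$, immediate from $r<1+p$ via \eqref{k3_4}. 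Symmetry then also eliminates $(2,3)$, $(2,4)$ and $(4,2)$.

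I expect the genuine obstacle to be the bottom-right $2\times2$ block $\{(3,3),(3,4),(4,3),(4,4)\}$, because at the endpoint $t=\tfrac12$ all four of these entries collapse to the common value $(1+p+r)/4$, so the row player's strict preference for row $3$ degenerates into a tie and cannot on its own eliminate $(4,3)$, $(3,4)$ and $(4,4)$. On the open interval $t\in\left(0,\tfrac12\right)$ the argument is clean: \eqref{d2_33a} is strict, so $(3,3)>(4,3)$ forces the row player out of $(4,3)$, and out of $(4,4)$ as well via the identity $(4,4)-(3,4)=(2t-1)\bigl((p+r-1)t+1-r\bigr)<0$ (the same factorization as \eqref{d2_43b}, using \eqref{r-1a}). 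At $t=\tfrac12$ I would instead break the tie with a first-row/first-column entry: the deviating player switches to column $2$ (respectively row $2$), whose payoff $(r-1)t+1=(r+1)/2$ strictly exceeds the block value $(1+p+r)/4$ precisely because $r+1>p$. The same estimate confirms that $(3,3)$ persists at $t=\tfrac12$, since its column maximum $(1+p+r)/4$ still beats $(1,3)=(2,3)=p/2$. Assembling these cases yields that $(3,3)$ is the unique pure NE for every $t\in\left(0,\tfrac12\right]$.
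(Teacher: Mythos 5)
Your proof is correct, and its skeleton is the same as the paper's: the three column-3 comparisons you make are exactly the paper's inequalities (your factorization $t\bigl[(p+r-1)t+1-p\bigr]\ge 0$ is \eqref{e2_33a}, your row-4 comparison is \eqref{d2_33a}, and the domination $pt\le p(1-t)$ for $t\le\tfrac12$ is also how the paper dispatches row 1), and your eliminations of the first row/column, $(2,2)$, $(3,2)/(2,3)$ and $(4,2)/(2,4)$ use the same deviations the paper uses. Where you genuinely depart from the paper is the bottom-right block, and your version is the more careful one. The paper rules out $(4,3)$ and $(3,4)$ by citing \eqref{d2_43a} and \eqref{d2_43b}, but at $t=\tfrac12$ both of those hold with equality, so they produce no strict improvement within rows/columns 3--4; the paper's argument is thus silent precisely at the endpoint included in the proposition's hypothesis $0<t\le\tfrac12$. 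You identified this degeneracy explicitly (all four block entries collapse to $(1+p+r)/4$ at $t=\tfrac12$) and resolved it by a deviation to row/column 2, whose payoff $(r-1)t+1=(r+1)/2$ strictly exceeds $(1+p+r)/4$ because $1+r>p$ --- this is exactly the estimate needed to close the paper's small gap, and it is also the mechanism behind the paper's own elimination of $(4,4)$ via \eqref{e2_44}, which works for all $t\in(0,1)$. Your alternative elimination of $(4,4)$ by deviating to row 3, using $(2t-1)\bigl[(p+r-1)t+1-r\bigr]<0$, is valid but only on $t\in\bigl(0,\tfrac12\bigr)$, so your $t=\tfrac12$ tie-breaking is genuinely needed there as well, and you supply it. In short: same decomposition and same key inequalities, but with an endpoint analysis that the published proof implicitly relies on without spelling out.
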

\begin{proof}
Let us first prove that the following inequalities hold:
\begin{gather}
 (r-1)t^2+t+p(1-t)^2\geq p-pt \text{ for every } 0< t<1,\label{e2_33a}\\
 (r-1)t^2+t+p(1-t)^2\geq pt\text{ for every } 0<t\leq \tfrac{1}{2},\label{e2_33b}\\
 (r-1)t^2+t+p(1-t)^2 \geq (1-p-r)t^2+(p+r)t \text{ for every } 0<t\leq \tfrac{1}{2}.\label{e2_33c}
\end{gather}
The inequality \eqref{e2_33a} is equivalent to  $t\left((p+r-1)t+1-p\right)\geq 0.$ Since $p+r-1>r-1$  and $p-1<r-1$,
\begin{equation}
 (p+r-1)t-(p-1)>(p-1)t-(p-1)=(p-1)(t-1)>0.
\end{equation}
Hence $t\in(0,1)$. Consider \eqref{e2_33c}, or equivalently  $(2t-1)\left((p+r-1)t-p\right)\geq 0$.
From  \eqref{p+r-1} we have
\begin{equation}\label{e2_33d}
 (p+r-1)t-p<pt-p=p(t-1)<0.
\end{equation}
It follows that $t\in\left(0,\tfrac{1}{2}\right]$. Note that $pt\leq p(1-t)$ where $t\in\left(0,\tfrac{1}{2}\right]$. Hence
\begin{equation}
 (r-1)t^2+t+p(1-t)^2\geq p-pt\geq pt \text{ for every } \tfrac{1}{2}\leq t<1.
\end{equation}
From \eqref{e2_33a}-\eqref{e2_33c} we conclude that a strategy profile $(3,3)$ is a NE for every $t\leq \tfrac{1}{2}$.

It is easily seen that the following inequalities hold for every $t\in(0,1)$:
\begin{equation}
 \begin{gathered}
  (p-1)t+1< 1\\
  r-rt< 1.
 \end{gathered}
\end{equation}
Hence neither $(1,j)$ nor $(i,1)$, $i,j\in\{1,2,3,4\}$, is a NE.
Since $p\leq (1-p)t+p$ for every $t\in(0,1)$, a strategy profile $(2,2)$ is not a NE.
From \eqref{e2_33a} neither $(3,2)$ nor $(2,3)$ is a NE.
Clearly, from  \eqref{e2_33d} it follows that $(4,2)$ and $(2,4)$ are not NE \eqref{e2_33d}. Moreover, from \eqref{d2_43a}, \eqref{d2_43b} we conclude that  strategy profiles $(4,3)$, $(3,4)$ are not NE.
The following inequality holds for every $t\in(0,1)$:
\begin{equation}\label{e2_44}
 (p-1)t^2 +t +r(1-t)^2 \leq (r-1)t+1
\end{equation}
Indeed, from  \eqref{d2_44a} and $(p-1)t+1<(r-1)t+1$, it follows that \eqref{e2_44} holds. Hence a strategy profile $(4,4)$ is not a NE.
\end{proof}
\begin{example}
 Consider the PD  \eqref{pdgame}.
Then
\begin{equation}
 E_2=\begin{pmatrix}
     (3,3) & (0,{5}) & (t,{5-4t})& (2t+3,3-3t)\\
         ({5},0) & ({1},{1})& (1-t,{4t+1}) & (5t-2,3t)\\
         (5-4t,t) & (4t+1,1-t)& (1-t^2+3t, 1-t^2+3t)&(t^2-6t+5,t^2+4t)\\
         (3-3t,2t+3) & (3t,5-2t) &(t^2+4t, t^2-6t+5) & (3-t^2-t,3-t^2-t)
    \end{pmatrix}.
\end{equation}
There is exactly one pure NE at a pair of strategies $(3,3)$. 
Here again the maximum payoff is the same for both players, is reached at $t = 1/2$ and is equal to $2\frac{1}{4}$.
\end{example}
\section{Conclusions}
The nature of quantum games theory is highly complex as it combines several fields of 
science, namely
physics, computer science, mathematics, and 
economics. This results into a high  entry 
threshold for researchers who want to dive 
into this area. Yet, parallel to the 
developments in the field of emerging 
technologies, in particular quantum computing, 
advances in understanding of threats and 
profits of them are gathering significant 
importance. What is most commonly understood 
by enterprises are the strategic implications 
of quantum computing with respect to the 
impact on security. On top of that, there are 
also profits foreseen regarding the speed up 
of calculations. There are research 
institutions in the world that support 
enterprises with a roadmap that outlines 
milestones for information security teams 
when dealing with pending quantum threat. 

On the other hand, quantum key distribution promises
ultra-secure key distribution and via pilot projects 
in, for instance, critical systems  (e.g. electrical 
gird operations), are tested to provide basis for 
user data encryption.

With background of quantum advances recognised
rather as
posing a threat than bringing new chances we 
propose a quantum games theory approach where 
combining classical and quantum strategies 
opens a set of new possibilities for players 
to achieve their goals. The ultimate goal of applying 
quantum strategy is to improve the expected payoff of
individuals.

The objective of our research was to examine the quantum extensions of the general form of the Prisoner's Dilemma game, derived by incorporating two unitary strategies into the classical game \cite{frackiewicz_permissible_2024}. In this study, we identified all potential combinations of quantum strategies that result in Nash equilibria in pure strategies. These equilibria are observed in all possible extension classes with the exception of $D_1$ and $D_2$ classes. The conditions for the existence of the aforementioned equilibria are, in general, highly complex and include a number of relations between the payoffs of the classical game and certain parameters ($\theta_i$ and $\alpha_i$) of the unitary strategies. Furthermore, we investigated the value of equal payoffs of NE for extensions of the standard PD defined by equation (\ref{pdgame}). Our findings demonstrate that these payoffs obtain a maximal value of $5/2$, thereby approaching Pareto-optimal solutions more closely than the classical Nash equilibrium of the PD, which is equal to $1$. However, it proved impossible to obtain Pareto-optimal values, which in this case are equal to $3$. 

The findings can be employed as a basis for further investigation of NE, which can also be expressed in the form of mixed strategies. An intriguing avenue for research would be to ascertain whether such NE can be more closely aligned with Pareto-optimal solutions than the values obtained through the use of pure strategies.

\bibliographystyle{qipstyle}
\bibliography{references-3}

\begin{thebibliography}{10}
\providecommand{\url}[1]{\texttt{#1}}
\providecommand{\urlprefix}{URL }
\providecommand{\doi}[1]{https://doi.org/#1}

\bibitem{eisert_quantum_1999}
Eisert, J., Wilkens, M., Lewenstein, M.: Quantum {Games} and {Quantum}
  {Strategies}. Physical Review Letters  \textbf{83}(15),  3077--3080 (Oct
  1999). \doi{10.1103/PhysRevLett.83.3077},
  \url{https://link.aps.org/doi/10.1103/PhysRevLett.83.3077}, publisher:
  American Physical Society

\bibitem{flitney_introduction_2002}
Flitney, A.P., Abbott, D.: An introduction to quantum game theory. Fluctuation
  and Noise Letters  \textbf{02}(04),  R175--R187 (Dec 2002).
  \doi{10.1142/S0219477502000981},
  \url{https://www.worldscientific.com/doi/abs/10.1142/S0219477502000981},
  publisher: World Scientific Publishing Co.

\bibitem{meyer_quantum_1999}
Meyer, D.A.: Quantum {Strategies}. Physical Review Letters  \textbf{82}(5),
  1052--1055 (Feb 1999). \doi{10.1103/PhysRevLett.82.1052},
  \url{https://link.aps.org/doi/10.1103/PhysRevLett.82.1052}, publisher:
  American Physical Society

\bibitem{li_continuous-variable_2002}
Li, H., Du, J., Massar, S.: Continuous-{Variable} {Quantum} {Games}. Phys.
  Lett. A  \textbf{306}, ~73 (Dec 2002),
  \url{https://cds.cern.ch/record/598223}, number: quant-ph/0212122

\bibitem{frackiewicz_quantum_2021}
Frąckiewicz, P., Rycerz, K., Szopa, M.: Quantum absentminded driver problem
  revisited. Quantum Information Processing  \textbf{21}(1), ~34 (Dec 2021).
  \doi{10.1007/s11128-021-03377-6},
  \url{https://doi.org/10.1007/s11128-021-03377-6}

\bibitem{fadaki_quantum_2022}
Fadaki, M., Abbasi, B., Chhetri, P.: Quantum game approach for capacity
  allocation decisions under strategic reasoning. Computational Management
  Science  \textbf{19}(3),  491--512 (Jul 2022).
  \doi{10.1007/s10287-022-00424-0},
  \url{https://doi.org/10.1007/s10287-022-00424-0}

\bibitem{iqbal_evolutionarily_2001}
Iqbal, A., Toor, A.: Evolutionarily stable strategies in quantum games. Physics
  Letters A  \textbf{280}(5-6),  249--256 (Mar 2001).
  \doi{10.1016/S0375-9601(01)00082-2},
  \url{https://linkinghub.elsevier.com/retrieve/pii/S0375960101000822}, number:
  5-6

\bibitem{aoki_repeated_2020}
Aoki, S., Ikeda, K.: Repeated {Quantum} {Games} and {Strategic} {Efficiency}.
  arXiv:2005.05588 [quant-ph]  (May 2020),
  \url{http://arxiv.org/abs/2005.05588}, arXiv: 2005.05588

\bibitem{chen_quantum_2003}
Chen, L., Ang, H., Kiang, D., Kwek, L., Lo, C.: Quantum prisoner dilemma under
  decoherence. Physics Letters A  \textbf{316}(5),  317--323 (Sep 2003).
  \doi{10.1016/S0375-9601(03)01175-7},
  \url{https://linkinghub.elsevier.com/retrieve/pii/S0375960103011757}, number:
  5

\bibitem{benjamin_thermodynamic_2020}
Benjamin, C., Dash, A.: Thermodynamic susceptibility as a measure of
  cooperative behavior in social dilemmas. Chaos: An Interdisciplinary Journal
  of Nonlinear Science  \textbf{30}(9),  093117 (Sep 2020).
  \doi{10.1063/5.0015655},
  \url{http://aip.scitation.org/doi/10.1063/5.0015655}, number: 9

\bibitem{altintas_prisoners_2022}
Altintas, A.A., Ozaydin, F., Bayindir, C., Bayrakci, V.: Prisoners’ {Dilemma}
  in a {Spatially} {Separated} {System} {Based} on {Spin}–{Photon}
  {Interactions}. Photonics  \textbf{9}(9), ~617 (Aug 2022).
  \doi{10.3390/photonics9090617}, \url{https://www.mdpi.com/2304-6732/9/9/617},
  number: 9

\bibitem{iqbal_equivalence_2018}
Iqbal, A., Chappell, J.M., Abbott, D.: The equivalence of {Bell}'s inequality
  and the {Nash} inequality in a quantum game-theoretic setting. Physics
  Letters A  \textbf{382}(40),  2908--2913 (Oct 2018).
  \doi{10.1016/j.physleta.2018.08.011},
  \url{http://www.sciencedirect.com/science/article/pii/S0375960118308703}

\bibitem{flitney_nash_2007}
Flitney, A.P., Hollenberg, L.C.L.: Nash equilibria in quantum games with
  generalized two-parameter strategies. Physics Letters A  \textbf{363}(5),
  381--388 (Apr 2007). \doi{10.1016/j.physleta.2006.11.044},
  \url{http://www.sciencedirect.com/science/article/pii/S0375960106018184}

\bibitem{frackiewicz_remarks_2016}
Frąckiewicz, P.: Remarks on quantum duopoly schemes. Quantum Information
  Processing  \textbf{15}(1),  121--136 (Jan 2016).
  \doi{10.1007/s11128-015-1163-1},
  \url{https://doi.org/10.1007/s11128-015-1163-1}

\bibitem{szopa_efficiency_2021}
Szopa, M.: Efficiency of {Classical} and {Quantum} {Games} {Equilibria}.
  Entropy  \textbf{23}(5), ~506 (May 2021). \doi{10.3390/e23050506},
  \url{https://www.mdpi.com/1099-4300/23/5/506}, number: 5 Publisher:
  Multidisciplinary Digital Publishing Institute

\bibitem{landsburg_nash_2011}
Landsburg, S.: Nash equilibria in quantum games. Proceedings of the American
  Mathematical Society  \textbf{139}(12),  4423--4434 (2011).
  \doi{10.1090/S0002-9939-2011-10838-4},
  \url{https://www.ams.org/proc/2011-139-12/S0002-9939-2011-10838-4/}

\bibitem{bolonek-lason_mixed_2017}
Bolonek-Lasoń, K., Kosiński, P.: Mixed {Nash} equilibria in
  {Eisert}-{Lewenstein}-{Wilkens} ({ELW}) games. Journal of Physics: Conference
  Series  \textbf{804},  012007 (Jan 2017).
  \doi{10.1088/1742-6596/804/1/012007},
  \url{https://doi.org/10.1088/1742-6596/804/1/012007}, publisher: IOP
  Publishing

\bibitem{nash_equilibrium_1950}
Nash, J.F.: Equilibrium points in n-person games. Proceedings of the National
  Academy of Sciences  \textbf{36}(1),  48--49 (Jan 1950).
  \doi{10.1073/pnas.36.1.48},
  \url{https://www.pnas.org/doi/10.1073/pnas.36.1.48}, publisher: Proceedings
  of the National Academy of Sciences

\bibitem{axelrod_evolution_2006}
Axelrod, R., Dawkins, R.: The {Evolution} of {Cooperation}: {Revised}
  {Edition}. Basic Books, New York, revised edition edn. (Dec 2006)

\bibitem{li_entanglement_2015}
Li, A., Yong, X.: Entanglement {Guarantees} {Emergence} of {Cooperation} in
  {Quantum} {Prisoner}'s {Dilemma} {Games} on {Networks}. Scientific Reports
  \textbf{4}(1), ~6286 (May 2015). \doi{10.1038/srep06286},
  \url{http://www.nature.com/articles/srep06286}, number: 1

\bibitem{szopa_how_2014}
Szopa, M.: How {Quantum} {Prisoner}’s {Dilemma} {Can} {Support}
  {Negotiations}. Optimum. Studia Ekonomiczne  \textbf{5}(71),  90--102 (2014).
  \doi{10.15290/ose.2014.05.71.07},
  \url{https://repozytorium.uwb.edu.pl/jspui/handle/11320/2897}, accepted:
  2015-05-28T09:10:34Z Publisher: Wydawnictwo Uniwersytetu w Białymstoku

\bibitem{frackiewicz_permissible_2024-1}
Frąckiewicz, P., Szopa, M.: Permissible extensions of classical to quantum
  games combining three strategies. Quantum Information Processing
  \textbf{23}(3), ~75 (Feb 2024). \doi{10.1007/s11128-024-04283-3},
  \url{https://doi.org/10.1007/s11128-024-04283-3}

\bibitem{frackiewicz_permissible_2024}
Frąckiewicz, P., Gorczyca-Goraj, A., Szopa, M.: Permissible four-strategy
  quantum extensions of classical games (May 2024).
  \doi{10.48550/arXiv.2405.07380}, \url{http://arxiv.org/abs/2405.07380},
  arXiv:2405.07380 [quant-ph]

\bibitem{huang_fight_2024}
Huang, D.x., Tan, B.q.: "{Fight} alone" to "win–win cooperation": a quantum
  stag hunt game model for analyzing cooperative {R}\&{D} between enterprises.
  Quantum Information Processing  \textbf{23}(8), ~292 (Jul 2024).
  \doi{10.1007/s11128-024-04483-x},
  \url{https://doi.org/10.1007/s11128-024-04483-x}

\bibitem{maschler_game_2020}
Maschler, M., Solan, E., Zamir, S.: Game theory. Cambridge University Press,
  Cambridge, UK, (2020), oCLC: 1180191769

\bibitem{van_damme_stability_1987}
Van~Damme, E.: Stability and {Perfection} of {Nash} {Equilibria}. Springer,
  Berlin, Heidelberg (1987). \doi{10.1007/978-3-642-96978-2},
  \url{http://link.springer.com/10.1007/978-3-642-96978-2}

\bibitem{nash_non-cooperative_1951}
Nash, J.: Non-{Cooperative} {Games}. Annals of Mathematics  \textbf{54}(2),
  286--295 (1951). \doi{10.2307/1969529},
  \url{https://www.jstor.org/stable/1969529}, publisher: Annals of Mathematics

\bibitem{osborne_course_1994}
Osborne, M.J., Rubinstein, A.: A {Course} in {Game} {Theory}. Cambridge, Mass
  (1994)

\end{thebibliography}

\end{document}